\documentclass[a4paper,UKenglish,cleveref, autoref, thm-restate]{lipics-v2021}

\pdfoutput=1 %

\usepackage{tikzit} 
\usepackage{tikz-cd} 

\tikzstyle{morphism}=[fill=white, draw=black, shape=rectangle]
\tikzstyle{medium box}=[fill=white, draw=black, shape=rectangle, minimum width=0.8cm]
\tikzstyle{medium large morphism}=[fill=white, draw=black, shape=rectangle, minimum width=1.2cm]
\tikzstyle{large morphism}=[fill=white, draw=black, shape=rectangle, minimum width=1.7cm]
\tikzstyle{bn}=[fill=black, draw=black, shape=circle, inner sep=1.5pt]
\tikzstyle{state}=[fill=white, draw=black, regular polygon, regular polygon sides=3, minimum width=0.8cm, shape border rotate=180, inner sep=0pt]
\tikzstyle{long state}=[fill=white, draw=black, shape=isosceles triangle, isosceles triangle apex angle=90, shape border rotate=270]
\tikzstyle{medium state}=[fill=white, draw=black, regular polygon, regular polygon sides=3, minimum width=1.3cm, inner sep=0pt, shape border rotate=180]
\tikzstyle{large state}=[fill=white, draw=black, regular polygon, regular polygon sides=3, minimum width=2.2cm, shape border rotate=180, inner sep=0pt]
\tikzstyle{wn}=[fill=white, draw=black, shape=circle, inner sep=1.5pt]
\tikzstyle{likelihood}=[fill=white, draw=black, regular polygon, regular polygon sides=3, minimum width=0.8cm, shape border rotate=0, inner sep=0pt]

\tikzstyle{arrow}=[->]
\tikzstyle{dashed box}=[-, dashed]
\tikzstyle{new edge style 0}=[-, fill={rgb,255: red,148; green,162; blue,255}, draw=none]
\input{comonoids.tikzdefs}

{
\theoremstyle{definition}
\newtheorem{algorithm}[theorem]{Algorithm}
}

\title{A categorical account of the Metropolis--Hastings algorithm} %

\author{Rob {Cornish}}{College of Computing and Data Science, Nanyang Technological University, Singapore \and \url{https://jrmcornish.github.io/} }{rob.cornish@ntu.edu.sg}{https://orcid.org/0009-0003-5947-3400}{Supported by the National Research Foundation, Singapore, under a National Research Foundation Fellowship in Artificial Intelligence (Award No.: NRFFIAI1-2024-0014).}%

\author{Andi Q. Wang}{Department of Statistics, University of Warwick, United Kingdom\and\url{https://warwick.ac.uk/fac/sci/statistics/staff/academic-research/wang/}} {andi.wang@warwick.ac.uk}{https://orcid.org/0000-0001-9551-6592}{Supported by a Discipline Hopping Award from the Prob\_AI Hub (EP/Y028783/1).}

\authorrunning{R. Cornish and A.\,Q. Wang} %

\Copyright{Rob Cornish and Andi Q. Wang} %

\ccsdesc[500]{Mathematics of computing~Bayesian computation}
\ccsdesc[500]{Theory of computation~Categorical semantics}

\keywords{Categorical probability, Metropolis--Hastings, Markov category, CD category} %

\category{} %

\relatedversion{} %

\acknowledgements{We are very grateful to Sam Power for many useful discussions throughout the course of the project.
	A big thanks to Sam Staton, Paolo Perrone, Tobias Fritz, Pedro Amorim, and Luke Ong for their support and helpful suggestions. Many thanks also to the three anonymous referees for their constructive comments which improved the manuscript.}%

\nolinenumbers %

\EventEditors{Claudia Faggian and Joost-Pieter Katoen}
\EventNoEds{2}
\EventLongTitle{41st Annual Symposium on Logic in Computer Science (LICS 2026)}
\EventShortTitle{LICS 2026}
\EventAcronym{LICS}
\EventYear{2026}
\EventDate{July 20--23, 2026}
\EventLocation{Lisbon, Portugal}
\EventLogo{}
\SeriesVolume{380}
\ArticleNo{32}

\begin{document}

\maketitle

\begin{abstract}
Metropolis--Hastings (MH) is a foundational Markov chain Monte Carlo (MCMC) algorithm.
In this paper, we ask whether it is possible to formulate and analyse MH in terms of categorical probability, using a recent involutive framework for MH-type procedures as a concrete case study.
We show how basic MCMC concepts such as invariance and reversibility can be formulated in Markov categories, and how one part of the MH kernel can be analysed using standard CD categories.
To go further, we then study enrichments of CD categories over commutative monoids.
This gives an expressive setting for reasoning abstractly about a range of important probabilistic concepts, including substochastic kernels, finite and $\sigma$-finite measures, absolute continuity, singular measures, and Lebesgue decompositions.
Using these tools, we give synthetic necessary and sufficient conditions for a general MH-type sampler to be reversible with respect to a given target distribution.
\end{abstract}

\section{Introduction}\label{sec:intro}

\emph{Markov kernels} are a fundamental concept in probability theory.
Recall that a Markov kernel $\P : \tarS \to \sndS$ encodes a family of probability distributions $\P(x, \dif y)$ on the space $\sndS$ indexed by points $x \in \tarS$.
Recent work on \emph{copy-discard (CD) categories} and \emph{Markov categories} \cite{Cho2019,Fritz2020} has sought to develop a synthetic framework for reasoning about systems of Markov kernels and their unnormalised counterparts algebraically.
This approach has been applied successfully to a number of classical results from probability and statistics, yielding streamlined and more conceptual proofs than traditional measure-theoretic arguments \cite{Fritz2021,Rischel2020,Fritz2020,Moss2023}.

In applied settings, Markov kernels are often regarded as ``stochastic functions'' and used to model \emph{conditional probability distributions}, which in turn have a huge number of applications in computational statistics, machine learning, and computer science generally.
It is therefore natural to ask: to what extent are Markov and CD categories ready for reasoning about contemporary methodology in these domains?
Are they usable ``as-is'', or are additional structures or concepts needed to handle the complexities of modern algorithms?

We consider this question in the context of \emph{Markov chain Monte Carlo} (MCMC) methods, and specifically the \emph{Metropolis--Hastings} (MH) algorithm \cite{metropolis1953equation, Hastings1970}, which is a foundational tool for sampling from complex probability distributions.
We focus on a recent line of work which studies MH through the lens of \emph{involutions} \cite{Neklyudov2020,Andrieu2020,Glatt-Holtz2023,Glatt-Holtz2024,cusumano2020automating}, building on earlier foundational work \cite{Tierney1998,Green1995}.
This perspective has already lead to a range of new methodology of interest to computational statistics practitioners \cite{Bou-Rabee2024, Liu2024, Xu2025}.

A major contribution to the involutive approach is the framework of \cite{Andrieu2020}.
This considers a Markov kernel $\Pimh : \augS \to \augS$ defined
\begin{equation} \label{eq:mh-kernel}
	\Pimh(\xi, \dif \xi') \coloneqq \alpha(\xi) \, \delta_{\phi(\xi)}(\dif \xi') + (1 - \alpha(\xi)) \, \delta_{\xi}(\dif \xi'),
\end{equation}
where $\alpha : \augS \to [0, 1]$ is a measurable function regarded as an \emph{acceptance probability}, $\phi : \augS \to \augS$ is a measurable \textit{involution} (i.e.\ $\phi = \phi^{-1}$), and $\delta$ is the Dirac delta.
Intuitively, given an input $\xi$, this kernel transitions to $\phi(\xi)$ with probability $\alpha(\xi)$, and otherwise remains at $\xi$.
By combining $\Pimh$ with a suitable \emph{state-space augmentation} (see Section \ref{sec:augmented-reversibility}), it is shown in \cite{Andrieu2020} that a very wide variety of MH-type algorithms can be recovered as special cases (see also Appendix~\ref{app:MCMC_involutions} for two examples).

The main theoretical result of \cite{Andrieu2020} is their Theorem 3, which gives sufficient conditions under which $\Pimh$ is \emph{reversible} with respect to a given target distribution.
Reversibility in turn constitutes a key correctness guarantee for many MCMC algorithms.
We state a summarised version of this result as follows (see Appendix~\ref{app:measure_theory} for our measure theory terminology and notation): %

\begin{theorem} \label{thm:andrieu2020}
	Suppose $\mu$ is a finite measure on a measurable space $\augS$, and $\phi : \augS \to \augS$ is a measurable involution.
	Then there exists a measurable $S \subseteq \augS$ such that the restriction $\restr{\mu}{S}$ and its pushforward $\restr{\mu}{S}^\phi$ are equivalent as measures, and $\restr{\mu}{S^c}$ and its pushforward $\restr{\mu}{S^c}^\phi$ are singular.
	The kernel $\Pimh$ \eqref{eq:mh-kernel} above is then $\restr{\mu}{S}$-reversible if
	\begin{equation} \label{eq:balancing-ae}
		\alpha(\xi) = \begin{cases}
			\alpha(\phi(\xi)) \, r(\xi) & \text{if $\xi \in S$,} \\
			0 & \text{otherwise}.
		\end{cases}
	\end{equation}
	where $r \coloneqq \frac{\dif \restr{\mu}{S}^\phi}{\dif\restr{\mu}{S}}$ denotes the Radon--Nikodym derivative.
\end{theorem}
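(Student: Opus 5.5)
The argument has two parts: first construct $S$ via a Lebesgue decomposition of $\mu^\phi$ with respect to $\mu$, then verify detailed balance directly from \eqref{eq:mh-kernel}.

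\textbf{Constructing $S$.} Take the Lebesgue decomposition $\mu^\phi = \nu_a + \nu_s$, where $\nu_a \ll \mu$ and $\nu_s \perp \mu$. Let $h \coloneqq \dif\nu_a/\dif\mu$, and let $N$ be a measurable set with $\mu(N) = 0$ and $\nu_s(N^c) = 0$.

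Define $S_0 \coloneqq \{h > 0\} \cap N^c$ and then symmetrise: $S \coloneqq S_0 \cap \phi^{-1}(S_0)$. Since $\phi$ is an involution, $\phi^{-1}(S) = S$. This invariance gives $(\restr{\mu}{S})^\phi = \restr{(\mu^\phi)}{S}$, and on $S$ the latter equals $h\,\restr{\mu}{S}$.

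\emph{Equivalence on $S$.} On $S$ we have $h>0$, so $h\,\restr{\mu}{S}$ and $\restr{\mu}{S}$ are equivalent. Hence $\restr{\mu}{S} \sim \restr{\mu}{S}^\phi$, and $r = h$ holds $\mu$-a.e.\ on $S$.

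\emph{Singularity on $S^c$.} The complement $S^c$ is also $\phi$-invariant, so $\restr{\mu}{S^c}^\phi = \restr{(\mu^\phi)}{S^c}$. One must show this is singular to $\restr{\mu}{S^c}$. The witness set is $A \coloneqq S_0^c$.
\begin{itemize}
\item $\restr{(\mu^\phi)}{S^c}$ is concentrated on $A \cap S^c$ up to the set $S_0\setminus\phi^{-1}(S_0)$. Indeed, on $S_0^c$ the absolutely continuous part vanishes (since $h=0$ there, up to $N$), so $\mu^\phi$ there comes only from $\nu_s$.
\item The remaining set $S_0 \setminus \phi^{-1}(S_0) = S_0 \cap \phi(S_0^c)$ has $\mu^\phi$-measure equal to $\mu(\phi(S_0)\cap S_0^c)$.
\end{itemize}
I expect this bookkeeping to be the main obstacle. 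The cleanest route is to show $\mu(S_0^c \cap \phi(S_0)) = 0$, i.e.\ $\mu$ does not charge $\phi$-images of the region where $\mu^\phi$ has positive density.

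This comes from applying the decomposition symmetrically. For $B \subseteq S_0^c \cap \phi(S_0)$, push forward by $\phi$:
\[
\mu(B) = \mu^\phi(\phi(B)), \qquad \phi(B) \subseteq S_0 .
\]
On $S_0$ we have $\mu^\phi = h\mu$, so $\mu(B) = \int_{\phi(B)} h\,\dif\mu$. Hence $\mu(B)>0$ forces $\mu(\phi(B))>0$.

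Conversely, $\mu(\phi(B)) = \mu^\phi(B)$, and on $S_0^c$ the measure $\mu^\phi$ is ($\nu_s$ plus nothing) singular to $\mu$. Refining $B$ by intersecting with $N$ or $N^c$ then gives a contradiction.

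If this fails to close cleanly, an alternative is to work with the symmetric reference measure $\lambda \coloneqq \mu + \mu^\phi$. Take densities $f = \dif\mu/\dif\lambda$ and note $f\circ\phi = \dif\mu^\phi/\dif\lambda$, using $\lambda^\phi = \lambda$. Then set
\[
S \coloneqq \{f>0\}\cap\{f\circ\phi>0\}.
\]
This set is manifestly $\phi$-invariant. Equivalence on $S$ and singularity on $S^c$ are then immediate, since on $S^c$ one of $f$, $f\circ\phi$ vanishes. This gives $r = (f\circ\phi)/f$ on $S$. I would in fact adopt this $\lambda$-construction as the primary approach.

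\textbf{Reversibility.} Write $\nu \coloneqq \restr{\mu}{S}$. It suffices to check
\[
\nu(\dif\xi)\Pimh(\xi,\dif\xi') = \nu(\dif\xi')\Pimh(\xi',\dif\xi)
\]
on sets $A\times B$.

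\emph{Rejection term.} The term $(1-\alpha)\,\delta_\xi$ is diagonal, hence automatically symmetric.

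\emph{Acceptance term.} This term contributes $\int_{A\cap\phi^{-1}(B)} \alpha\,\dif\nu$, and we need it to equal $\int_{B\cap\phi^{-1}(A)} \alpha\,\dif\nu$. Change variables $\xi\mapsto\phi(\xi)$ in the second integral, using $\nu^\phi = r\,\nu$ on the invariant set $S$:
\[
\int_{B\cap\phi^{-1}(A)} \alpha\,\dif\nu = \int_{A\cap\phi^{-1}(B)} (\alpha\circ\phi)\, r \,\dif\nu .
\]
By \eqref{eq:balancing-ae} this equals the first integral.

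Off $S$ the measure $\nu$ vanishes, so the clause $\alpha=0$ on $S^c$ is only needed for the $\mu$-level statement. In particular it ensures that no mass jumps from $S$ into $S^c$, which holds anyway since $\phi(S)=S$.
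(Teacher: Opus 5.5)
Your proposal is correct. Its reversibility half is essentially the paper's argument written with integrals, but you obtain $S$ by a genuinely different route.

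\textbf{Your construction of $S$.} Your $\lambda$-construction works as stated. Since $\lambda=\mu+\mu^\phi$ is finite and $\phi$-invariant, $f\circ\phi=\mathrm{d}\mu^\phi/\mathrm{d}\lambda$. The set $S=\{f>0\}\cap\{f\circ\phi>0\}$ is exactly $\phi$-invariant, so $(\mu|_S)^\phi=\mu^\phi|_S$. Equivalence on $S$, singularity on $S^c$ (witness $\{f=0\}$) and $r=(f\circ\phi)/f$ on $S$ then follow at once.

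\textbf{Reversibility.} You split the kernel into a diagonal rejection part and an acceptance part handled by $\nu^\phi=r\,\nu$. This mirrors the paper, which writes the kernel as $\alpha\cdot\phi+\alpha^c\cdot\mathrm{id}$ and notes the second summand is always reversible. It treats the first summand via Propositions~\ref{prop:inv_rev} and~\ref{prop:reversibility-up-to-reweighting} (Theorem~\ref{thm:mh-categorical}), and then applies Corollary~\ref{cor:andrieu2020} to $\mu|_S$.

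\textbf{The paper's construction of $S$.} The paper starts from a classical Lebesgue decomposition $\mu=\mu_{\mathrm{ac}}+\mu_{\mathrm{si}}$ of $\mu$ with respect to $\mu^\phi$, and reads it as a meet/$\perp$ statement in a zero-sum-free $\mathsf{CMon}$-enriched category. It then proves synthetically (Theorem~\ref{thm:lebesgue-decomposition-involution}) that $\mu_{\mathrm{ac}}\equiv\phi\circ\mu_{\mathrm{ac}}$ and $\mu_{\mathrm{si}}\perp\phi\circ\mu_{\mathrm{si}}$. This holds essentially because meets are preserved by the isomorphism $\phi$, which interchanges $\mu$ and $\mu^\phi$. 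Finally it translates back via Propositions~\ref{prop:ac-sfkern} and~\ref{prop:singular-measures-meet}. Your construction in fact computes that same decomposition: $\mu|_S=f\,1_{\{f\circ\phi>0\}}\,\lambda=\mu_{\mathrm{ac}}$ and $\mu|_{S^c}=\mu_{\mathrm{si}}$.

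\textbf{What each route buys.} Yours gives an elementary, self-contained proof with an explicit, exactly invariant $S$ and an explicit formula for $r$. The paper's gives a structural explanation that applies verbatim to kernels and other enriched settings. Through finite cancellativity it also yields the converse of the reversibility claim, which the statement does not ask for and your direct check does not give.

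\textbf{Two small corrections to material you set aside.} First, in your first sketch $S_0^c$ is not a valid witness for singularity, because $\mu(S_0^c)$ can be positive. Take $\{a,b\}$ with $\phi$ the swap and $\mu=\delta_a$: then $S_0=\emptyset$, and $S_0^c$ is charged by both measures. Once $\mu(S_0^c\cap\phi(S_0))=0$ is established, $\nu_a(S^c)=0$, so $\mu^\phi|_{S^c}=\nu_s|_{S^c}$ and the correct witness is $N$.

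Second, the clause $\alpha=0$ on $S^c$ does not concern mass moving from $S$ into $S^c$. It makes the kernel \eqref{eq:mh-kernel} the identity on $S^c$, hence trivially $\mu|_{S^c}$-reversible. Since the induced joint law is additive in $\mu=\mu|_S+\mu|_{S^c}$, this upgrades the conclusion to full $\mu$-reversibility.
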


\subsubsection*{Summary of main results}

Given the important implications of Theorem~\ref{thm:andrieu2020} for MH-type procedures, it serves as a natural test case for the applicability of categorical probability.
Our goal in this paper is to obtain this result synthetically using Markov and CD categories.
For this, we find that three levels of structure are needed:
\begin{itemize}
	\item In Section~\ref{sec:Markov-cats}, we use Markov categories to formulate generic aspects of MCMC algorithms, such as invariance, reversibility and state-space augmentation.
	\item In Section~\ref{sec:CD_cats}, we use CD categories to reason about the first summand in \eqref{eq:mh-kernel}: we prove reversibility of this term, and recover part of the condition \eqref{eq:balancing-ae}. %
	\item However, to study the entire kernel $\Pimh$, we need additional structure: for example, we need a way to ``add'' morphisms, and we need a way to reason about decompositions of morphisms into disjoint ``parts''.
	In Section~\ref{sec:CMon}, we therefore consider an \emph{enriched} extension of CD categories.
	This allows us to provide a synthetic generalisation of Theorem~\ref{thm:andrieu2020} in the form of Theorems \ref{thm:enrich_rev} and \ref{thm:lebesgue-decomposition-involution} below.
\end{itemize}
The appendix contains some additional background and proofs, as well as two additional examples of MCMC-related techniques (the systematic-scan Gibbs sampler, and importance sampling) within categorical probability, which may be of interest to practitioners. 

We believe these results are interesting in their own right, as they involve arguments that are both more conceptual and more general than their measure-theoretic counterparts.
We also hope our work illustrates how categorical tools may be used to analyse contemporary methodology in computational statistics and machine learning more generally. %

\subsubsection*{Technical contributions}

In order to consider the full MH kernel $\Pimh$, we study CD categories \emph{enriched} over commutative monoids, which provides a minimal structure for reasoning about addition of morphisms.
This enrichment is expressive, and permits formulating and proving results about abstract versions of the following important concepts from measure theory:
\begin{itemize}
	\item \emph{Substochastic kernels} and \emph{probabilities} (i.e.\ values in $[0, 1]$).
	\item \emph{Finite} and \emph{$\sigma$-finite} kernels.
	\item Pointwise \emph{absolute continuity} between kernels. 
	\item \emph{Singular measures} and \emph{Lebesgue decompositions}.
\end{itemize}
The enrichment combines well with string diagrams, allowing for straightforward graphical reasoning.
It also integrates naturally with existing categorical approaches to probabilistic choice: when suitable coproducts exist, the subcategory of normalised morphisms carries the structure of a \emph{distributive Markov category} \cite{ackerman2024probabilistic}, relating our ``algebraic'' treatment of convex combinations to the ``sampling'' perspective.

Importantly, we show that these tools are not merely descriptive, but may be used to obtain interesting results about probabilistic concepts.
In the context of Theorem \ref{thm:andrieu2020}, this enrichment allows us to reason about decompositions of $\mu$ into parts supported on $S$ and $S^c$, and to obtain the condition \eqref{eq:balancing-ae} via purely algebraic arguments.
It also supplies a converse to Theorem~\ref{thm:andrieu2020}, which was not given in \cite{Andrieu2020}, and permits a straightforward extension to the historically more complex \emph{skew-reversible} setting (Corollary~\ref{cor:skew}).
We believe these results suggest this enrichment may be useful in applications beyond MCMC also.

\section{MCMC foundations in Markov categories} \label{sec:Markov-cats}

\subsection{Recap: Markov categories}

We first recall the fundamentals of \emph{Markov categories}, which will allow us to formulate key Markov chain concepts such as reversibility and invariance.
We will assume familiarity with basic definitions from category theory and monoidal categories, as well as string diagrams \cite{Joyal1991, selinger2010survey}.
For readers unfamiliar with this material, we give some further details and references in Appendix~\ref{app:intro_to_category_theory}.

\begin{definition} \label{def:markov_cat}
	A \emph{Markov category} is a symmetric monoidal category $(\C, \otimes, \ind)$ where each object $\tarS$ in $\C$ is equipped with a commutative comonoid structure $(\cop_\tarS, \del_\tarS)$. In string diagrams, we denote $\cop_\tarS$ and $\del_\tarS$ respectively as
	\begin{cdiag*}
		\begin{tikzpicture}
	\begin{pgfonlayer}{nodelayer}
		\node [style=bn] (3) at (3.5, 0.5) {};
		\node [style=none] (4) at (3.5, -0.5) {};
		\node [style=none] (12) at (-4, 0) {};
		\node [style=none] (13) at (-4, -1) {};
		\node [style=none] (14) at (-5, 1) {};
		\node [style=none] (15) at (-3, 1) {};
		\node [style=bn] (16) at (-4, 0) {};
		\node [style=none] (24) at (3.5, -1) {$\tarS$};
		\node [style=none] (25) at (-4, -1.5) {$\tarS$};
		\node [style=none] (26) at (-5, 1.5) {$\tarS$};
		\node [style=none] (27) at (-3, 1.5) {$\tarS$};
		\node [style=none] (28) at (0, 0) {$\text{and}$};
	\end{pgfonlayer}
	\begin{pgfonlayer}{edgelayer}
		\draw (3) to (4.center);
		\draw (12.center) to (13.center);
		\draw [bend right=45] (12.center) to (15.center);
		\draw [bend right=45] (14.center) to (12.center);
	\end{pgfonlayer}
\end{tikzpicture}
	\end{cdiag*}
	These morphisms satisfy certain compatibility conditions with the monoidal structure (see Definition~\ref{def:Markov_cat} in the Appendix), and the following requirement for all $\P : \tarS \to \sndS$ in $\C$:
	\begin{cdiag} \label{eq:del_natural}
		\begin{tikzpicture}
	\begin{pgfonlayer}{nodelayer}
		\node [style=morphism] (0) at (-2, 0) {$\P$};
		\node [style=bn] (1) at (-2, 1.25) {};
		\node [style=none] (2) at (-2, -1.25) {};
		\node [style=none] (4) at (1.75, -1.25) {};
		\node [style=bn] (5) at (1.75, 1.25) {};
		\node [style=none] (6) at (0, 0) {$=$};
		\node [style=none] (7) at (-2, -1.75) {$\tarS$};
		\node [style=none] (8) at (1.75, -1.75) {$\tarS$};
	\end{pgfonlayer}
	\begin{pgfonlayer}{edgelayer}
		\draw (1) to (0);
		\draw (0) to (2.center);
		\draw (5) to (4.center);
	\end{pgfonlayer}
\end{tikzpicture}
	\end{cdiag}
\end{definition}

\begin{example} \label{ex:stoch-defn}
	For the purposes of obtaining Theorem \ref{thm:andrieu2020}, the most important example of a Markov category is the category $\stoch$.
	The objects of this category are measurable spaces, and the morphisms $\P : \tarS \to \sndS$ are \emph{Markov kernels}, namely functions $\P : \tarS \times \Sigma_{\sndS} \to [0, 1]$ such that each function $x \mapsto \P(x, A)$ is measurable and each function $A \mapsto \P(x, A)$ is a probability measure.
	Composition is given by the Chapman--Kolmogorov equation:
	\[
		(\Q \circ \P)(x, B) \coloneqq \int \P(x, \dif y) \, \Q(y, B)
	\]
	and may be regarded as \emph{sequential composition} of Markov kernels.
	The monoidal unit $\ind$ is the singleton measurable space $\{\ast\}$, and the monoidal product $\tarS \otimes \sndS$ is given by the usual product of measurable spaces.
	The monoidal product of morphisms is defined as:
	\begin{equation} \label{eq:stoch-monoidal-prod}
		(R \otimes Q)((x, y), A \times B) \coloneqq R(x, A) \, Q(y, B),
	\end{equation}
	which may be regarded as \emph{parallel composition}.
	The remaining structure of $\stoch$ is given in Example~\ref{exa:stoch} in the Appendix.
\end{example}

Morphisms $\pi : \ind \to \tarS$ out of the monoidal unit $\ind$ are regarded as having ``no input'' and appear in string diagrams as follows:
\begin{cdiag*}
	\begin{tikzpicture}
	\begin{pgfonlayer}{nodelayer}
		\node [style=state] (0) at (0, 0) {$\pi$};
		\node [style=none] (1) at (0, 0.25) {};
		\node [style=none] (2) at (0, 1) {};
		\node [style=none] (3) at (0, 1.5) {$\tarS$};
	\end{pgfonlayer}
	\begin{pgfonlayer}{edgelayer}
		\draw (2.center) to (1.center);
	\end{pgfonlayer}
\end{tikzpicture}

\end{cdiag*}
A morphism $\phi: \tarS \to \sndS$ is \emph{deterministic} \cite[Definition 10.1]{Fritz2020} if
\begin{cdiag} \label{eq:det_phi}
	\begin{tikzpicture}
	\begin{pgfonlayer}{nodelayer}
		\node [style=none] (0) at (0, 0) {};
		\node [style=none] (1) at (0, 0) {$=$};
		\node [style=none] (4) at (3, -0.5) {};
		\node [style=none] (5) at (3, -1.5) {};
		\node [style=none] (6) at (2, 1.5) {};
		\node [style=none] (7) at (4, 1.5) {};
		\node [style=none] (10) at (-3, -0.5) {};
		\node [style=none] (11) at (-3, -1.5) {};
		\node [style=none] (12) at (-4, 1.5) {};
		\node [style=none] (13) at (-2, 1.5) {};
		\node [style=morphism] (14) at (-3, -0.5) {$\phi$};
		\node [style=none] (15) at (-3, 0.5) {};
		\node [style=none] (16) at (2, 0.5) {};
		\node [style=none] (17) at (4, 0.5) {};
		\node [style=none] (18) at (2, 1.5) {};
		\node [style=morphism] (19) at (2, 0.5) {$\phi$};
		\node [style=morphism] (20) at (4, 0.5) {$\phi$};
		\node [style=bn] (21) at (3, -0.5) {};
		\node [style=none] (22) at (2, 2) {$\sndS$};
		\node [style=none] (23) at (4, 2) {$\sndS$};
		\node [style=none] (24) at (3, -2) {$\tarS$};
		\node [style=bn] (25) at (-3, 0.5) {};
		\node [style=none] (26) at (-4, 2) {$\sndS$};
		\node [style=none] (27) at (-2, 2) {$\sndS$};
		\node [style=none] (28) at (-3, -2) {$\tarS$};
	\end{pgfonlayer}
	\begin{pgfonlayer}{edgelayer}
		\draw (4.center) to (5.center);
		\draw (10.center) to (11.center);
		\draw (14) to (15.center);
		\draw [bend right=45] (12.center) to (15.center);
		\draw [bend right=45] (15.center) to (13.center);
		\draw [bend right=45] (16.center) to (4.center);
		\draw [bend right=45] (4.center) to (17.center);
		\draw (17.center) to (7.center);
		\draw (18.center) to (16.center);
	\end{pgfonlayer}
\end{tikzpicture}
\end{cdiag}

\begin{example}
	In $\stoch$, since $\ind$ is the singleton space, a Markov kernel $\mu : \ind \to \tarS$ encodes a single probability measure on $\tarS$, which we will denote simply as $\mu(\dif x)$.
	Likewise, every measurable function $\phi : \tarS \to \sndS$ gives rise to a deterministic Markov kernel $\Phi : \tarS \to \sndS$ defined as
	\begin{equation}\label{eq:deterministic_morphism}
		\Phi(x, \dif y) \coloneqq \delta_{\phi(x)}(\dif y),
	\end{equation}
	where $\delta$ is the Dirac delta.
	We will often identify $\Phi$ and $\phi$ in what follows.
	This recovers the concept of \emph{pushforwards} from measure theory: recall that the pushforward of a measure $\mu$ on $\tarS$ by a measurable function $\phi : \tarS \to \sndS$ is the distribution $\mu^\phi$ on $\sndS$ with $\mu^\phi(B) \coloneqq \mu(\phi^{-1}(B))$.
	In $\stoch$, this is just the composition $\mu^\phi = \phi \circ \mu$, where here the right-hand $\phi$ is regarded as a deterministic Markov kernel as in \eqref{eq:deterministic_morphism}.
\end{example}

The following concept is fundamental in Bayesian statistics and machine learning, and will be useful for us in formulating \emph{state space augmentation} in Section \ref{sec:augmented-reversibility}.

\begin{definition}[Page 9, \cite{Cho2019}] \label{def:bayesian-inverse}
	Let $\P : \tarS \to \sndS$ and $\pi : \ind \to \tarS$ be morphisms in a Markov category $\C$.
	Then $\P^\dagger : \sndS \to \tarS$ is a \emph{Bayesian inverse of $\P$ with respect to $\pi$} if
	\begin{cdiag}\label{diag:bayes_inv}
		\begin{tikzpicture}
	\begin{pgfonlayer}{nodelayer}
		\node [style=none] (0) at (0, 0) {$=$};
		\node [style=state] (1) at (-3.5, -1.5) {$\pi$};
		\node [style=none] (2) at (-3.5, -1.25) {};
		\node [style=none] (3) at (-3.5, -0.5) {};
		\node [style=none] (4) at (-4.5, 0.5) {};
		\node [style=none] (5) at (-2.5, 0.5) {};
		\node [style=bn] (6) at (-3.5, -0.5) {};
		\node [style=morphism] (7) at (-2.5, 0.5) {$\P$};
		\node [style=none] (9) at (-2.5, 0.75) {};
		\node [style=none] (10) at (-2.5, 1.5) {};
		\node [style=none] (11) at (-4.5, 1.5) {};
		\node [style=none] (12) at (-4.5, 2) {$\tarS$};
		\node [style=none] (13) at (-2.5, 2) {$\augS$};
		\node [style=state] (14) at (3.5, -2) {$\pi$};
		\node [style=none] (15) at (3.5, -1.75) {};
		\node [style=none] (16) at (3.5, 0.25) {};
		\node [style=none] (17) at (2.5, 1.25) {};
		\node [style=none] (18) at (4.5, 1.25) {};
		\node [style=bn] (19) at (3.5, 0.25) {};
		\node [style=morphism] (20) at (3.5, -0.75) {$\P$};
		\node [style=none] (22) at (4.5, 1.25) {};
		\node [style=none] (23) at (2.5, 1.25) {};
		\node [style=none] (24) at (2.5, 2.75) {$\tarS$};
		\node [style=none] (25) at (4.5, 2.75) {$\augS$};
		\node [style=morphism] (26) at (2.5, 1.25) {$\P^\dagger$};
		\node [style=none] (27) at (2.5, 1.5) {};
		\node [style=none] (28) at (2.5, 2.25) {};
		\node [style=none] (29) at (2.5, 2.25) {};
		\node [style=none] (30) at (4.5, 2.25) {};
	\end{pgfonlayer}
	\begin{pgfonlayer}{edgelayer}
		\draw (2.center) to (3.center);
		\draw [bend left=45] (5.center) to (3.center);
		\draw [bend left=45] (3.center) to (4.center);
		\draw (9.center) to (10.center);
		\draw (4.center) to (11.center);
		\draw (15.center) to (16.center);
		\draw [bend left=45] (18.center) to (16.center);
		\draw [bend left=45] (16.center) to (17.center);
		\draw (17.center) to (23.center);
		\draw (29.center) to (27.center);
		\draw (22.center) to (30.center);
	\end{pgfonlayer}
\end{tikzpicture}
	\end{cdiag}
\end{definition}

To explain the term ``Bayesian inverse'', note that in $\stoch$ the left-hand side of \eqref{diag:bayes_inv} is a Markov kernel we may sample from via
\[
	X \sim \pi(\dif x) \qquad\qquad Y \sim \P(X, \dif y) \qquad\qquad \text{return $(X, Y)$}.
\]
The equation \eqref{diag:bayes_inv} then says that $\P^\dagger(y, \dif x)$ is the conditional distribution of $X$ given $Y = y$, which is equivalently the posterior distribution of $X$ obtained via Bayes' theorem.

\subsection{Invariance and reversibility} \label{subsec:markov_chain}

Among the most fundamental concepts from Markov chain theory are \emph{invariance} and \emph{reversibility}.
In the context of MCMC, these constitute key correctness properties that ensure an algorithm targets the intended distribution.
We now show how these concepts can be formulated in Markov categories.

\begin{definition} \label{defn:invariance-reversibility}
	Let $\mu : \ind \to \tarS$ and $\P : \tarS \to \tarS$ be morphisms in a Markov category $\C$.
	We say that $\P$ is \textit{$\mu$-invariant} if $\P \circ \mu = \mu$.
	We say that $\P$ is \emph{$\mu$-reversible} if
	\begin{cdiag} \label{eq:reversibility}
		\begin{tikzpicture}
	\begin{pgfonlayer}{nodelayer}
		\node [style=none] (0) at (0, 0) {$=$};
		\node [style=state] (1) at (-3.5, -1.5) {$\mu$};
		\node [style=none] (2) at (-3.5, -1.25) {};
		\node [style=none] (3) at (-3.5, -0.5) {};
		\node [style=none] (4) at (-4.5, 0.5) {};
		\node [style=none] (5) at (-2.5, 0.5) {};
		\node [style=morphism] (6) at (-2.5, 0.5) {$\P$};
		\node [style=none] (7) at (-2.5, 1.5) {};
		\node [style=none] (8) at (-4.5, 1.5) {};
		\node [style=none] (9) at (-4.5, 2) {$\tarS$};
		\node [style=none] (10) at (-2.5, 2) {$\tarS$};
		\node [style=bn] (11) at (-3.5, -0.5) {};
		\node [style=state] (12) at (3.5, -1.5) {$\mu$};
		\node [style=none] (13) at (3.5, -1.25) {};
		\node [style=none] (14) at (3.5, -0.5) {};
		\node [style=none] (15) at (2.5, 0.5) {};
		\node [style=none] (16) at (4.5, 0.5) {};
		\node [style=morphism] (17) at (2.5, 0.5) {$\P$};
		\node [style=none] (18) at (4.5, 1.5) {};
		\node [style=none] (19) at (2.5, 1.5) {};
		\node [style=none] (20) at (2.5, 2) {$\tarS$};
		\node [style=none] (21) at (4.5, 2) {$\tarS$};
		\node [style=bn] (22) at (3.5, -0.5) {};
	\end{pgfonlayer}
	\begin{pgfonlayer}{edgelayer}
		\draw (2.center) to (3.center);
		\draw [bend left=45] (3.center) to (4.center);
		\draw [bend right=45] (3.center) to (5.center);
		\draw (4.center) to (8.center);
		\draw (7.center) to (5.center);
		\draw (13.center) to (14.center);
		\draw [bend left=45] (14.center) to (15.center);
		\draw [bend right=45] (14.center) to (16.center);
		\draw (15.center) to (19.center);
		\draw (18.center) to (16.center);
	\end{pgfonlayer}
\end{tikzpicture}
	\end{cdiag}
	Equivalently, this says that the left-hand side of \eqref{eq:reversibility} is invariant with respect to the swap isomorphism $\tarS \otimes \tarS \xrightarrow{\cong} \tarS \otimes \tarS$, or that $P$ is its own Bayesian inverse with respect to $\mu$.
\end{definition}

\begin{example} \label{ex:invariance-reversibility-stoch}
	Consider Definition~\ref{defn:invariance-reversibility} with $\C = \stoch$. %
	Here, $\tarS$ becomes a measurable space, $\mu$ a probability measure on $\tarS$, and $\P : \tarS \to \tarS$ a Markov kernel.
	Then $\P$ is $\mu$-invariant if and only if
	$
	 	\int \mu(\dif x) \, \P(x, A) = \mu(A)
	$
	for every measurable $A \subseteq \tarS$, recovering the usual definition of invariance, as observed by \cite[Section 2.4]{Moss2023}.
    Intuitively, invariance says that if a Markov chain with transition kernel $\P$ is initialised with distribution $\mu$, then its marginal distribution will remain $\mu$ at all time steps afterwards.
	Likewise, \eqref{eq:reversibility} says that $\int_A \mu(\dif x) \, \P(x, B) = \int_B \mu(\dif x') \, \P(x', A)$ for all measurable $A, B \subseteq \tarS$, or more colloquially
	\[
		\mu(\dif x) \, \P(x, \dif x') = \mu(\dif x') \, \P(x', \dif x).
	\]
	This recovers the classical \emph{detailed balance} condition from Markov chain theory, as observed by \cite[Remark~4.1.11]{Fritz2023}.
    Intuitively, detailed balance says that if a Markov chain with transition kernel $\P$ is initialised with distribution $\mu$, then the joint distribution of any two consecutive states is invariant under swapping them.
\end{example}

Invariance is usually the fundamental correctness property sought for MCMC algorithms, since it ensures that the Markov chain will converge to the desired distribution under mild conditions regardless of its initial state \cite{roberts2004general}.
Reversibility is a sufficient condition for invariance that is often convenient to verify in practice.
Markov categories are very well-suited for reasoning about these concepts, as the following result illustrates.

\begin{proposition} \label{prop:reversibility-implies-invariance}
	Let $\mu : \ind \to \tarS$ and $\P, \Q : \tarS \to \tarS$ be morphisms in a Markov category $\C$.
	If both $\P$ and $\Q$ are $\mu$-invariant, then so is $\P \circ \Q$.
	Likewise, if $\P$ is $\mu$-reversible, then it is also $\mu$-invariant.
\end{proposition}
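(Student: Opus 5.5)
The first claim is a direct calculation using associativity of composition. Assuming $\P \circ \mu = \mu$ and $\Q \circ \mu = \mu$, we compute
\[
(\P \circ \Q) \circ \mu = \P \circ (\Q \circ \mu) = \P \circ \mu = \mu,
\]
so $\P \circ \Q$ is $\mu$-invariant. Only the category axioms are used here. The order of composition does not matter, so $\Q \circ \P$ is $\mu$-invariant as well.

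For the second claim, we will post-compose both sides of the reversibility equation \eqref{eq:reversibility} with $\del_\tarS \otimes \id_\tarS$, which discards the left output wire. This is a morphism $\ind \to \tarS$, and we simplify each side separately.

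On the left-hand side, the discarded wire is the bare copy of $\mu$ with no $\P$ on it. By the counit law of the comonoid $(\cop_\tarS, \del_\tarS)$, discarding one branch of a copy gives the identity. The left-hand side therefore reduces to $\P \circ \mu$.

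On the right-hand side, the discarded wire passes through $\P$. The naturality of discarding \eqref{eq:del_natural}, namely $\del_\tarS \circ \P = \del_\tarS$, lets us delete the box $\P$. We are then left with a copy of $\mu$ in which one branch is discarded, and the counit law (together with commutativity of the comonoid, if the discarded branch is on the other side) reduces this to $\mu$. Equating the two sides gives $\P \circ \mu = \mu$, which is $\mu$-invariance.

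The only point requiring care is the right-hand side. There we must apply \eqref{eq:del_natural} before the counit law. This is exactly where the Markov (rather than merely CD) structure is used: in a CD category $\P$ need not be discardable, and reversibility need not imply invariance.
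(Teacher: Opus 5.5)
Your proof is correct and matches the paper's: the first claim is the same associativity computation, and for the second the paper likewise inserts a discard on one output, uses reversibility to move $P$ onto the discarded wire, deletes $P$ by naturality of discarding, and finishes with the counit law. Your appeal to commutativity of the comonoid is not needed, because the counit law holds for both the left and the right branch.
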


\begin{proof}
	For the first claim, by the definition of invariance, we have that $\P \circ \Q \circ \mu = \P \circ \mu = \mu$.
	For the second claim, observe that:
	\begin{cdiag*}
		\begin{tikzpicture}
	\begin{pgfonlayer}{nodelayer}
		\node [style=state] (0) at (-8.5, -1.5) {$\mu$};
		\node [style=morphism] (1) at (-8.5, 0) {$\P$};
		\node [style=none] (2) at (-8.5, 1.25) {};
		\node [style=none] (3) at (-8.5, 1.75) {$\tarS$};
		\node [style=none] (4) at (-6.5, 0) {$=$};
		\node [style=state] (5) at (-4, -1.5) {$\mu$};
		\node [style=none] (6) at (-4, -0.25) {};
		\node [style=none] (8) at (-5, 0.75) {};
		\node [style=morphism] (9) at (-3, 0.75) {$\P$};
		\node [style=bn] (10) at (-4, -0.25) {};
		\node [style=none] (11) at (-3, 1.75) {};
		\node [style=none] (12) at (-3, 2.25) {$\tarS$};
		\node [style=none] (13) at (-1, 0) {$=$};
		\node [style=state] (14) at (2, -1.5) {$\mu$};
		\node [style=none] (15) at (2, -0.25) {};
		\node [style=morphism] (16) at (1, 1) {$\P$};
		\node [style=none] (17) at (3, 0.75) {};
		\node [style=bn] (18) at (2, -0.25) {};
		\node [style=none] (19) at (3, 1.75) {};
		\node [style=none] (20) at (3, 2.25) {$\tarS$};
		\node [style=bn] (21) at (1, 2) {};
		\node [style=none] (22) at (4.5, 0) {$=$};
		\node [style=state] (23) at (7, -1.5) {$\mu$};
		\node [style=none] (24) at (7, -0.25) {};
		\node [style=bn] (25) at (6, 1) {};
		\node [style=none] (26) at (8, 0.75) {};
		\node [style=bn] (27) at (7, -0.25) {};
		\node [style=none] (28) at (8, 1.5) {};
		\node [style=none] (29) at (8, 2) {$\tarS$};
		\node [style=none] (30) at (9.5, 0) {$=$};
		\node [style=state] (31) at (11.5, -1) {$\mu$};
		\node [style=none] (36) at (11.5, 1) {};
		\node [style=none] (37) at (11.5, 1.5) {$\tarS$};
		\node [style=bn] (38) at (-5, 2) {};
	\end{pgfonlayer}
	\begin{pgfonlayer}{edgelayer}
		\draw (2.center) to (1);
		\draw (1) to (0);
		\draw [bend right=45, looseness=0.75] (6.center) to (9);
		\draw [bend right=45] (8.center) to (6.center);
		\draw (10) to (5);
		\draw (9) to (11.center);
		\draw [bend right=45] (15.center) to (17.center);
		\draw [bend right=45, looseness=0.75] (16) to (15.center);
		\draw (18) to (14);
		\draw (17.center) to (19.center);
		\draw (21) to (16);
		\draw [bend right=45] (24.center) to (26.center);
		\draw [bend right=45] (25) to (24.center);
		\draw (27) to (23);
		\draw (26.center) to (28.center);
		\draw (36.center) to (31);
		\draw (38) to (8.center);
	\end{pgfonlayer}
\end{tikzpicture}
	\end{cdiag*}
	where the second step uses reversibility, and the others use the basic Markov category axioms (see Appendix \ref{app:intro_to_category_theory}).
\end{proof}

It is well known by MCMC practitioners that invariance does not imply reversibility in general.
However, this does hold for \emph{involutions}, where recall $\phi : \tarS \to \tarS$ is an involution if $\phi \circ \phi = \Id_{\tarS}$.

\begin{proposition} \label{prop:inv_rev}
	Let $\mu: \ind \to \tarS$ and $\phi: \tarS \to \tarS$ be morphisms in a Markov category $\C$, where $\phi$ is a deterministic involution.
	If $\phi$ is $\mu$-invariant, then it is $\mu$-reversible.
\end{proposition}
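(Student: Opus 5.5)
The plan is to prove the reversibility equation \eqref{eq:reversibility} for $\P = \phi$ by a short chain of equalities. The chain starts from the left-hand side, the joint morphism $(\Id_\tarS \otimes \phi) \circ \cop_\tarS \circ \mu$, and ends at its swapped version $(\phi \otimes \Id_\tarS) \circ \cop_\tarS \circ \mu$. The idea is to rewrite the identity wire as $\phi \circ \phi$, pull the two copies of $\phi$ through the copy map using determinism, and then absorb the resulting leading $\phi$ into $\mu$ by invariance.

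First, using the involution property $\phi \circ \phi = \Id_\tarS$, I rewrite the left output wire as $\phi \circ \phi$. The left-hand side then becomes $(\phi \otimes \Id) \circ (\phi \otimes \phi) \circ \cop_\tarS \circ \mu$. Next, determinism \eqref{eq:det_phi} gives $(\phi \otimes \phi) \circ \cop_\tarS = \cop_\tarS \circ \phi$, so the expression becomes $(\phi \otimes \Id) \circ \cop_\tarS \circ \phi \circ \mu$. Finally, $\mu$-invariance of $\phi$, namely $\phi \circ \mu = \mu$, yields $(\phi \otimes \Id) \circ \cop_\tarS \circ \mu$, which is exactly the right-hand side of \eqref{eq:reversibility}. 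In string diagrams, this amounts to inserting two $\phi$ boxes on the left branch, sliding one of the $\phi$'s together with the $\phi$ on the right branch down through the copy node, and merging it into the state $\mu$.

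There is no real obstacle here. The one point needing care is the bookkeeping in the first step: the $\phi$ that gets paired with the existing $\phi$ on the right branch must be the lower one, i.e.\ the one nearer the copy node, so that the pattern $(\phi\otimes\phi)\circ\cop_\tarS$ literally appears. Functoriality of $\otimes$ (interchange) then justifies factoring $(\phi\circ\phi)\otimes\phi$ as $(\phi\otimes\Id)\circ(\phi\otimes\phi)$. Commutativity of the comonoid is not even needed.
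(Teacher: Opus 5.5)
Your proposal is correct and follows the paper's proof exactly: rewrite the identity wire as $\phi\circ\phi$ using the involution property, slide the pair of $\phi$'s below the copy map using determinism, then absorb the resulting $\phi$ into $\mu$ using invariance. As you note, normalisation and commutativity of the copy map are never used, which is consistent with the paper later reusing this result in CD categories.
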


\begin{proof}
	If $\phi$ is $\mu$-invariant, then we have the following
	\begin{cdiag} \label{diag:mu_Phi_inv_pf}
		\begin{tikzpicture}
	\begin{pgfonlayer}{nodelayer}
		\node [style=bn] (78) at (-24, -0.75) {};
		\node [style=none] (80) at (-25, 0.25) {};
		\node [style=none] (81) at (-23, 0.25) {};
		\node [style=morphism] (82) at (-23, 1.25) {$\phi$};
		\node [style=none] (89) at (-23, 2.5) {};
		\node [style=none] (90) at (-25, 2.5) {};
		\node [style=none] (91) at (-24, -2) {};
		\node [style=state] (92) at (-24, -2) {$\mu$};
		\node [style=none] (93) at (-21, 0.5) {$=$};
		\node [style=bn] (97) at (-18, -1) {};
		\node [style=none] (99) at (-19, 0) {};
		\node [style=none] (100) at (-17, 0) {};
		\node [style=morphism] (101) at (-17, 0.75) {$\phi$};
		\node [style=none] (108) at (-17, 2.5) {};
		\node [style=none] (109) at (-19, 2.5) {};
		\node [style=none] (110) at (-18, -2) {};
		\node [style=state] (111) at (-18, -2) {$\mu$};
		\node [style=morphism] (112) at (-19, 1.5) {$\phi$};
		\node [style=morphism] (113) at (-19, 0) {$\phi$};
		\node [style=none] (114) at (-15, 0.5) {$=$};
		\node [style=bn] (118) at (-12, 0.25) {};
		\node [style=none] (120) at (-13, 1.25) {};
		\node [style=none] (121) at (-11, 1.25) {};
		\node [style=none] (128) at (-11, 2.5) {};
		\node [style=none] (129) at (-13, 2.5) {};
		\node [style=state] (131) at (-12, -2) {$\mu$};
		\node [style=morphism] (132) at (-13, 1.5) {$\phi$};
		\node [style=morphism] (134) at (-12, -0.75) {$\phi$};
		\node [style=none] (135) at (-9, 0.5) {$=$};
		\node [style=bn] (136) at (-6, -0.75) {};
		\node [style=none] (137) at (-7, 0.25) {};
		\node [style=none] (138) at (-5, 0.25) {};
		\node [style=none] (139) at (-5, 2.5) {};
		\node [style=none] (140) at (-7, 2.5) {};
		\node [style=state] (142) at (-6, -2) {$\mu$};
		\node [style=morphism] (143) at (-7, 1.25) {$\phi$};
		\node [style=none] (144) at (-25, 3) {$\tarS$};
		\node [style=none] (145) at (-23, 3) {$\tarS$};
		\node [style=none] (146) at (-19, 3) {$\tarS$};
		\node [style=none] (147) at (-17, 3) {$\tarS$};
		\node [style=none] (148) at (-13, 3) {$\tarS$};
		\node [style=none] (149) at (-11, 3) {$\tarS$};
		\node [style=none] (150) at (-7, 3) {$\tarS$};
		\node [style=none] (151) at (-5, 3) {$\tarS$};
	\end{pgfonlayer}
	\begin{pgfonlayer}{edgelayer}
		\draw [in=180, out=-90, looseness=1.25] (80.center) to (78);
		\draw [in=270, out=0, looseness=1.25] (78) to (81.center);
		\draw (81.center) to (82);
		\draw (80.center) to (90.center);
		\draw (89.center) to (82);
		\draw [in=180, out=-90, looseness=1.25] (99.center) to (97);
		\draw [in=270, out=0, looseness=1.25] (97) to (100.center);
		\draw (100.center) to (101);
		\draw (99.center) to (109.center);
		\draw (108.center) to (101);
		\draw [in=180, out=-90, looseness=1.25] (120.center) to (118);
		\draw [in=270, out=0, looseness=1.25] (118) to (121.center);
		\draw (120.center) to (129.center);
		\draw (121.center) to (128.center);
		\draw (78) to (92);
		\draw (97) to (111);
		\draw (118) to (134);
		\draw (131) to (134);
		\draw [in=180, out=-90, looseness=1.25] (137.center) to (136);
		\draw [in=270, out=0, looseness=1.25] (136) to (138.center);
		\draw (137.center) to (140.center);
		\draw (138.center) to (139.center);
		\draw (142) to (136);
	\end{pgfonlayer}
\end{tikzpicture}
	\end{cdiag}
	where the first step uses the involution property, the second uses determinism, and the third uses invariance.
\end{proof}

\begin{remark}
	The assumption of determinism in Proposition~\ref{prop:inv_rev} is redundant in many practical settings.
	This is because an involution is by definition an isomorphism, and for the class of \emph{positive} Markov categories \cite[Definition 11.22]{Fritz2020}, every isomorphism is necessarily deterministic \cite[Remark 11.28]{Fritz2020}.
	Many Markov categories arising in practice are positive, including $\stoch$ \cite[Example 11.25]{Fritz2020}. %
\end{remark}

\subsection{Skew-reversibility} \label{subsec:skew-reversibility}

Reversible Markov chains often exhibit a diffusive ``random walk'' behaviour, and reversible MCMC algorithms can often perform poorly in practice as a result \cite{Diaconis2000}. 
To address this, a variety of generalised notions of reversibility have been considered in the MCMC literature, including \textit{lifting} \cite{Chen1999,Diaconis2000} and \textit{skew-reversibility} \cite{Turitsyn2011,Sakai2016,thin2021nonreversible}. %
These all admit related formulations in terms of an involution acting on the state space.
Recently, \cite[Definition 2.4]{Andrieu2021} gave a general notion of \emph{$(\pi, s)$-reversibility} which subsumes these previous concepts, but at the cost of introducing heavier measure-theoretic machinery.
We show how these concepts can be streamlined considerably in the context of Markov categories.
Later on, this will yield a straightforward extension of our synthetic version of Theorem \ref{thm:andrieu2020} to the skew-reversible setting (Corollary \ref{cor:skew}).

\begin{definition} \label{def:pi_s_rev}
	Let $\pi : \ind \to \tarS$ and $\s, \P : \tarS \to \tarS$ be morphisms in a Markov category $\C$, where $\s$ is a deterministic involution that is $\pi$-invariant.
	We say that $\P$ is \emph{$(\pi, s)$-reversible} if the following equality holds:
	\begin{cdiag} \label{eq:pi_s_rev_def}
		\begin{tikzpicture}
	\begin{pgfonlayer}{nodelayer}
		\node [style=none] (0) at (0, 0) {$=$};
		\node [style=state] (1) at (-3, -1.75) {$\pi$};
		\node [style=none] (2) at (-3, -1.5) {};
		\node [style=none] (3) at (-3, -0.75) {};
		\node [style=none] (4) at (-4, 0.25) {};
		\node [style=none] (5) at (-2, 0.25) {};
		\node [style=none] (6) at (-4, 1.5) {};
		\node [style=morphism] (7) at (-2, 0.5) {$\P$};
		\node [style=none] (8) at (-2, 1.5) {};
		\node [style=morphism] (9) at (2, -0.75) {$\s$};
		\node [style=state] (11) at (3, -2.75) {$\pi$};
		\node [style=none] (12) at (3, -2.5) {};
		\node [style=none] (13) at (3, -1.75) {};
		\node [style=none] (14) at (2, -0.75) {};
		\node [style=none] (15) at (4, -0.75) {};
		\node [style=none] (16) at (2, 2.5) {};
		\node [style=morphism] (17) at (2, 0.5) {$\P$};
		\node [style=none] (18) at (4, 2.5) {};
		\node [style=morphism] (19) at (2, 1.75) {$\s$};
		\node [style=bn] (21) at (3, -1.75) {};
		\node [style=bn] (22) at (-3, -0.75) {};
		\node [style=none] (23) at (-4, 2) {$\tarS$};
		\node [style=none] (24) at (-2, 2) {$\tarS$};
		\node [style=none] (25) at (2, 3) {$\tarS$};
		\node [style=none] (26) at (4, 3) {$\tarS$};
	\end{pgfonlayer}
	\begin{pgfonlayer}{edgelayer}
		\draw [bend right=45] (4.center) to (3.center);
		\draw [bend left=45] (5.center) to (3.center);
		\draw (3.center) to (2.center);
		\draw (4.center) to (6.center);
		\draw (8.center) to (5.center);
		\draw [bend right=45] (14.center) to (13.center);
		\draw [bend left=45] (15.center) to (13.center);
		\draw (13.center) to (12.center);
		\draw (14.center) to (16.center);
		\draw (18.center) to (15.center);
	\end{pgfonlayer}
\end{tikzpicture}
	\end{cdiag}
\end{definition}

\begin{example}
	In $\C=\stoch$, in terms of measure theory, \eqref{eq:pi_s_rev_def} says:
	\[
		\pi(\dif x) \, \P(x, \dif y) = \pi(\dif y) \, (\s \circ \P \circ \s)(y, \dif x)
	\]
	which recovers the definition of $(\pi, s)$-reversibility given in \cite[Definition 2.4]{Andrieu2021} (see also \cite[Proposition 3, conclusion (c)]{Andrieu2020}).
\end{example}

By a similar argument as Proposition \ref{prop:reversibility-implies-invariance}, $(\pi, \s)$-reversibility implies $\pi$-invariance, which explains its usefulness for MCMC.
It is also clear that ordinary $\pi$-reversibility is the special case of $(\pi, \Id_\tarS)$-reversibility.
Interestingly, the following result shows we can also define $(\pi, \s)$-reversibility as a special case of $\pi$-reversibility, and recovers several results on this topic in the MCMC literature (e.g.\ \cite[Proposition~2.5]{Andrieu2021}, \cite[Equation (3)]{thin2021nonreversible}, \cite[Corollary 2]{Andrieu2020}).

\begin{proposition} \label{prop:pi-s-reversibility-condition}
	With $\pi$, $\s$, and $\P$ as in Definition \ref{def:pi_s_rev}, the following are equivalent:
	\begin{enumerate}[(1)]
		\item $\P$ is $(\pi, \s)$-reversible.
		\item $\P \circ \s$ is $\pi$-reversible.
		\item $\s \circ \P$ is $\pi$-reversible.
		\item $\P = \Q \circ \s = \s \circ \R$ for some $\pi$-reversible $\Q, \R : \tarS \to \tarS$.
	\end{enumerate}
\end{proposition}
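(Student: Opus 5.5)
Throughout, write $J_\pi(\P)$ for the left-hand side of \eqref{eq:reversibility} with $\mu=\pi$, i.e.\ $(\Id\otimes\P)\circ\cop\circ\pi$, and write $\sigma$ for the swap. Then $\pi$-reversibility of $\P$ reads $\sigma\circ J_\pi(\P)=J_\pi(\P)$. The key preliminary step is an identity for how $J_\pi$ interacts with $\s$:
\[
	J_\pi(\P\circ\s) = (\s\otimes\P)\circ J_\pi(\Id) = (\s\otimes\P)\circ\cop\circ\pi .
\]
This follows by the argument of Proposition~\ref{prop:inv_rev}. Using $\s\circ\s=\Id$, we insert $\s\circ\s$ on the left wire. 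Determinism of $\s$ lets us pull one copy of $\s$ below the copy node. Finally, $\pi$-invariance $\s\circ\pi=\pi$ absorbs it.

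By the same reasoning, the right-hand side of \eqref{eq:pi_s_rev_def} equals $(\Id\otimes\s)\circ\sigma\circ(\Id\otimes \P)\circ(\s\otimes\Id)\circ\cop\circ\pi$, after passing the lower $\s$ through the copy. Hence \eqref{eq:pi_s_rev_def} becomes
\[
	J_\pi(\P) = (\s\otimes\s)\circ\sigma\circ J_\pi(\s\circ\P\circ\s),
\]
which is the form I will compare against.

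For (1)$\Leftrightarrow$(2), apply the invertible map $\s\otimes\Id$ to both sides of (1). The left side becomes $(\s\otimes\P)\circ\cop\circ\pi = J_\pi(\P\circ\s)$. On the right, we use $\s\s=\Id$, naturality of $\sigma$, and the identity $J_\pi(\s\P\s)=(\s\otimes \s\P)\circ\cop\circ\pi$ (obtained from the preliminary identity with $\s\P$ in place of $\P$). The right side then reduces to $\sigma\circ(\s\otimes\P)\circ\cop\circ\pi = \sigma\circ J_\pi(\P\s)$. So (1) is exactly $\pi$-reversibility of $\P\circ\s$, and since $\s\otimes\Id$ is an isomorphism each step is reversible.

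For (1)$\Leftrightarrow$(3), the symmetric manipulation applies $\Id\otimes\s$ instead. A cleaner route is a lemma: if $\Q$ is $\pi$-reversible, then so is $\s\circ\Q\circ\s$. To prove it, conjugate $J_\pi(\Q)$ by $\s\otimes\s$ and use the preliminary identity twice. Given the lemma, (2) holds iff $\s(\P\s)\s=\s\P$ is $\pi$-reversible, which is (3).

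For (4): if (2) and (3) hold, set $\Q=\P\s$ and $\R=\s\P$. Then $\Q\s=\P=\s\R$ because $\s\s=\Id$. Conversely, if $\P=\Q\s$ with $\Q$ reversible, then $\P\s=\Q$, which gives (2). Only one of the two factorisations is actually needed for this direction.

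The main obstacle is the bookkeeping in (1)$\Leftrightarrow$(2). The determinism and invariance moves must be applied on the correct wire, in the correct order. I would first carry this out diagrammatically, copying the three-step pattern of \eqref{diag:mu_Phi_inv_pf}, and only then translate to equations.
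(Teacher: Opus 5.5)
Your overall route is correct and is essentially the paper's. Your preliminary identity $J_\pi(\P\circ\s)=(\s\otimes\P)\circ\cop\circ\pi$ is the paper's first step, which it gets from Proposition~\ref{prop:inv_rev}. Applying $\s\otimes\Id$ to \eqref{eq:pi_s_rev_def} and cancelling $\s\circ\s$ is the paper's middle and last step. Taking $\Q\coloneqq\P\circ\s$ and $\R\coloneqq\s\circ\P$ is exactly how the paper handles (4). The only real difference is (1)$\Leftrightarrow$(3), which the paper dismisses as ``similar''. Your lemma that $\pi$-reversibility is stable under conjugation by $\s$ is correct: it follows from $J_\pi(\s\Q\s)=(\s\otimes\s)\circ J_\pi(\Q)$ and $\sigma\circ(\s\otimes\s)=(\s\otimes\s)\circ\sigma$. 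It gives (2)$\Leftrightarrow$(3) more transparently.

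However, your displayed reformulation of \eqref{eq:pi_s_rev_def} is wrong. The right-hand side of \eqref{eq:pi_s_rev_def} is $(\s\P\s\otimes\Id)\circ\cop\circ\pi=\sigma\circ J_\pi(\s\P\s)$. This equals $(\s\otimes\Id)\circ\sigma\circ(\Id\otimes\P)\circ(\s\otimes\Id)\circ\cop\circ\pi$. Your version has $\Id\otimes\s$ on top, and it simplifies to $(\P\otimes\Id)\circ\cop\circ\pi$. Likewise $(\s\otimes\s)\circ\sigma\circ J_\pi(\s\P\s)=\sigma\circ J_\pi(\P)$. So the equation you say you ``will compare against'' just states that $\P$ is $\pi$-reversible.

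That is not equivalent to (1). Work in $\stoch$ on $\{0,1,2\}$ with $\pi$ uniform, $\s$ a transposition and $\P$ a 3-cycle, both as deterministic kernels. Then $\P\circ\s$ is a transposition, hence $\pi$-reversible, so $\P$ is $(\pi,\s)$-reversible. But $\P$ itself is not $\pi$-reversible.

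Your (1)$\Leftrightarrow$(2) computation only produces $\sigma\circ J_\pi(\P\circ\s)$ if you start from the correct right-hand side $\sigma\circ J_\pi(\s\P\s)$. From your displayed form it would give $(\s\P\otimes\Id)\circ\cop\circ\pi$ instead. The display seems to conflate two correct forms: $J_\pi(\P)=\sigma\circ J_\pi(\s\P\s)$, and $J_\pi(\P)=(\s\otimes\s)\circ\sigma\circ J_\pi(\P)$ (the latter via your conjugation identity). Replace the display with either one, or simply delete it, and the proof is complete.
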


\begin{proof}
	(1) $\Leftrightarrow$ (2): Consider the following equalities:
	\begin{cdiag*}
		\begin{tikzpicture}
	\begin{pgfonlayer}{nodelayer}
		\node [style=none] (0) at (-5, -1.25) {};
		\node [style=none] (1) at (-6, -0.25) {};
		\node [style=none] (2) at (-4, -0.25) {};
		\node [style=none] (3) at (-5, -2) {};
		\node [style=bn] (4) at (-5, -1.25) {};
		\node [style=state] (5) at (-5, -2.25) {$\pi$};
		\node [style=none] (6) at (-6, 2) {};
		\node [style=none] (7) at (-4, 2) {};
		\node [style=none] (16) at (-2, 0) {$=$};
		\node [style=morphism] (17) at (-4, -0.25) {$\s$};
		\node [style=morphism] (18) at (-4, 1) {$\P$};
		\node [style=none] (21) at (1, -0.75) {};
		\node [style=none] (22) at (0, 0.25) {};
		\node [style=none] (23) at (2, 0.25) {};
		\node [style=none] (24) at (1, -1.5) {};
		\node [style=bn] (25) at (1, -0.75) {};
		\node [style=state] (26) at (1, -1.75) {$\pi$};
		\node [style=none] (27) at (0, 1.25) {};
		\node [style=none] (28) at (2, 1.25) {};
		\node [style=morphism] (29) at (0, 0.25) {$\s$};
		\node [style=morphism] (30) at (2, 0.25) {$\P$};
		\node [style=none] (33) at (7, -2.25) {};
		\node [style=none] (34) at (6, -1.25) {};
		\node [style=none] (35) at (8, -1.25) {};
		\node [style=none] (36) at (7, -3) {};
		\node [style=bn] (37) at (7, -2.25) {};
		\node [style=state] (38) at (7, -3.25) {$\pi$};
		\node [style=none] (39) at (6, 3.25) {};
		\node [style=none] (40) at (8, 3.25) {};
		\node [style=morphism] (41) at (6, -1.25) {$\s$};
		\node [style=morphism] (42) at (6, 0) {$\P$};
		\node [style=none] (45) at (4, 0) {$=$};
		\node [style=morphism] (46) at (6, 1.25) {$\s$};
		\node [style=morphism] (47) at (6, 2.25) {$\s$};
		\node [style=none] (48) at (12.5, -1.25) {};
		\node [style=none] (49) at (11.5, -0.25) {};
		\node [style=none] (50) at (13.5, -0.25) {};
		\node [style=none] (51) at (12.5, -2) {};
		\node [style=bn] (52) at (12.5, -1.25) {};
		\node [style=state] (53) at (12.5, -2.25) {$\pi$};
		\node [style=none] (54) at (11.5, 2) {};
		\node [style=none] (55) at (13.5, 2) {};
		\node [style=morphism] (56) at (11.5, -0.25) {$\s$};
		\node [style=morphism] (57) at (11.5, 1) {$\P$};
		\node [style=none] (60) at (9.5, 0) {$=$};
		\node [style=none] (61) at (-6, 2.5) {$\tarS$};
		\node [style=none] (62) at (-4, 2.5) {$\tarS$};
		\node [style=none] (63) at (0, 1.75) {$\tarS$};
		\node [style=none] (64) at (2, 1.75) {$\tarS$};
		\node [style=none] (65) at (6, 3.75) {$\tarS$};
		\node [style=none] (66) at (8, 3.75) {$\tarS$};
		\node [style=none] (67) at (11.5, 2.5) {$\tarS$};
		\node [style=none] (68) at (13.5, 2.5) {$\tarS$};
	\end{pgfonlayer}
	\begin{pgfonlayer}{edgelayer}
		\draw [in=270, out=180] (0.center) to (1.center);
		\draw (0.center) to (3.center);
		\draw [in=270, out=0] (0.center) to (2.center);
		\draw (6.center) to (1.center);
		\draw (7.center) to (2.center);
		\draw [in=270, out=180] (21.center) to (22.center);
		\draw (21.center) to (24.center);
		\draw [in=270, out=0] (21.center) to (23.center);
		\draw (27.center) to (22.center);
		\draw (28.center) to (23.center);
		\draw [in=270, out=180] (33.center) to (34.center);
		\draw (33.center) to (36.center);
		\draw [in=270, out=0] (33.center) to (35.center);
		\draw (39.center) to (34.center);
		\draw (40.center) to (35.center);
		\draw [in=270, out=180] (48.center) to (49.center);
		\draw (48.center) to (51.center);
		\draw [in=270, out=0] (48.center) to (50.center);
		\draw (54.center) to (49.center);
		\draw (55.center) to (50.center);
	\end{pgfonlayer}
\end{tikzpicture}
	\end{cdiag*}
	The first step always holds by Proposition~\ref{prop:inv_rev}, and the third because $\s$ is an involution.
	The middle step holds if and only if $\P$ is $(\pi, \s)$-reversible, as can be seen by applying $\s$ to both left-hand wires and using the involution property.
    Likewise, the left- and right-hand sides are equal precisely when $P \circ s$ is $\pi$-reversible, which gives the statement.

	(1) $\Leftrightarrow$ (3) follows by a similar argument.

	(2) $\Leftrightarrow$ (4) $\Leftrightarrow$ (3) now follows by taking $\Q \coloneqq \P \circ \s$ and $\R \coloneqq \s \circ \P$.
\end{proof}

\subsection{Reversibility via augmented state spaces} \label{sec:augmented-reversibility}

Given a target distribution on a space $\tarS$, a common MCMC technique is to consider a target on an \emph{augmented space} $\augS$ that is related to $\tarS$, but simpler to work with.
Although widely used, augmentation techniques are often presented on a case-by-case basis for specific choices of the components involved.
However, Markov categories permit a unified explanation of these techniques as follows.

\begin{proposition}\label{prop:marginal_rev_new2}
	Let $\pi : \ind \to \tarS$ be a morphism in a Markov category $\C$.
	Suppose we have a composition in $\C$ as follows:
	\begin{cdiag} \label{eq:Q_T_Q}
		\begin{tikzcd}
			\tarS \ar{r}{\R} & \augS \ar{r}{\P} & \augS \ar{r}{\R^\dagger} & \tarS.
		\end{tikzcd}
	\end{cdiag}
	If $\R^\dagger$ is a Bayesian inverse of $\R$ with respect to $\pi$, and 
	$\P$ is $(\R \circ \pi)$-reversible (resp.\ invariant), then \eqref{eq:Q_T_Q} is $\pi$-reversible (resp.\ invariant) 
\end{proposition}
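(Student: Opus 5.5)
The plan is to manipulate the joint state $\Delta \circ \pi$ followed by $\Id \otimes (\R^\dagger \circ \P \circ \R)$ in string diagrams, using the Bayesian inverse equation \eqref{diag:bayes_inv} twice and reversibility of $\P$ once in between. Write $\nu \coloneqq \R \circ \pi$. Throughout, we use the equation \eqref{diag:bayes_inv} in the form: copying $\pi$, then applying $\R$ to one branch, equals drawing from $\nu$, copying, then applying $\R^\dagger$ to the other branch. The key steps, in order, are as follows.

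\textbf{Step 1.} Start from the left-hand side of \eqref{eq:reversibility} for the composite \eqref{eq:Q_T_Q}: the state $\pi$ is copied, and the right wire passes through $\R$, then $\P$, then $\R^\dagger$. The portion ``copy $\pi$, apply $\R$ on the right'' is exactly the left-hand side of \eqref{diag:bayes_inv}. We replace it by ``$\nu$, copy, apply $\R^\dagger$ on the left wire''. The diagram becomes: $\nu$ copied, with $\R^\dagger$ on the left wire and $\R^\dagger \circ \P$ on the right wire.

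\textbf{Step 2.} Factor this as $(\R^\dagger \otimes \R^\dagger) \circ (\Id \otimes \P) \circ \cop \circ \nu$. The middle part, $(\Id\otimes \P)\circ\cop\circ\nu$, is the left-hand side of \eqref{eq:reversibility} for $\P$ with respect to $\nu$. By assumption it is invariant under the swap. Since $\R^\dagger \otimes \R^\dagger$ commutes with the swap (naturality of the symmetry), the whole diagram is swap-invariant. By the swap-invariance characterisation in Definition~\ref{defn:invariance-reversibility}, this means the composite is $\pi$-reversible, since Step 1 was an equality.

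\textbf{Step 3 (invariance case).} Here the argument is simpler and does not use the full joint. We have $\R^\dagger \circ \P \circ \R \circ \pi = \R^\dagger \circ \P \circ \nu = \R^\dagger \circ \nu$. It remains to show $\R^\dagger \circ \nu = \pi$. This follows by discarding the right-hand output in \eqref{diag:bayes_inv}:
\begin{itemize}
	\item on the left-hand side, \eqref{eq:del_natural} turns $\del \circ \R$ into $\del$, and counitality leaves $\pi$;
	\item on the right-hand side, counitality leaves $\R^\dagger \circ \R \circ \pi$.
\end{itemize}

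The only mildly delicate point is Step 1. The Bayesian inverse equation must be applied in the correct direction, and the remaining morphisms $\P$ and $\R^\dagger$ must attach to the correct wire. This is a matter of functoriality of the monoidal product, since the post-composed morphisms act only on the right output of \eqref{diag:bayes_inv}. No deeper obstacle is expected.
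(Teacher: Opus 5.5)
Your proof is correct and follows essentially the paper's route: your Step~1 is the paper's first step, and both arguments then use reversibility of $\P$ with respect to $\R\circ\pi$. The only difference is packaging. After moving $\P$ across, the paper applies the Bayesian-inverse equation a second time (in mirrored form) to return to $\pi$, whereas you conclude directly from swap-invariance of $(\R^\dagger\otimes\R^\dagger)\circ(\Id\otimes\P)\circ\cop\circ(\R\circ\pi)$ via naturality of the symmetry, which amounts to using the swapped form of your Step~1; your explicit invariance argument via $\R^\dagger\circ\R\circ\pi=\pi$ fills in what the paper leaves as ``follows similarly''.
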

\begin{proof}
	We show the reversibility statement; the invariance statement follows similarly.
	By basic string diagram manipulations:
	\begin{cdiag*}
		\begin{tikzpicture}
	\begin{pgfonlayer}{nodelayer}
		\node [style=bn] (3) at (-3, -1.75) {};
		\node [style=none] (4) at (-3, -2.5) {};
		\node [style=none] (5) at (-3, -2.75) {};
		\node [style=state] (6) at (-3, -2.75) {$\pi$};
		\node [style=none] (7) at (0, 0.5) {$=$};
		\node [style=none] (8) at (-4, -0.75) {};
		\node [style=none] (9) at (-2, -0.75) {};
		\node [style=morphism] (11) at (-2, 0.75) {$\P$};
		\node [style=none] (12) at (-4, 3.25) {};
		\node [style=bn] (16) at (3, -0.25) {};
		\node [style=none] (17) at (3, -2.25) {};
		\node [style=none] (18) at (3, -2.5) {};
		\node [style=state] (19) at (3, -2.5) {$\pi$};
		\node [style=none] (20) at (2, 0.75) {};
		\node [style=none] (21) at (4, 0.75) {};
		\node [style=morphism] (22) at (4, 1) {$\P$};
		\node [style=morphism] (23) at (2, 1) {$\R^\dagger$};
		\node [style=none] (24) at (2, 1) {};
		\node [style=morphism] (25) at (-2, 2.25) {$\R^\dagger$};
		\node [style=none] (26) at (2, 2.25) {};
		\node [style=none] (27) at (-2, 3.25) {};
		\node [style=none] (28) at (4, 3.25) {};
		\node [style=none] (29) at (2, 3.25) {};
		\node [style=morphism] (62) at (4, 2.25) {$\R^\dagger$};
		\node [style=morphism] (64) at (3, -1.25) {$\R$};
		\node [style=none] (65) at (12, 0.5) {$=$};
		\node [style=none] (82) at (6, 0.5) {$=$};
		\node [style=none] (86) at (9, -0.25) {};
		\node [style=none] (87) at (9, -2.25) {};
		\node [style=none] (88) at (9, -2.5) {};
		\node [style=state] (89) at (9, -2.5) {$\pi$};
		\node [style=none] (90) at (8, 0.75) {};
		\node [style=none] (91) at (10, 0.75) {};
		\node [style=none] (92) at (10, 0.75) {};
		\node [style=morphism] (93) at (8, 1) {$\P$};
		\node [style=morphism] (94) at (8, 2.25) {$\R^\dagger$};
		\node [style=none] (95) at (10, 3.25) {};
		\node [style=none] (96) at (8, 3.25) {};
		\node [style=morphism] (97) at (10, 2.25) {$\R^\dagger$};
		\node [style=morphism] (98) at (9, -1.25) {$\R$};
		\node [style=morphism] (99) at (-2, -0.75) {$\R$};
		\node [style=bn] (100) at (9, -0.25) {};
		\node [style=bn] (101) at (15, -1.75) {};
		\node [style=none] (102) at (15, -2.5) {};
		\node [style=none] (103) at (15, -2.75) {};
		\node [style=state] (104) at (15, -2.75) {$\pi$};
		\node [style=none] (105) at (14, -0.75) {};
		\node [style=none] (106) at (16, -0.75) {};
		\node [style=none] (107) at (16, 3.25) {};
		\node [style=none] (108) at (14, 3.25) {};
		\node [style=morphism] (111) at (14, -0.75) {$\R$};
		\node [style=morphism] (112) at (14, 0.75) {$\P$};
		\node [style=morphism] (113) at (14, 2.25) {$\R^\dagger$};
		\node [style=none] (114) at (-4, 3.75) {$\tarS$};
		\node [style=none] (115) at (-2, 3.75) {$\tarS$};
		\node [style=none] (116) at (2, 3.75) {$\tarS$};
		\node [style=none] (117) at (4, 3.75) {$\tarS$};
		\node [style=none] (118) at (8, 3.75) {$\tarS$};
		\node [style=none] (119) at (10, 3.75) {$\tarS$};
		\node [style=none] (120) at (14, 3.75) {$\tarS$};
		\node [style=none] (121) at (16, 3.75) {$\tarS$};
	\end{pgfonlayer}
	\begin{pgfonlayer}{edgelayer}
		\draw (3) to (4.center);
		\draw [in=-180, out=-90] (8.center) to (3);
		\draw (12.center) to (8.center);
		\draw [in=-180, out=-90] (20.center) to (16);
		\draw [bend left=45] (21.center) to (16);
		\draw (23) to (20.center);
		\draw (22) to (21.center);
		\draw (24.center) to (23);
		\draw (22) to (28.center);
		\draw (11) to (25);
		\draw (25) to (27.center);
		\draw (29.center) to (26.center);
		\draw (26.center) to (24.center);
		\draw (16) to (64);
		\draw (17.center) to (64);
		\draw [in=-180, out=-90] (90.center) to (86.center);
		\draw [bend left=45] (91.center) to (86.center);
		\draw (92.center) to (91.center);
		\draw (92.center) to (95.center);
		\draw (96.center) to (94);
		\draw (94) to (93);
		\draw (90.center) to (93);
		\draw (87.center) to (98);
		\draw (98) to (86.center);
		\draw [bend right=45] (3) to (9.center);
		\draw (9.center) to (11);
		\draw (101) to (102.center);
		\draw [in=-180, out=-90] (105.center) to (101);
		\draw (108.center) to (105.center);
		\draw [bend right=45] (101) to (106.center);
		\draw (106.center) to (107.center);
	\end{pgfonlayer}
\end{tikzpicture}
	\end{cdiag*}
	where the first and third steps use the definition of a Bayesian inverse (Definition~\ref{def:bayesian-inverse}), and the second uses reversibility of $\P$.
\end{proof}

In the context of involutive MH, Andrieu et. al. \cite{Andrieu2020} combine state space augmentation with their kernel \eqref{eq:mh-kernel} to recover a wide variety of existing MCMC algorithms.
The scheme they consider involves introducing auxiliary random variables taking values in a space $\auxS$, so that the augmented space has the form $\augS = \tarS \otimes \auxS$.
The following result captures their approach in the setting of Markov categories.

\begin{corollary} \label{cor:derived-reversible-morphism}
	Let $\pi : \ind \to \tarS$, $\Q : \tarS \to \auxS$, and $\P : \tarS \otimes \auxS \to \tarS \otimes \auxS$ be morphisms in a Markov category $\C$, and denote:
	\begin{cdiag} \label{eq:mu-definition-in-proof}
		\begin{tikzpicture}
	\begin{pgfonlayer}{nodelayer}
		\node [style=state] (0) at (-2, -0.5) {$\mu$};
		\node [style=none] (1) at (-2.5, -0.25) {};
		\node [style=none] (2) at (-2.5, 0.5) {};
		\node [style=none] (3) at (-2.5, 1) {$\tarS$};
		\node [style=none] (4) at (-1.5, -0.25) {};
		\node [style=none] (5) at (-1.5, 0.5) {};
		\node [style=none] (6) at (-1.5, 1) {$\auxS$};
		\node [style=none] (7) at (0, 0) {$\coloneqq$};
		\node [style=state] (8) at (2.25, -1.5) {$\pi$};
		\node [style=none] (9) at (2.25, -1.25) {};
		\node [style=none] (10) at (2.25, -0.5) {};
		\node [style=none] (11) at (1.25, 0.5) {};
		\node [style=none] (12) at (3.25, 0.5) {};
		\node [style=morphism] (13) at (3.25, 0.5) {$\Q$};
		\node [style=none] (14) at (3.25, 1.5) {};
		\node [style=none] (15) at (1.25, 1.5) {};
		\node [style=bn] (16) at (2.25, -0.5) {};
		\node [style=none] (17) at (1.25, 2) {$\tarS$};
		\node [style=none] (18) at (3.25, 2) {$\auxS$};
		\node [style=none] (19) at (13.25, -2.25) {};
		\node [style=none] (20) at (13.25, -1.25) {};
		\node [style=none] (21) at (12.25, -0.25) {};
		\node [style=none] (22) at (14.25, -0.25) {};
		\node [style=bn] (23) at (13.25, -1.25) {};
		\node [style=none] (24) at (14.25, 1.5) {};
		\node [style=morphism] (25) at (14.25, 0) {$\Q$};
		\node [style=none] (26) at (12.25, 1.5) {};
		\node [style=medium large morphism] (27) at (13.25, 1.5) {$\P$};
		\node [style=none] (28) at (12.25, 1.75) {};
		\node [style=none] (29) at (14.25, 1.75) {};
		\node [style=none] (30) at (12.25, 3) {};
		\node [style=bn] (31) at (14.25, 2.5) {};
		\node [style=none] (32) at (13.25, -2.75) {$\tarS$};
		\node [style=none] (33) at (12.25, 3.5) {$\tarS$};
		\node [style=none] (34) at (11, 0) {$\coloneqq$};
		\node [style=morphism] (35) at (9.5, 0) {$\Pi$};
		\node [style=none] (36) at (9.5, 0.25) {};
		\node [style=none] (37) at (9.5, 1.5) {};
		\node [style=none] (38) at (9.5, -0.25) {};
		\node [style=none] (39) at (9.5, -1.5) {};
		\node [style=none] (40) at (9.5, 2) {$\tarS$};
		\node [style=none] (41) at (9.5, -2) {$\tarS$};
	\end{pgfonlayer}
	\begin{pgfonlayer}{edgelayer}
		\draw (1.center) to (2.center);
		\draw (5.center) to (4.center);
		\draw (9.center) to (10.center);
		\draw [bend left=45] (12.center) to (10.center);
		\draw [bend right=315] (10.center) to (11.center);
		\draw (12.center) to (14.center);
		\draw (15.center) to (11.center);
		\draw [bend left=45] (22.center) to (20.center);
		\draw [bend right=315] (20.center) to (21.center);
		\draw (19.center) to (20.center);
		\draw (24.center) to (22.center);
		\draw (21.center) to (26.center);
		\draw (29.center) to (31);
		\draw (28.center) to (30.center);
		\draw (39.center) to (38.center);
		\draw (36.center) to (37.center);
	\end{pgfonlayer}
\end{tikzpicture}
	\end{cdiag}
	If $\P$ is $\mu$-reversible, then $\Pi$ is $\pi$-reversible.
\end{corollary}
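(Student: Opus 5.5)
The plan is to obtain Corollary~\ref{cor:derived-reversible-morphism} as a direct instance of Proposition~\ref{prop:marginal_rev_new2} with $\augS \coloneqq \tarS \otimes \auxS$. Take $\R : \tarS \to \tarS \otimes \auxS$ to be the morphism that copies its input and applies $\Q$ to the second copy, $\R \coloneqq (\Id_\tarS \otimes \Q) \circ \cop_\tarS$. Take $\R^\dagger : \tarS \otimes \auxS \to \tarS$ to be the projection $\Id_\tarS \otimes \del_\auxS$ that discards the auxiliary component. With these choices the composite $\R^\dagger \circ \P \circ \R$ of \eqref{eq:Q_T_Q} is, read as a string diagram, exactly the right-hand side defining $\Pi$ in \eqref{eq:mu-definition-in-proof}. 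Moreover $\R \circ \pi$ is by definition $\mu$. So the hypothesis ``$\P$ is $\mu$-reversible'' is precisely ``$\P$ is $(\R\circ\pi)$-reversible''.

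It then remains to check that the projection $\R^\dagger$ is a Bayesian inverse of $\R$ with respect to $\pi$, i.e.\ that \eqref{diag:bayes_inv} holds. Compute the right-hand side of \eqref{diag:bayes_inv}:
\begin{enumerate}[(a)]
	\item Start from $\pi$, apply $\R$, and copy the resulting $\tarS \otimes \auxS$ wire.
	\item On one copy, apply $\R^\dagger$, which discards its $\auxS$ component.
	\item By coassociativity and cocommutativity of the copy maps, together with the compatibility of copying with $\otimes$, this equals: copy $\pi$ into three $\tarS$ wires, apply $\Q$ to one of them, and discard a further copy of the $\auxS$ output.
	\item Using counitality (discarding one branch of a copy), this reduces to $\pi$ copied twice, with $\Q$ applied to one copy.
\end{enumerate}
The result is the diagram for $\pi$ followed by $\cop_\tarS$ and then $\Id_\tarS \otimes \R$. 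This is the left-hand side of \eqref{diag:bayes_inv}, with output wires $\tarS$ and $\augS = \tarS\otimes\auxS$.

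Step (d) needs only counitality and the Markov axiom \eqref{eq:del_natural}; no determinism of $\Q$ is used, because the copy is taken before $\Q$ is applied. The one place needing care is the bookkeeping of the comonoid structure on $\tarS \otimes \auxS$, namely that $\cop_{\tarS\otimes\auxS}$ is given by the two copies interleaved with a swap. The main obstacle, such as it is, is therefore verifying the Bayesian inverse equation diagrammatically. Once that holds, Proposition~\ref{prop:marginal_rev_new2} yields that $\Pi$ is $\pi$-reversible.
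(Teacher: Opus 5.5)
Your proposal is correct and is essentially the paper's proof: the paper takes the same $\R$ (copy, then apply $\Q$ on the second branch) and the same $\R^\dagger$ (discard the $\auxS$ component), notes $\mu = \R\circ\pi$ and $\Pi = \R^\dagger\circ\P\circ\R$, and applies Proposition~\ref{prop:marginal_rev_new2} after remarking that the Bayesian-inverse property is straightforward. Your explicit check of that property is fine; the only slip is that it needs just coassociativity and counitality, so the appeal to the Markov axiom \eqref{eq:del_natural} in step (d) is unnecessary.
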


\begin{proof}
	By defining
	\begin{cdiag} \label{eq:noise-deletion}
		\begin{tikzpicture}
	\begin{pgfonlayer}{nodelayer}
		\node [style=none] (0) at (2.25, -0.5) {};
		\node [style=none] (1) at (1.25, 0.5) {};
		\node [style=none] (2) at (3.25, 0.5) {};
		\node [style=none] (3) at (2.25, -1.5) {};
		\node [style=none] (4) at (2.25, -2) {$\tarS$};
		\node [style=bn] (5) at (2.25, -0.5) {};
		\node [style=none] (6) at (3.25, 1.5) {};
		\node [style=none] (7) at (1.25, 1.5) {};
		\node [style=morphism] (8) at (3.25, 0.5) {$\Q$};
		\node [style=none] (9) at (1.25, 2) {$\tarS$};
		\node [style=none] (10) at (3.25, 2) {$\auxS$};
		\node [style=none] (11) at (12.5, -0.75) {};
		\node [style=none] (14) at (14, -0.75) {};
		\node [style=none] (15) at (12.5, -1.25) {$\tarS$};
		\node [style=none] (16) at (14, -1.25) {$\auxS$};
		\node [style=bn] (17) at (14, 0.25) {};
		\node [style=none] (19) at (12.5, 0.75) {};
		\node [style=none] (20) at (12.5, 1.25) {$\tarS$};
		\node [style=none] (21) at (0, 0) {$\coloneqq$};
		\node [style=medium box] (22) at (-2, 0) {$\R$};
		\node [style=none] (24) at (-2, -1.25) {};
		\node [style=none] (25) at (-2.5, 0.25) {};
		\node [style=none] (26) at (-1.5, 0.25) {};
		\node [style=none] (27) at (-2.5, 1.25) {};
		\node [style=none] (28) at (-1.5, 1.25) {};
		\node [style=none] (29) at (-2.5, 1.75) {$\tarS$};
		\node [style=none] (30) at (-1.5, 1.75) {$\auxS$};
		\node [style=none] (31) at (-2, -0.25) {};
		\node [style=none] (32) at (-2, -1.75) {$\tarS$};
		\node [style=none] (33) at (11, 0) {$\coloneqq$};
		\node [style=medium box] (34) at (9, 0) {$\R^\dagger$};
		\node [style=none] (35) at (8.5, -0.25) {};
		\node [style=none] (36) at (9.5, -0.25) {};
		\node [style=none] (37) at (8.5, -1.25) {};
		\node [style=none] (38) at (9.5, -1.25) {};
		\node [style=none] (39) at (8.5, -1.75) {$\tarS$};
		\node [style=none] (40) at (9.5, -1.75) {$\auxS$};
		\node [style=none] (41) at (9, 0.25) {};
		\node [style=none] (42) at (9, 1.25) {};
		\node [style=none] (43) at (9, 1.75) {$\tarS$};
	\end{pgfonlayer}
	\begin{pgfonlayer}{edgelayer}
		\draw [bend left=45] (2.center) to (0.center);
		\draw [bend left=45] (0.center) to (1.center);
		\draw (3.center) to (0.center);
		\draw (6.center) to (2.center);
		\draw (7.center) to (1.center);
		\draw (17) to (14.center);
		\draw (19.center) to (11.center);
		\draw (26.center) to (28.center);
		\draw (25.center) to (27.center);
		\draw (31.center) to (24.center);
		\draw (38.center) to (36.center);
		\draw (37.center) to (35.center);
		\draw (41.center) to (42.center);
	\end{pgfonlayer}
\end{tikzpicture}
	\end{cdiag}
	we have $\mu = \R \circ \pi$ and $\Pi = \R^\dagger \circ \P \circ \R$.
	It is straightforward to check that $\R^\dagger$ is a Bayesian inverse of $\R$ with respect to $\pi$, and so the result follows from Proposition \ref{prop:marginal_rev_new2}, with $\augS = \tarS \otimes \auxS$.
\end{proof}

\begin{example}\label{exa:MH}
	When $\C = \stoch$, and given an input $x \in \tarS$, a sampling procedure for \eqref{eq:mu-definition-in-proof} may be read off as follows:
	\[
        Z \sim \Q(x, \dif z) \qquad\qquad (X', Z') \sim \P((x, Z), (\dif x', \dif z')) \qquad\qquad \text{return $X'$}.
	\]
	When $\P$ is the kernel $\Pimh$ defined in \eqref{eq:mh-kernel}, this recovers the augmentation scheme from \cite[Algorithm 1]{Andrieu2020}.
	In turn, by choosing $\auxS$ and the involution $\phi$ in $\Pimh$ appropriately, \cite{Andrieu2020} show this recovers a wide variety of existing MH-type algorithms used in practice (see Appendix \ref{app:MCMC_involutions} for two well-known examples).
\end{example}

\section{The first Metropolis term in CD categories} \label{sec:CD_cats}

While Markov kernels are the ultimate object of interest in many computational statistics applications, it is often useful to consider intermediate constructions involving kernels that are not necessarily normalised.
For example, to show reversibility of the MH kernel \eqref{eq:mh-kernel}, the authors of \cite{Andrieu2020} initially consider just its first summand, i.e.\
\begin{equation}\label{eq:mh-first-summand}
	\alpha(\xi) \, \delta_{\phi(\xi)}(\dif \xi'),
\end{equation}
which does not necessarily integrate to one.
Later, they combine this with the second summand to show reversibility of the full Markov kernel.
We therefore move from Markov categories to the more general setting of \emph{copy-discard (CD) categories} \cite{Cho2019}, which permit the consideration of unnormalised kernels like \eqref{eq:mh-first-summand}.

\subsection{Recap: CD categories}

CD categories are essentially just Markov categories without the requirement that \eqref{eq:del_natural} holds.
More formally:

\begin{definition}[\cite{Cho2019}]\label{def:cd_cat}
	A \emph{copy-discard (CD) category} is a symmetric monoidal category $(\C, \otimes, \ind)$ where all objects are equipped commutative comonoid structure $(\cop_\tarS, \del_\tarS)$ that is suitably compatible with the monoidal structure (see Definition \ref{def:cd_cat_full} in the Appendix).
\end{definition}

\begin{example} \label{exa:sfkern}
	Given measurable spaces $\tarS$ and $\sndS$, a \emph{kernel} is a function $\K : \tarS \times \Sigma_{\sndS} \to [0, \infty]$ such that each $x \mapsto \K(x, A)$ is a measurable function and each $A \mapsto \K(x, A)$ is a measure.
	In other words, a kernel is a Markov kernel that need not integrate to one.
	Measurable spaces and kernels form a category $\Kern$, with composition and identities defined analogously to $\stoch$ (Example \ref{ex:stoch-defn}).
    (Some care is required to show that composition in $\Kern$ is associative; see Appendix \ref{app:kern} for details.)
	However, $\Kern$ is \textit{not} a CD category in an obvious way, since Fubini's theorem does not hold for general kernels, and so \eqref{eq:stoch-monoidal-prod} no longer defines a monoidal product in this setting \cite[Section 1.2]{Staton2017}.
	A solution is to consider \emph{$s$-finite} kernels instead \cite{Staton2017} (see also Remark~\ref{rem:sigma_finite} for why \emph{$\sigma$-finiteness} is not used).
	Recall that a kernel $\K$ is $s$-finite if it can be written as a countable sum $\K = \sum_{n=1}^\infty \K_i$ where each kernel $\K_n$ satisfies $\sup_{x \in \tarS} \K_n(x, \sndS) < \infty$ \cite[Page~56]{Kallenberg2021}.
	It is shown in \cite[Example 7.2]{Cho2019} that measurable spaces and $s$-finite kernels form a CD category that we denote by $\sfkern$, with its CD structure defined analogously to $\stoch$ (Example \ref{ex:stoch-defn}).
\end{example}

\begin{remark} \label{rem:normalised-morphisms}
	The axiom \eqref{eq:del_natural} that distinguishes CD categories from Markov categories may be regarded as a \emph{normalisation} condition.
	For example, in $\sfkern$, this condition holds for $\K : \tarS \to \sndS$ if and only if $\K(x, \sndS) = 1$ for all $x \in \tarS$.
	Accordingly, in a general CD category, we say that morphisms satisfying \eqref{eq:del_natural} are \emph{normalised}.\footnote{\cite[Definition 3.2]{Cho2019} refers to such morphisms as \emph{causal}.}
	Given any CD category $\C$, the objects of $\C$ and its normalised morphisms form a Markov category that we call $\Cnorm$ \cite[Section 7]{Cho2019}.
\end{remark}

In a CD category, morphisms of the form $\w : \tarS \to \ind$ play a special role.
Following \cite[Section 7]{Cho2019}, we refer to these morphisms as \emph{effects}, and depict them in string diagrams as
\begin{cdiag*}
	\begin{tikzpicture}
	\begin{pgfonlayer}{nodelayer}
		\node [style=none] (3) at (0, 0) {};
		\node [style=likelihood] (4) at (0, 0.25) {$\w$};
		\node [style=none] (5) at (0, -1) {};
		\node [style=none] (6) at (0, -1.5) {$\tarS$};
	\end{pgfonlayer}
	\begin{pgfonlayer}{edgelayer}
		\draw (3.center) to (5.center);
	\end{pgfonlayer}
\end{tikzpicture}

\end{cdiag*}

\begin{example} \label{exa:effects-sfkern}
	Effects in $\sfkern$ correspond to nonnegative measurable functions \cite[Example 7.2]{Cho2019}.
	Explicitly, since $\ind \cong \{\ast\}$ is the singleton space (Example \ref{exa:sfkern}), a function $\w : \tarS \times \Sigma_\ind \to [0, \infty]$ is a kernel if and only if $x \mapsto \w(x, \ind)$ is measurable.
	In what follows, we will identify an effect with its corresponding measurable function to streamline notation (so $\w(x, \ind)$ becomes simply $\w(x)$).
\end{example}

Effects recover \emph{Radon--Nikodym derivatives} from measure theory.
This is made precise via the following definition, which is a special case of the \emph{likelihoods} considered by \cite[Definition 8.1]{Cho2019}.

\begin{definition} \label{def:radon-nikodym}
	Let $\C$ be a CD category, and let $\pi, \mu: \ind \to \tarS$ and $\rx : \tarS \to \ind$ be morphisms in $\C$.
	We call $\rx$ a \emph{Radon--Nikodym derivative of $\pi$ with respect to $\mu$} and write $\rx = \frac{\dif \pi}{\dif \mu}$ if
	\begin{cdiag} \label{diag:IS_basic}
		\begin{tikzpicture}
	\begin{pgfonlayer}{nodelayer}
		\node [style=none] (21) at (0, -0.25) {$=$};
		\node [style=none] (22) at (4, 0.5) {};
		\node [style=none] (23) at (3, -0.5) {};
		\node [style=none] (24) at (2, 0.5) {};
		\node [style=none] (25) at (3, -1.25) {};
		\node [style=bn] (26) at (3, -0.5) {};
		\node [style=state] (27) at (3, -1.5) {$\mu$};
		\node [style=likelihood] (28) at (2, 0.75) {$\rx$};
		\node [style=none] (29) at (4, 1.5) {};
		\node [style=none] (30) at (4, 2) {$\tarS$};
		\node [style=state] (31) at (-2, -0.5) {$\pi$};
		\node [style=none] (32) at (-2, -0.25) {};
		\node [style=none] (33) at (-2, 0.5) {};
		\node [style=none] (34) at (-2, 1) {$\tarS$};
	\end{pgfonlayer}
	\begin{pgfonlayer}{edgelayer}
		\draw [bend right=315] (23.center) to (24.center);
		\draw [bend left=45] (22.center) to (23.center);
		\draw (23.center) to (25.center);
		\draw (22.center) to (29.center);
		\draw (33.center) to (32.center);
	\end{pgfonlayer}
\end{tikzpicture}
	\end{cdiag}
\end{definition}

\begin{example} \label{exa:radon-nikodym}
	Consider Definition \ref{def:radon-nikodym} in the case of $\sfkern$.
	The condition \eqref{diag:IS_basic} holds if and only if
	$
		\pi(A) = \int_{A} \rx(x) \, \mu(\dif x)
	$
	for all measurable $A \subseteq \tarS$.
	This says precisely that $\rx$ is a Radon--Nikodym derivative of $\pi$ with respect to $\mu$ in the usual measure-theoretic sense.
\end{example}

\begin{remark} \label{rem:CD-cat-Markov-cat}
	In Section \ref{sec:Markov-cats}, we gave definitions of \textit{determinism}, \textit{invariance}, and \textit{reversibility} in the setting of Markov categories.
	In what follows, we will import these concepts directly into the setting of CD categories, noting that they continue to make sense without the assumption that all morphisms are normalised.
	For example, when we speak of a deterministic morphism in a CD category, we simply mean that the equation \eqref{eq:det_phi} holds.\footnote{\cite{fritz2023weakly} describes such a morphism as \emph{copyable}, and defines ``deterministic'' to mean \emph{both} copyable and normalised.}
	We will also import Propositions \ref{prop:inv_rev} and \ref{prop:pi-s-reversibility-condition}, whose proofs do not require normalisation either.
\end{remark}

\subsection{The first summand of Metropolis--Hastings}\label{subsec:Synthetic-MH}

Recall that our overall goal is to understand when $\Pimh$ defined in \eqref{eq:mh-kernel} is reversible with respect to a given target measure $\mu$.
In the proof given by Andrieu et al.\ \cite[Theorem 3]{Andrieu2020}, considerable effort is spent showing $\mu$-reversibility of the first summand \eqref{eq:mh-first-summand}. %
Their argument requires manipulating chains of integrals that do not necessarily shed light on the key underlying structure at play.

CD categories allow us to reason about the summand \eqref{eq:mh-first-summand} in a more conceptual way.
In string diagrams, this kernel may be written as follows:
\begin{cdiag} \label{eq:mh-first-term}
	\begin{tikzpicture}
	\begin{pgfonlayer}{nodelayer}
		\node [style=none] (0) at (-3, -0.75) {};
		\node [style=none] (1) at (-4, 0.25) {};
		\node [style=none] (2) at (-2, 0.25) {};
		\node [style=likelihood] (3) at (-4, 0.5) {$\alpha$};
		\node [style=none] (4) at (-3, -1.75) {};
		\node [style=bn] (5) at (-3, -0.75) {};
		\node [style=none] (6) at (-3, -2.25) {$\augS$};
		\node [style=none] (7) at (-2, 1.75) {};
		\node [style=morphism] (8) at (-2, 0.5) {$\phi$};
		\node [style=none] (9) at (-2, 2.25) {$\augS$};
	\end{pgfonlayer}
	\begin{pgfonlayer}{edgelayer}
		\draw [bend left=45] (0.center) to (1.center);
		\draw [bend right=45] (0.center) to (2.center);
		\draw (4.center) to (0.center);
		\draw (7.center) to (2.center);
	\end{pgfonlayer}
\end{tikzpicture}
\end{cdiag}
CD categories also admit the following extension of Proposition \ref{prop:inv_rev}.
Notice this requires Radon--Nikodym derivatives (Definition \ref{def:radon-nikodym}), and so is not directly available in the Markov category setting.

\begin{proposition} \label{prop:reversibility-up-to-reweighting}
	Suppose $\mu : \ind \to \augS$ and $\phi : \augS \to \augS$ are morphisms in a CD category $\C$, where $\phi$ is a deterministic involution.
	If a Radon--Nikodym derivative $r = \frac{\dif (\phi \circ \mu)}{\dif \mu}$ exists, then
	\begin{cdiag} \label{eq:reversibility-up-to-reweighting}
		\begin{tikzpicture}
	\begin{pgfonlayer}{nodelayer}
		\node [style=state] (3) at (-4, -2.75) {$\mu$};
		\node [style=morphism] (12) at (-6, 0.75) {$\phi$};
		\node [style=none] (16) at (-2.5, 2.25) {};
		\node [style=likelihood] (21) at (-4, 1.25) {$\rx$};
		\node [style=none] (23) at (-5, -0.75) {};
		\node [style=none] (24) at (-4, -1.75) {};
		\node [style=state] (29) at (-11, -2.25) {$\mu$};
		\node [style=none] (40) at (-8, 0) {$=$};
		\node [style=none] (41) at (-10, 1.75) {};
		\node [style=none] (42) at (-10, 2.25) {$\augS$};
		\node [style=none] (43) at (-2.5, 2.75) {$\augS$};
		\node [style=none] (44) at (-12, 0) {};
		\node [style=none] (45) at (-10, 0) {};
		\node [style=none] (46) at (-11, -1) {};
		\node [style=morphism] (47) at (-10, 0.5) {$\phi$};
		\node [style=bn] (48) at (-11, -1) {};
		\node [style=none] (49) at (-12, 1.75) {};
		\node [style=none] (50) at (-12, 2.25) {$\augS$};
		\node [style=none] (51) at (-2.5, -0.25) {};
		\node [style=none] (52) at (-6, 0.25) {};
		\node [style=none] (53) at (-4, 0.25) {};
		\node [style=bn] (54) at (-5, -0.75) {};
		\node [style=bn] (55) at (-4, -1.75) {};
		\node [style=none] (56) at (-6, 1) {};
		\node [style=none] (57) at (-6, 2.25) {};
		\node [style=none] (58) at (-6, 2.75) {$\augS$};
	\end{pgfonlayer}
	\begin{pgfonlayer}{edgelayer}
		\draw [in=-180, out=-90] (23.center) to (24.center);
		\draw (24.center) to (3);
		\draw (46.center) to (29);
		\draw [bend right=315] (46.center) to (44.center);
		\draw [bend right=45] (46.center) to (45.center);
		\draw (41.center) to (45.center);
		\draw (49.center) to (44.center);
		\draw [bend left=45] (51.center) to (24.center);
		\draw (51.center) to (16.center);
		\draw [bend right=45] (52.center) to (23.center);
		\draw [bend left=45] (53.center) to (23.center);
		\draw (53.center) to (21);
		\draw (52.center) to (12);
		\draw (57.center) to (56.center);
	\end{pgfonlayer}
\end{tikzpicture}
	\end{cdiag}
	In other words, $\phi$ is $\mu$-reversible ``up to a correction factor of $\rx$''.
\end{proposition}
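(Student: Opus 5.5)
The plan is to mirror the proof of Proposition~\ref{prop:inv_rev}, replacing the invariance step $\phi \circ \mu = \mu$ with the Radon--Nikodym identity \eqref{diag:IS_basic} for $r = \frac{\dif (\phi \circ \mu)}{\dif \mu}$. Read the left-hand side of \eqref{eq:reversibility-up-to-reweighting} as $(\Id_\augS \otimes \phi) \circ \cop_\augS \circ \mu$. Read the right-hand side as follows: copy $\mu$; copy the left branch once more; apply $\phi$ to one of these copies and $r$ to the other; leave the rightmost wire untouched.

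First, use the involution property $\phi \circ \phi = \Id_\augS$ to insert $\phi \circ \phi$ on the left output wire. The left-hand side then becomes $(\phi \otimes \Id_\augS) \circ (\phi \otimes \phi) \circ \cop_\augS \circ \mu$. Second, apply determinism \eqref{eq:det_phi} of $\phi$, read right-to-left, to merge $(\phi \otimes \phi) \circ \cop_\augS$ into $\cop_\augS \circ \phi$. This gives $(\phi \otimes \Id_\augS) \circ \cop_\augS \circ (\phi \circ \mu)$. These two steps are exactly the first two steps of \eqref{diag:mu_Phi_inv_pf}, and neither needs normalisation, as noted in Remark~\ref{rem:CD-cat-Markov-cat}.

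Third, instead of invoking invariance, substitute $\phi \circ \mu$ by the right-hand side of \eqref{diag:IS_basic}. That is, replace it with $\mu$ copied, with $r$ applied to one copy and the other copy kept. We now have $\mu$, copied; $r$ on one branch; the surviving branch copied again; and $\phi$ on the left output of that second copy.

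Finally, use coassociativity and cocommutativity of $\cop_\augS$ to rearrange the two copy nodes. The aim is to have the first copy split off the rightmost (untouched) output, and the second copy feed $\phi$ and $r$. This is precisely the shape of the right-hand side of \eqref{eq:reversibility-up-to-reweighting}.

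The only real care needed is in this last bookkeeping step. One must check that the output which receives $\phi$ is the left one and the bare output is the right one, matching the displayed diagram. This works because the wire carrying $\phi$ originated on the left of the original copy and the plain wire on the right; the comonoid axioms then allow the $r$-branch to be regrouped with the $\phi$-branch.
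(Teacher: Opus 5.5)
Your proposal is correct and matches the paper's proof. The paper likewise reruns the involution and determinism steps of \eqref{diag:mu_Phi_inv_pf} and replaces the invariance step with the Radon--Nikodym identity \eqref{diag:IS_basic}, leaving implicit the final coassociativity/cocommutativity regrouping that you spell out (and your tracking of which output carries $\phi$ is right).
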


\begin{proof}
	This follows by the same argument as in \eqref{diag:mu_Phi_inv_pf}, but now using the Radon--Nikodym derivative $r$ in the third step.
\end{proof}

We can now give necessary and sufficient conditions for reversibility of the summand \eqref{eq:mh-first-summand} entirely in terms of string diagrams.

\begin{theorem} \label{thm:mh-categorical}
	Suppose $\mu : \ind \to \augS$, $\phi : \augS \to \augS$, and $\alpha, \rx : \augS \to \ind$ are morphisms in a CD category $\C$, where $\phi$ is a deterministic involution and $\rx = \frac{\dif (\phi \circ \mu)}{\dif \mu}$.
	Then \eqref{eq:mh-first-term} is $\mu$-reversible if and only if
	\begin{cdiag} \label{eq:algebraic-balancing-condition}
		\begin{tikzpicture}
	\begin{pgfonlayer}{nodelayer}
		\node [style=bn] (28) at (-11, -1.25) {};
		\node [style=state] (29) at (-11, -2.25) {$\mu$};
		\node [style=likelihood] (34) at (-12, 0.25) {$\alpha$};
		\node [style=none] (40) at (-8, -0.25) {$=$};
		\node [style=none] (41) at (-10, 1.75) {};
		\node [style=none] (42) at (-10, 2.25) {$\augS$};
		\node [style=state] (43) at (-4, -3) {$\mu$};
		\node [style=morphism] (44) at (-6, 0.25) {$\phi$};
		\node [style=none] (45) at (-2.5, 2.5) {};
		\node [style=likelihood] (46) at (-4, 1.75) {$\rx$};
		\node [style=none] (47) at (-5, -1) {};
		\node [style=none] (48) at (-4, -2) {};
		\node [style=none] (49) at (-2.5, 3) {$\augS$};
		\node [style=none] (50) at (-2.5, -0.5) {};
		\node [style=none] (51) at (-6, 0) {};
		\node [style=none] (52) at (-4, 0) {};
		\node [style=bn] (53) at (-5, -1) {};
		\node [style=bn] (54) at (-4, -2) {};
		\node [style=none] (55) at (-6, 0.75) {};
		\node [style=none] (56) at (-6, 1.5) {};
		\node [style=likelihood] (57) at (-6, 1.75) {$\alpha$};
	\end{pgfonlayer}
	\begin{pgfonlayer}{edgelayer}
		\draw (28) to (29);
		\draw [in=-180, out=-90] (34) to (28);
		\draw [in=0, out=-90, looseness=0.75] (41.center) to (28);
		\draw [in=-180, out=-90] (47.center) to (48.center);
		\draw (48.center) to (43);
		\draw [bend left=45] (50.center) to (48.center);
		\draw (50.center) to (45.center);
		\draw [bend right=45] (51.center) to (47.center);
		\draw [bend left=45] (52.center) to (47.center);
		\draw (52.center) to (46);
		\draw (51.center) to (44);
		\draw (56.center) to (55.center);
	\end{pgfonlayer}
\end{tikzpicture}
	\end{cdiag}
\end{theorem}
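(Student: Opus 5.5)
The plan is to read both sides of \eqref{eq:algebraic-balancing-condition} as ``reweighted'' versions of $\mu$, and to reduce $\mu$-reversibility of \eqref{eq:mh-first-term} to invariance of one such reweighted state under $\phi$.

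Write $\nu : \ind \to \augS$ for the left-hand side of \eqref{eq:algebraic-balancing-condition}. It is $\mu$ followed by a copy, with $\alpha$ plugged into one branch; in $\sfkern$ this is $\alpha(\xi)\,\mu(\dif\xi)$. By coassociativity of copying, the left-hand side of the reversibility equation \eqref{eq:reversibility}, with $\P$ given by \eqref{eq:mh-first-term}, can be redrawn as $(\Id_\augS \otimes \phi) \circ \cop_\augS \circ \nu$. The whole $\alpha$-weighting is absorbed into $\nu$, leaving a bare copy followed by $\phi$ on the right wire.

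The key step is to rewrite the swapped diagram $\mathrm{swap}\circ(\Id\otimes\phi)\circ\cop\circ\nu = (\phi\otimes\Id)\circ\cop\circ\nu$ in the same shape. This mirrors the first two steps of \eqref{diag:mu_Phi_inv_pf}. First insert $\phi\circ\phi = \Id$ on the right wire. Then use determinism \eqref{eq:det_phi} to pull one copy of $\phi$ below the copy node. This gives $(\Id\otimes\phi)\circ\cop\circ(\phi\circ\nu)$.

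Hence \eqref{eq:mh-first-term} is $\mu$-reversible if and only if
\[
	(\Id\otimes\phi)\circ\cop\circ\nu=(\Id\otimes\phi)\circ\cop\circ(\phi\circ\nu).
\]
Since $\Id\otimes\phi$ is an involution, hence invertible, this is equivalent to $\cop\circ\nu=\cop\circ\phi\circ\nu$. That in turn is equivalent to $\nu=\phi\circ\nu$. For the forward direction, discard one output and use the counit law. For the converse, post-compose with $\cop$.

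It remains to identify $\phi\circ\nu$ with the right-hand side of \eqref{eq:algebraic-balancing-condition}. In $\phi\circ\nu$, the morphism $\phi$ sits on the output wire after the copy that feeds $\alpha$. Pre-compose $\alpha$ with $\phi\circ\phi=\Id$ and apply determinism again, so that $\phi$ moves below the copy node. The diagram becomes $\phi\circ\mu$, copied, with $\alpha\circ\phi$ on one branch. Now substitute $\phi\circ\mu$ using the Radon--Nikodym equation \eqref{diag:IS_basic}: it equals $\mu$ copied with $\rx$ on one branch. Rearranging the copies by coassociativity and commutativity yields exactly the right-hand diagram, with $\mu$ copied three ways: one wire to $\rx$, one through $\phi$ into $\alpha$, and one output. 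Measure-theoretically, this is the statement $\phi\circ(\alpha\,\mu) = (\alpha\circ\phi)\,\rx\,\mu$.

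The main obstacle I expect is bookkeeping rather than ideas. One must check carefully that determinism lets $\phi$ pass through the copy node, and that normalisation is never used, since we are in a general CD category and \eqref{eq:del_natural} is unavailable. Discarding is applied only to identity wires, where the counit law holds in any CD category. Determinism is applied only to $\phi$, which is assumed deterministic in the CD sense of Remark~\ref{rem:CD-cat-Markov-cat}. So the argument should go through in full generality.
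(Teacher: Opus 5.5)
Your proposal is correct and follows the paper's route. You reduce to $\phi$ being reversible with respect to the reweighted state $\nu$, then to invariance $\phi\circ\nu=\nu$ via the argument of Proposition~\ref{prop:inv_rev}, and finally identify $\phi\circ\nu$ with the right-hand side using the involution property, determinism and the Radon--Nikodym equation, which is the content of Proposition~\ref{prop:reversibility-up-to-reweighting}. The only difference is that you prove these two propositions inline, and in doing so you make explicit the ``reversible implies invariant'' direction (via invertibility of $\Id\otimes\phi$ and the counit law), which the paper leaves implicit when it cites Proposition~\ref{prop:inv_rev}.
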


\begin{proof}
	By basic string diagram manipulations, \eqref{eq:mh-first-term} is $\mu$-reversible if and only $\phi$ is reversible with respect to the left-hand side of \eqref{eq:algebraic-balancing-condition}.
	By Proposition~\ref{prop:inv_rev}, this holds if and only if $\phi$ leaves this morphism invariant, namely:
	\begin{cdiag} \label{eq:mh-proof-1}
		\begin{tikzpicture}
	\begin{pgfonlayer}{nodelayer}
		\node [style=none] (0) at (4.5, 0.5) {};
		\node [style=state] (1) at (3.25, -2.25) {$\mu$};
		\node [style=bn] (2) at (3.25, -1.25) {};
		\node [style=likelihood] (3) at (2, 0) {$\alpha$};
		\node [style=none] (4) at (0, 0) {$=$};
		\node [style=morphism] (5) at (4.5, 0.5) {$\phi$};
		\node [style=none] (6) at (4.5, 2) {};
		\node [style=none] (7) at (4.5, 2.5) {$\augS$};
		\node [style=bn] (15) at (-3, -1.25) {};
		\node [style=state] (16) at (-3, -2.25) {$\mu$};
		\node [style=likelihood] (17) at (-4, 0.25) {$\alpha$};
		\node [style=none] (18) at (-2, 1.75) {};
		\node [style=none] (19) at (-2, 2.25) {$\augS$};
	\end{pgfonlayer}
	\begin{pgfonlayer}{edgelayer}
		\draw [in=180, out=-90] (3) to (2);
		\draw (6.center) to (5);
		\draw (2) to (1);
		\draw [in=-90, out=0] (2) to (5);
		\draw (15) to (16);
		\draw [in=-180, out=-90] (17) to (15);
		\draw [in=0, out=-90, looseness=0.75] (18.center) to (15);
	\end{pgfonlayer}
\end{tikzpicture}
	\end{cdiag}
	By postcomposing both sides of \eqref{eq:reversibility-up-to-reweighting} with $\alpha\otimes \Id_\augS$, Proposition \ref{prop:reversibility-up-to-reweighting} now shows that \eqref{eq:mh-proof-1} is equivalent to \eqref{eq:algebraic-balancing-condition}.
\end{proof}

\section{Semiadditive CD categories} \label{sec:CMon}

In the previous section, we considered the first summand of the Metropolis--Hastings kernel \eqref{eq:mh-kernel}.
However, we did not have a way to reason about the full kernel $\Pimh$. %
One obvious difficulty is that $\Pimh$ is written as a \emph{sum} of two morphisms, which is not something we can express in a general CD category.
It is also unclear how to express other concepts required for Theorem~\ref{thm:andrieu2020}, such as the rejection probability $1 - \alpha(\tarS)$, or the decomposition of $\augS$ into a region $S$ on which the target and its pushforward under $\phi$ are equivalent measures.

To proceed, we therefore consider a natural \emph{enrichment} \cite{kelly1982basic} of CD categories.
Specifically, we consider CD categories whose hom-sets $\C(\tarS, \sndS)$ are equipped with the structure of a \emph{commutative monoid}, which allows us to reason about ``sums'' of morphisms in an abstract way.
Enrichment over commutative monoids arises naturally in a range of categorical settings.
Previous work \cite{gogioso2018categorical, tull2020categorical} has used this enrichment to model summation of processes in classical probabilistic and quantum settings.
Related enrichments have also appeared in the context of effectus theory, which considers enrichment over \emph{partial} commutative monoids \cite{cho2015introduction}, as well as in cartesian differential categories \cite{blute2009cartesian}, which may be characterised as \emph{skew enriched} over commutative monoids \cite{garner2021cartesian}.
In this work, we show that this simple enrichment captures a surprising number of concepts from classical probability theory, allowing us to reason about the full MH kernel and to give a self-contained reversibility proof that recovers Theorem \ref{thm:andrieu2020} in $\sfkern$.
The enrichment also interacts well with string diagrams, enabling intuitive graphical reasoning (see Remark \ref{rem:diagrammatic-reasoning-semiadditive}).

\subsection{Basic definitions}

Recall that a \emph{commutative monoid} is a set $\M$ equipped with a commutative and associative binary operation, and a distinguished unit element.
In other words, a commutative monoid is an abelian group without the requirement of inverses.
Standard examples include $(\N, +, 0)$ and $(\N, \times, 1)$, the natural numbers under addition and multiplication respectively. 
A \emph{homomorphism} between commutative monoids $\M$ and $\M'$ is a function $\varphi : \M \to \M'$ that preserves the binary operation and unit element.
Commutative monoids and homomorphisms together form a category called $\CMon$.

\begin{definition} \label{def:CMon-enrichment}
	A \emph{$\CMon$-enrichment} of a category $\C$ is a choice of commutative monoid structure $(+, 0)$ for each hom-set $\C(\tarS, \sndS)$ that makes composition of morphisms bilinear as follows:
	\begin{equation} \label{eq:bilinearity-of-composition}
		\begin{aligned}
			\P \circ 0 = 0& \qquad\qquad \P \circ (\Q + \R) = \P \circ \Q + \P \circ \R \\
			0 \circ \P = 0& \qquad\qquad (\Q + \R) \circ \P = \Q \circ \P + \R \circ \P.
		\end{aligned}
	\end{equation}
	When $\C$ has a monoidal product $\otimes$, we say that a $\CMon$-enrichment is \emph{monoidal} if the monoidal product is also bilinear:
	\begin{equation} \label{eq:monoidal-enrichment} 
		\begin{aligned}
			\P \otimes 0 = 0 &\qquad\qquad \P \otimes (\Q + \R) = (\P \otimes \Q) + (\P \otimes \R) \\
			0 \otimes \P = 0 &\qquad\qquad (\Q + \R) \otimes \P = (\Q \otimes \P) + (\R \otimes \P).
		\end{aligned}
	\end{equation}
	Here the morphisms in \eqref{eq:bilinearity-of-composition} and \eqref{eq:monoidal-enrichment} range over all choices that make the expressions typecheck.
	Note also that we are using the same symbols $+$ and $0$ to denote the addition operation and zero maps across different hom-sets.
\end{definition}

We are especially interested in $\CMon$-enrichments of CD categories.
This leads to the following definition.

\begin{definition}
	A CD category is \emph{semiadditive}\footnote{Note that our terminology does not imply the existence of \emph{biproducts}, as is sometimes the case in the literature.}
	if it is equipped with a monoidal $\CMon$-enrichment.
\end{definition}

We emphasise that for what follows, $\CMon$-enrichment is the only additional structure we will require beyond that of a vanilla CD category.
In particular, all the definitions and results below make sense without additional machinery beyond this.

\begin{remark} \label{rem:diagrammatic-reasoning-semiadditive}
	The combination of \eqref{eq:bilinearity-of-composition} and \eqref{eq:monoidal-enrichment} has a useful interpretation in terms of string diagrams.
	The conditions involving zero maps mean that zero maps \emph{annihilate} string diagrams: if any morphism of a string diagram is zero, then the entire diagram is also zero, regardless of the other morphisms it contains.
	The remaining additivity conditions mean that if we are given two string diagrams that are identical apart from one specific subdiagram, then their sum can be computed by replacing that subdiagram with the sum of those two subdiagrams.
	For example:
	\begin{cdiag*} %
		\begin{tikzpicture}
	\begin{pgfonlayer}{nodelayer}
		\node [style=none] (0) at (-4, -1.25) {};
		\node [style=none] (1) at (-5, -0.25) {};
		\node [style=none] (2) at (-3, -0.25) {};
		\node [style=none] (3) at (-4, -3.25) {};
		\node [style=none] (4) at (-5, 3.25) {};
		\node [style=none] (5) at (-3, 3.25) {};
		\node [style=morphism] (6) at (-5, 0.5) {$\R$};
		\node [style=morphism] (7) at (-4, -2.25) {$\P$};
		\node [style=bn] (8) at (-4, -1.25) {};
		\node [style=morphism] (9) at (-5, 1.75) {$\S$};
		\node [style=morphism] (10) at (-3, 1.25) {$\Q$};
		\node [style=none] (11) at (-5, 3.75) {$\sndS$};
		\node [style=none] (12) at (-3, 3.75) {$\auxS$};
		\node [style=none] (13) at (-4, -3.75) {$\tarS$};
		\node [style=none] (14) at (-1.75, 0) {$+$};
		\node [style=none] (15) at (1.75, -0.75) {};
		\node [style=none] (16) at (0.75, 0.25) {};
		\node [style=none] (17) at (2.75, 0.25) {};
		\node [style=none] (18) at (1.75, -2.75) {};
		\node [style=none] (19) at (0.75, 3) {};
		\node [style=none] (20) at (2.75, 3) {};
		\node [style=morphism] (21) at (0.75, 1.25) {$\T$};
		\node [style=morphism] (22) at (1.75, -1.75) {$\P$};
		\node [style=bn] (23) at (1.75, -0.75) {};
		\node [style=morphism] (25) at (2.75, 1.25) {$\Q$};
		\node [style=none] (26) at (0.75, 3.5) {$\sndS$};
		\node [style=none] (27) at (2.75, 3.5) {$\auxS$};
		\node [style=none] (28) at (1.75, -3.25) {$\tarS$};
		\node [style=none] (29) at (4.25, 0) {$=$};
		\node [style=none] (30) at (9.25, -2) {};
		\node [style=none] (31) at (8.25, -1) {};
		\node [style=none] (32) at (10.25, -1) {};
		\node [style=none] (33) at (9.25, -4) {};
		\node [style=none] (35) at (10.25, 3.75) {};
		\node [style=morphism] (37) at (9.25, -3) {$\P$};
		\node [style=bn] (38) at (9.25, -2) {};
		\node [style=morphism] (40) at (10.25, 1.25) {$\Q$};
		\node [style=none] (41) at (8.25, 4.25) {$\sndS$};
		\node [style=none] (42) at (10.25, 4.25) {$\auxS$};
		\node [style=none] (43) at (9.25, -4.5) {$\tarS$};
		\node [style=none] (44) at (6, 3.25) {};
		\node [style=none] (45) at (6, -1) {};
		\node [style=none] (46) at (9.5, -1) {};
		\node [style=none] (47) at (9.5, 3.25) {};
		\node [style=morphism] (48) at (6.75, 1.75) {$\S$};
		\node [style=morphism] (49) at (6.75, 0.5) {$\R$};
		\node [style=morphism] (50) at (8.75, 1.25) {$\T$};
		\node [style=none] (51) at (7.75, 1.25) {$+$};
		\node [style=none] (52) at (6.75, 2.5) {};
		\node [style=none] (53) at (6.75, -0.25) {};
		\node [style=none] (54) at (8.75, 2) {};
		\node [style=none] (55) at (8.75, 0.25) {};
		\node [style=none] (56) at (8.25, 3.25) {};
		\node [style=none] (57) at (8.25, 3.75) {};
		\node [style=none] (58) at (6.75, -0.5) {$\U$};
		\node [style=none] (59) at (6.75, 2.75) {$\sndS$};
		\node [style=none] (60) at (8.75, -0.25) {$\U$};
		\node [style=none] (61) at (8.75, 2.5) {$\sndS$};
		\node [style=none] (62) at (-5.25, -1.25) {$\U$};
		\node [style=none] (63) at (0.5, -0.75) {$\U$};
		\node [style=none] (64) at (-6, 2.75) {};
		\node [style=none] (65) at (-4, 2.75) {};
		\node [style=none] (66) at (-4, -0.75) {};
		\node [style=none] (67) at (-6, -0.75) {};
		\node [style=none] (68) at (-0.25, 2.5) {};
		\node [style=none] (69) at (-0.25, 0) {};
		\node [style=none] (70) at (1.75, 0) {};
		\node [style=none] (71) at (1.75, 2.5) {};
		\node [style=none] (72) at (8, -2) {$\U$};
	\end{pgfonlayer}
	\begin{pgfonlayer}{edgelayer}
		\draw [bend left=45] (2.center) to (0.center);
		\draw [bend left=45] (0.center) to (1.center);
		\draw (3.center) to (0.center);
		\draw (4.center) to (1.center);
		\draw (5.center) to (2.center);
		\draw [bend left=45] (17.center) to (15.center);
		\draw [bend left=45] (15.center) to (16.center);
		\draw (18.center) to (15.center);
		\draw (19.center) to (16.center);
		\draw (20.center) to (17.center);
		\draw [bend left=45] (32.center) to (30.center);
		\draw [bend left=45] (30.center) to (31.center);
		\draw (33.center) to (30.center);
		\draw (35.center) to (32.center);
		\draw [style=dashed box] (47.center) to (44.center);
		\draw [style=dashed box] (44.center) to (45.center);
		\draw [style=dashed box] (45.center) to (46.center);
		\draw [style=dashed box] (46.center) to (47.center);
		\draw (53.center) to (52.center);
		\draw (55.center) to (54.center);
		\draw (57.center) to (56.center);
		\draw [style=dashed box] (67.center) to (64.center);
		\draw [style=dashed box] (64.center) to (65.center);
		\draw [style=dashed box] (65.center) to (66.center);
		\draw [style=dashed box] (66.center) to (67.center);
		\draw [style=dashed box] (70.center) to (69.center);
		\draw [style=dashed box] (69.center) to (68.center);
		\draw [style=dashed box] (68.center) to (71.center);
		\draw [style=dashed box] (71.center) to (70.center);
	\end{pgfonlayer}
\end{tikzpicture}
	\end{cdiag*}
	where here the dashed boxes indicate the subdiagrams that differ.
	This provides a convenient tool for reasoning about semiadditive CD categories that we will make use of below.
\end{remark}

Categories of kernels (Example \ref{exa:sfkern}) admit natural enrichments over commutative monoids as follows.

\begin{proposition}\label{prop:CMon-sfkern}
	Let $\tarS$ and $\sndS$ be measurable spaces, and define the zero kernel $0 : \tarS \to \sndS$ and the sum $\P + \Q : \tarS \to \sndS$ of kernels $\P, \Q : \tarS \to \sndS$ respectively as follows:
	\begin{align*}
		0(x, \dif y) &\coloneqq 0 \\
		(\P + \Q)(x, \dif y) &\coloneqq \P(x, \dif y) + \Q(x, \dif y).
	\end{align*}
	This constitutes a $\CMon$-enrichment of $\Kern$, and a monoidal $\CMon$-enrichment of $\sfkern$ (so in particular $\sfkern$ is semiadditive).
\end{proposition}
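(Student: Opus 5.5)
The plan is to verify the claims directly from the definitions in three stages.

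\emph{Well-definedness.} First, the zero kernel and pointwise sums are kernels. For each $x$, $A \mapsto \P(x,A) + \Q(x,A)$ is a measure, since countable additivity is preserved by addition in $[0,\infty]$. For each $A$, $x \mapsto \P(x,A)+\Q(x,A)$ is measurable, being a sum of measurable $[0,\infty]$-valued functions. The commutative monoid axioms hold pointwise because they hold in $([0,\infty],+,0)$. For $\sfkern$ we must also check closure under the operations. The zero kernel is trivially finite. If $\P = \sum_n \P_n$ and $\Q = \sum_n \Q_n$ with each summand uniformly bounded, then interleaving the two sequences writes $\P + \Q$ as a countable sum of bounded kernels, so $\P+\Q$ is $s$-finite.

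\emph{Bilinearity of composition.} Using the Chapman--Kolmogorov formula, the zero laws are immediate, since integrating against the zero measure, or integrating the zero function, gives $0$. For $(\Q + \R)\circ\P$ we have
\[
\int \P(x,\dif y)\,(\Q(y,B)+\R(y,B)) = \int \P(x,\dif y)\,\Q(y,B) + \int \P(x,\dif y)\,\R(y,B),
\]
by linearity of the Lebesgue integral for nonnegative measurable functions. For $\P\circ(\Q+\R)$ we need
\[
\int (\Q+\R)(x,\dif y)\,\P(y,B) = \int \Q(x,\dif y)\,\P(y,B) + \int \R(x,\dif y)\,\P(y,B).
\]
This is linearity of integration in the measure. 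It holds for indicator functions by definition, then for simple functions, and then for nonnegative measurable functions by monotone convergence. This step does not need $s$-finiteness, so it gives the $\CMon$-enrichment of $\Kern$.

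\emph{Bilinearity of $\otimes$ in $\sfkern$.} This is the step I expect to require the most care, because $(\R\otimes\Q)$ is only specified on rectangles by \eqref{eq:stoch-monoidal-prod}, and its extension to the product $\sigma$-algebra must be handled. My approach is to use the explicit formula valid for $s$-finite kernels:
\[
(\R\otimes\Q)((x,y),C) = \int \R(x,\dif a)\int \Q(y,\dif b)\,\mathbf{1}_C(a,b).
\]
Given this formula, additivity in $\Q$ follows by applying the previous linearity argument to the inner integral. Additivity in $\R$ follows by applying it to the outer integral, since the inner integral is a measurable function of $a$. The zero laws are again immediate.

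If instead one wants to argue only from the rectangle definition, the route is to observe that $(\R\otimes(\Q+\Q'))$ and $(\R\otimes\Q)+(\R\otimes\Q')$ are $s$-finite measures in each argument that agree on rectangles. One then invokes uniqueness of the extension. Because these measures may be infinite, $\pi$-$\lambda$ uniqueness requires care, so I would prefer the iterated-integral formula, which appears in \cite{Staton2017,Cho2019} as the definition of $\otimes$ on $\sfkern$ and which Fubini--Tonelli for $s$-finite kernels makes independent of the order of integration. Combining the three stages gives both claims of the proposition.
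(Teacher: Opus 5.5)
Your proposal is correct and takes essentially the paper's approach. The paper simply asserts that the monoid axioms and the bilinearity conditions \eqref{eq:bilinearity-of-composition} and \eqref{eq:monoidal-enrichment} follow by direct checking because $[0,\infty]$ is a commutative monoid; you supply the details it leaves implicit (closure of $s$-finite kernels under sums, linearity of integration in the measure via simple functions and monotone convergence, and additivity of the iterated integral in each factor), and your preference for the iterated-integral definition of $\otimes$ over the rectangle formula is well-founded, since for $s$-finite but non-$\sigma$-finite measures the values on rectangles need not determine a unique product measure.
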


\begin{proof}
	This follows straightforwardly by checking the commutative monoid axioms and the compatibility conditions \eqref{eq:bilinearity-of-composition} and \eqref{eq:monoidal-enrichment}.
	The key point is that kernels take values in $[0, \infty]$, which itself is a commutative monoid under addition.
\end{proof}

\begin{remark} \label{rem:effect-multiplication}
	Let $\C$ be a CD category.
	As observed by \cite[Section 3.1]{fritz2023weakly}, for any object $\tarS$, the effects $C(\tarS, \ind)$ obtain an additional commutative monoid structure $(\ast, 1)$.
	Specifically, the operation $\ast$ is defined for effects $\vx, \wx : \tarS \to \ind$ as
	\begin{cdiag*}
		\begin{tikzpicture}
	\begin{pgfonlayer}{nodelayer}
		\node [style=none] (8) at (2, 1) {};
		\node [style=none] (9) at (4, 1) {};
		\node [style=none] (10) at (3, 0) {};
		\node [style=none] (11) at (3, -1) {};
		\node [style=none] (12) at (3, -1.5) {$\tarS$};
		\node [style=likelihood] (13) at (2, 1) {$\vx$};
		\node [style=likelihood] (14) at (4, 1) {$\wx$};
		\node [style=bn] (15) at (3, 0) {};
		\node [style=none] (16) at (0, 0) {$\coloneqq$};
		\node [style=likelihood] (17) at (-6.75, 0.75) {$\vx$};
		\node [style=likelihood] (18) at (-3.25, 0.75) {$\wx$};
		\node [style=none] (19) at (-5, 0) {$\ast$};
		\node [style=none] (20) at (-6.75, -0.75) {};
		\node [style=none] (21) at (-3.25, -0.75) {};
		\node [style=none] (22) at (-6.75, -1.25) {$\tarS$};
		\node [style=none] (23) at (-3.25, -1.25) {$\tarS$};
	\end{pgfonlayer}
	\begin{pgfonlayer}{edgelayer}
		\draw [bend right=315] (10.center) to (8.center);
		\draw [bend right=45] (10.center) to (9.center);
		\draw (11.center) to (10.center);
		\draw (17) to (20.center);
		\draw (21.center) to (18);
	\end{pgfonlayer}
\end{tikzpicture}
	\end{cdiag*}
	and the unit is given by the deletion map, which will write suggestively in what follows as
	\begin{equation}
		1 \coloneqq \del_\tarS : \tarS \to \ind.
		\label{eq:1del}
	\end{equation}
	(Note that, as for $+$ and $0$ above, we do not denote the dependence of $\ast$ and $1$ on $\tarS$ explicitly.)
	In $\sfkern$, the effect $\vx \ast \wx$ corresponds to the pointwise multiplication of the nonnegative functions associated to $\vx$ and $\wx$ (Example \ref{exa:effects-sfkern}).

	When $\C$ is semiadditive, effects $\C(\tarS, \ind)$ become a \emph{commutative semiring} (i.e.\ a ring without additive inverses): their ``addition'' structure is obtained from the $\CMon$-enrichment $(+, 0)$, and their ``multiplication'' structure is obtained from $(\ast, 1)$ just defined.
	The semiring axioms can be shown diagrammatically using Remark~\ref{rem:diagrammatic-reasoning-semiadditive}: for example, we obtain $0 \ast \wx = 0$ because zero maps annihilate general morphisms, and the distributive law $\wx \ast (\vx + \ux) = (\wx \ast \vx) + (\wx \ast \ux)$ holds because
	\begin{cdiag*}
		\begin{tikzpicture}
	\begin{pgfonlayer}{nodelayer}
		\node [style=none] (0) at (-4.25, 0.25) {};
		\node [style=none] (1) at (-1.25, 0.25) {};
		\node [style=none] (2) at (-2.25, -0.75) {};
		\node [style=none] (3) at (-2.25, -1.75) {};
		\node [style=none] (4) at (3.75, 1) {};
		\node [style=none] (5) at (5.75, 1) {};
		\node [style=none] (6) at (4.75, 0) {};
		\node [style=none] (7) at (4.75, -1) {};
		\node [style=none] (8) at (7, 0) {$+$};
		\node [style=none] (9) at (8.25, 1) {};
		\node [style=none] (10) at (10.25, 1) {};
		\node [style=none] (11) at (9.25, 0) {};
		\node [style=none] (12) at (9.25, -1) {};
		\node [style=none] (13) at (2.25, 0) {$=$};
		\node [style=likelihood] (14) at (-4.25, 0.5) {$\wx$};
		\node [style=none] (16) at (-3.25, 0.25) {};
		\node [style=none] (17) at (0.75, 0.25) {};
		\node [style=none] (19) at (-2.25, 1.5) {};
		\node [style=none] (20) at (-2.25, 0.5) {};
		\node [style=none] (21) at (-0.25, 1.5) {};
		\node [style=none] (22) at (-0.25, 0.5) {};
		\node [style=none] (23) at (-2.25, 0.5) {};
		\node [style=likelihood] (24) at (-2.25, 1.75) {$\vx$};
		\node [style=likelihood] (25) at (-0.25, 1.75) {$\ux$};
		\node [style=none] (26) at (-1.25, 1) {$+$};
		\node [style=none] (27) at (-3.25, 3) {};
		\node [style=none] (28) at (0.75, 3) {};
		\node [style=bn] (29) at (-2.25, -0.75) {};
		\node [style=likelihood] (30) at (3.75, 1.25) {$\wx$};
		\node [style=likelihood] (31) at (5.75, 1.25) {$\vx$};
		\node [style=likelihood] (32) at (10.25, 1.25) {$\ux$};
		\node [style=likelihood] (33) at (8.25, 1.25) {$\wx$};
		\node [style=bn] (34) at (9.25, 0) {};
		\node [style=bn] (35) at (4.75, 0) {};
		\node [style=none] (36) at (-2.25, -2.25) {$\tarS$};
		\node [style=none] (37) at (4.75, -1.5) {$\tarS$};
		\node [style=none] (38) at (9.25, -1.5) {$\tarS$};
	\end{pgfonlayer}
	\begin{pgfonlayer}{edgelayer}
		\draw [in=-90, out=-180, looseness=1.25] (2.center) to (0.center);
		\draw [bend right=45] (2.center) to (1.center);
		\draw (2.center) to (3.center);
		\draw [bend right=315] (6.center) to (4.center);
		\draw [bend right=45] (6.center) to (5.center);
		\draw (6.center) to (7.center);
		\draw [bend right=315] (11.center) to (9.center);
		\draw [bend right=45] (11.center) to (10.center);
		\draw (11.center) to (12.center);
		\draw [style=dashed box] (17.center) to (16.center);
		\draw (23.center) to (19.center);
		\draw (22.center) to (21.center);
		\draw [style=dashed box] (27.center) to (16.center);
		\draw [style=dashed box] (28.center) to (17.center);
		\draw [style=dashed box] (28.center) to (27.center);
	\end{pgfonlayer}
\end{tikzpicture}
	\end{cdiag*}
\end{remark}

\begin{remark} \label{rem:reweighting-action}
	Also as noted by \cite[Section 3.1]{fritz2023weakly}, in a CD category $\C$, the monoid of effects $\C(\tarS, \ind)$ \emph{acts} on the hom-sets $\C(\tarS, \sndS)$. %
	Specifically, given an effect $\wx : \tarS \to \ind$ and a morphism $\P : \tarS \to \sndS$, we obtain a new morphism
	\begin{cdiag} \label{diag:reweighting-action}
		\begin{tikzpicture}
	\begin{pgfonlayer}{nodelayer}
		\node [style=morphism] (0) at (-3, 0) {$\wx \lact \P$};
		\node [style=none] (1) at (0, 0) {$\coloneqq$};
		\node [style=none] (2) at (-3, 0.25) {};
		\node [style=none] (3) at (-3, 1.5) {};
		\node [style=none] (4) at (-3, -0.25) {};
		\node [style=none] (5) at (-3, -1.5) {};
		\node [style=none] (6) at (-3, 2) {$\sndS$};
		\node [style=none] (7) at (-3, -2) {$\tarS$};
		\node [style=none] (8) at (2.5, 0.25) {};
		\node [style=none] (9) at (3.5, -0.75) {};
		\node [style=none] (10) at (4.5, 0.25) {};
		\node [style=none] (11) at (3.5, -1.75) {};
		\node [style=bn] (12) at (3.5, -0.75) {};
		\node [style=likelihood] (13) at (2.5, 0.5) {$\wx$};
		\node [style=none] (14) at (4.5, 1.75) {};
		\node [style=none] (15) at (4.5, 2.25) {$\sndS$};
		\node [style=none] (16) at (3.5, -2.25) {$\tarS$};
		\node [style=morphism] (18) at (4.5, 0.75) {$\P$};
	\end{pgfonlayer}
	\begin{pgfonlayer}{edgelayer}
		\draw (5.center) to (4.center);
		\draw (2.center) to (3.center);
		\draw (11.center) to (9.center);
		\draw [bend left=45] (9.center) to (8.center);
		\draw [bend right=45] (9.center) to (10.center);
		\draw (14.center) to (10.center);
	\end{pgfonlayer}
\end{tikzpicture}
	\end{cdiag}
	For example, the first MH term \eqref{eq:mh-first-term} may be expressed in terms of this action as $\alpha \cdot \phi$.
	In the case of $\sfkern$, it is straightforward to check that the morphism $\wx \cdot \K$ is kernel given by
	\[
		(\wx \cdot \P)(x, \dif y) = \wx(x) \, \P(x, \dif y),
	\]
	and so we think of $\wx \cdot \P$ as being given by $\P$ \emph{reweighted} according to $\wx$ on the input side.
\end{remark}

\begin{remark} \label{rem:preorder-enrichment}
	It is a standard fact that a commutative monoid $\M$ admits a canonical \emph{preorder} (i.e.\ a reflexive and transitive relation) $\leq$ defined by $a \leq b$ if there exists $c \in \M$ such that $a + c = b$.
	This extends to the morphisms of every $\CMon$-enriched category $\C$ accordingly.
	It is also easy to check that this preorder is preserved by pre- and post-composition: if $\P \leq \Q$, then
    $
		\R \circ \P \circ \S \leq \R \circ \Q \circ \S
	$
	for all compatibly typed morphisms $\R$ and $\S$.
	From the perspective of enriched category theory, this means that every $\CMon$-enriched category is also enriched over the category of preorders.

	When $\C$ is monoidally $\CMon$-enriched, this preorder is also compatible with the monoidal product: if $\P \leq \Q$, then some basic manipulations show that
	$
		\R \otimes \P \otimes \S \leq \R \otimes \Q \otimes \S
	$
    holds for all morphisms $\R$ and $\S$.
	It follows that the induced enrichment over preorders is also monoidal in this case.
\end{remark}

\begin{remark} \label{rem:extra-properties-semiadditive}
    For a small number of results below, we will consider $\CMon$-enrichments that satisfy certain additional compatibility conditions.
    For completeness, we give a list here (which can be skipped on a first reading).
    We will say that:
	\begin{itemize}
		\item $\C$ is \emph{zero-sum-free} if $\P + \Q = 0$ implies $\P = \Q = 0$;
		\item $\C$ is \emph{zero-monic} if $1 \circ \P = 0$ implies $\P = 0$ (where recall from \eqref{eq:1del} that $1 = \del$);
		\item $\C$ has \emph{no zero divisors} if $\P \otimes \Q = 0$ implies $\P = 0$ or $\Q = 0$.
	\end{itemize}
	Note that the last two conditions need additional structure (e.g.\ a CD category) to make sense.
	The zero-monic condition was previously introduced in the context of \emph{effectus theory} \cite[Definition 31]{cho2015introduction}.

	It is straightforward to check that being zero-sum-free is equivalent to $0$ being a bottom element with respect to the preorder from Remark \ref{rem:preorder-enrichment}.
	In other words, $\P \leq 0$ implies $\P = 0$.
	Some basic manipulations show that $\sfkern$ satisfies all three of these requirements, and $\Kern$ satisfies the first two.
\end{remark}

\subsection{Substochasticity, probabilities, and convex combinations} \label{sec:probabilities-and-convex-combinations}

Semiadditive CD categories admit the following abstract definition of \emph{substochastic kernels} and \emph{probabilities}. 
Moreover, using the formalism from Remarks \ref{rem:effect-multiplication} and \ref{rem:preorder-enrichment}, this mirrors notationally the usual definitions from classical probability theory.

\begin{definition}
	Let $\P: \tarS \to \sndS$ be a morphism in a semiadditive CD category $\C$.
	Then $\P$ is \emph{substochastic} if it holds that
	\begin{equation} \label{eq:substochastic-morphism}
		1 \circ \P \leq 1,
	\end{equation}
	where recall that $1 = \del_\sndS$ denotes the multiplicative unit in the commutative monoid of effects (Remark \ref{rem:effect-multiplication}).
	A \emph{probability} is defined as a substochastic effect.
\end{definition}

\begin{remark} \label{rem:substochastic-interpretation}
	Since $\Id_\ind = \del_\ind = 1$ in a CD category (Definition \ref{def:cd_cat_full}), a probability is equivalently an effect $\alpha : \tarS \to \ind$  with
	$
		\alpha \leq 1.
	$
	By Remark \ref{rem:preorder-enrichment}, this in turn means equivalently that there exists some effect $\comp{\alpha} : \tarS \to \ind$ such that
	\[
		\alpha + \comp{\alpha} = 1.
	\]
	Note that since $+$ is commutative, $\comp{\alpha}$ is also a probability.
\end{remark}

\begin{example} \label{exa:substoch_sfkern}
	Recall from Example~\ref{exa:effects-sfkern} that effects in $\sfkern$ correspond to nonnegative measurable functions.
	Given $\P : \tarS \to \sndS$ in $\sfkern$, it is straightforward to check that the effect $\one \circ \P$ corresponds to the measurable function $x \mapsto \P(x, \sndS)$, and that $1 \circ \P \leq 1$ if and only if
	$
		\P(x, \sndS) \le 1
	$
	holds for all $x\in \tarS$.
	In this way we recover the usual notion of a \emph{substochastic kernel}.
	Likewise, a probability $\alpha : \tarS \to \ind$ in $\sfkern$ corresponds to a measurable function $\tarS \to [0,1]$, as the name suggests.
\end{example}

\begin{remark} \label{rem:convex-combinations}
	Using probabilities, we can express \emph{convex combinations} of morphisms in a semiadditive CD category.
	Specifically, given a probability $\alpha : \tarS \to \ind$, we may regard
	the morphism $\alpha \lact \P + \comp{\alpha} \lact \Q$ as an abstract convex combination of $\P, \Q : \tarS \to \sndS$, where the action $\cdot$ is defined in \eqref{diag:reweighting-action}. %
	In string diagrams, this may be written as
	\begin{cdiag} \label{eq:convex-sum-of-normalised-morphisms}
		\begin{tikzpicture}
	\begin{pgfonlayer}{nodelayer}
		\node [style=none] (0) at (0, 0) {$+$};
		\node [style=none] (1) at (-3, -1) {};
		\node [style=none] (2) at (-4, 0) {};
		\node [style=none] (3) at (-2, 0) {};
		\node [style=none] (4) at (-3, -2) {};
		\node [style=bn] (5) at (-3, -1) {};
		\node [style=likelihood] (6) at (-4, 0.25) {$\alpha$};
		\node [style=none] (8) at (-2, 1.5) {};
		\node [style=none] (9) at (3.5, -1) {};
		\node [style=none] (10) at (2.5, 0) {};
		\node [style=none] (11) at (4.5, 0) {};
		\node [style=none] (12) at (3.5, -2) {};
		\node [style=bn] (13) at (3.5, -1) {};
		\node [style=likelihood] (14) at (2.5, 0.25) {$\comp{\alpha}$};
		\node [style=none] (16) at (4.5, 1.5) {};
		\node [style=none] (17) at (-2, 2) {$\sndS$};
		\node [style=none] (18) at (-3, -2.5) {$\tarS$};
		\node [style=none] (19) at (3.5, -2.5) {$\tarS$};
		\node [style=none] (20) at (4.5, 2) {$\sndS$};
		\node [style=morphism] (21) at (-2, 0.5) {$\P$};
		\node [style=morphism] (22) at (4.5, 0.5) {$\Q$};
	\end{pgfonlayer}
	\begin{pgfonlayer}{edgelayer}
		\draw [bend left=45] (3.center) to (1.center);
		\draw [bend right=315] (1.center) to (2.center);
		\draw (4.center) to (1.center);
		\draw (8.center) to (3.center);
		\draw [bend left=45] (11.center) to (9.center);
		\draw [bend right=315] (9.center) to (10.center);
		\draw (12.center) to (9.center);
		\draw (16.center) to (11.center);
	\end{pgfonlayer}
\end{tikzpicture}
	\end{cdiag}
	Intuitively, here $\P$ is weighted by $\alpha$, while $\Q$ is weighted by $\comp{\alpha}$.
	In $\sfkern$, by Remark \ref{rem:reweighting-action}, this corresponds to the kernel
	\begin{equation}\label{eq:cvx_sfkern}
		\alpha(x) \, \P(x, \dif y) + \left (1 - \alpha(x)\right ) \, \Q(x, \dif y),
	\end{equation}
	which recovers the usual notion of convex combinations of kernels.
\end{remark}

Before proceeding we note the following structural result that will be useful for us below.

\begin{proposition} \label{prop:substochastic-CD-category}
	Let $\C$ be a semiadditive CD category.
	Then the objects and substochastic morphisms of $\C$ form a CD category, with its CD category structure inherited from $\C$.
\end{proposition}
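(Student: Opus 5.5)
To prove that substochastic morphisms form a CD subcategory, I need to verify three things: substochastic morphisms are closed under composition and monoidal product, and all structural morphisms of $\C$ are substochastic. The CD axioms then hold automatically, since they are equations between morphisms of $\C$. Throughout I use the preorder from Remark~\ref{rem:preorder-enrichment} and its compatibility with composition and $\otimes$.

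\textbf{Structural morphisms.} Identities, symmetries, associators and unitors are isomorphisms compatible with the comonoid structure. For such a morphism $f$, the CD axioms of Definition~\ref{def:cd_cat_full} give $1 \circ f = 1$, because deletion is compatible with the monoidal structure: $\del_{\tarS \otimes \sndS} = \del_\tarS \otimes \del_\sndS$ and $\del_\ind = \Id_\ind$. Hence $1 \circ f \leq 1$ by reflexivity. For the copy map we have $1 \circ \cop_\tarS = (\del_\tarS \otimes \del_\tarS) \circ \cop_\tarS = \del_\tarS$ by counitality. For the deletion map we have $1 \circ \del_\tarS = \Id_\ind \circ \del_\tarS = \del_\tarS$. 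So all of these are substochastic, indeed normalised.

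\textbf{Composition.} Take substochastic $\P : \tarS \to \sndS$ and $\Q : \sndS \to \auxS$. Since $1 \circ \Q \leq 1$, precomposing with $\P$ preserves the inequality, giving $1 \circ \Q \circ \P \leq 1 \circ \P$. Combined with $1 \circ \P \leq 1$, transitivity yields $1 \circ (\Q \circ \P) \leq 1$.

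\textbf{Monoidal product.} This is the only step that uses the monoidality of the enrichment. Take substochastic $\P : \tarS \to \sndS$ and $\Q : \tarS' \to \sndS'$. Then
\[
1 \circ (\P \otimes \Q) = (\del_\sndS \otimes \del_{\sndS'}) \circ (\P \otimes \Q) = (1 \circ \P) \otimes (1 \circ \Q).
\]
By the monoidal compatibility in Remark~\ref{rem:preorder-enrichment}, tensoring with a fixed morphism preserves $\leq$. This gives
\[
(1\circ\P)\otimes(1\circ\Q) \leq 1 \otimes (1 \circ \Q) \leq 1 \otimes 1,
\]
and $1 \otimes 1 = \del_\tarS \otimes \del_{\tarS'} = \del_{\tarS\otimes\tarS'}$. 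Unitor bookkeeping identifies $\ind \otimes \ind$ with $\ind$, which is harmless because the unitors are normalised isomorphisms and composing with them preserves $\leq$.

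\textbf{Conclusion.} Substochastic morphisms therefore form a symmetric monoidal subcategory containing the copy and delete maps, and the commutative comonoid equations and their compatibility with $\otimes$ hold because they already hold in $\C$. I expect no real obstacle. The only point needing care is the monoidal-product step, and specifically checking that monotonicity of $\otimes$ in each argument is correctly justified from the bilinearity~\eqref{eq:monoidal-enrichment}. Explicitly: if $a + c = b$, then $a \otimes d + c \otimes d = b \otimes d$.
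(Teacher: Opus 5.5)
Your proof is correct and follows essentially the same route as the paper. Closure under composition uses monotonicity of $\leq$ under precomposition plus transitivity, closure under $\otimes$ uses monotonicity of the monoidal product from Remark~\ref{rem:preorder-enrichment}, and the structure maps are substochastic because they are normalised. You spell out the monoidal-product step, which the paper only calls ``a similar argument'', and you check normalisation of the structure maps directly from the CD axioms rather than citing \cite{Cho2019}; these are added details, not a different approach.
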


\begin{proof}
	We first show substochasticity is preserved by composition.
	Let $\P : \tarS \to \sndS$ and $\Q : \sndS \to \auxS$ be substochastic.
	Hence $\one \circ \Q \leq \one$, and so precomposing both sides by $\P$ gives
	\[
		\one \circ \Q \circ \P \leq \one \circ \P \leq \one
	\]
	by Remark \ref{rem:preorder-enrichment}, where we use substochasticity of $\P$ in the second step.
	A similar argument (also using Remark \ref{rem:preorder-enrichment}) shows that substochastic morphisms are closed under monoidal products also.

	Next, observe that every normalised (Remark \ref{rem:normalised-morphisms}) morphism $\R : \tarS \to \sndS$ (see \cite[Definition 2.3]{Cho2019}) in $\C$ by definition satisfies
	$
		1 \circ \R = 1
	$
	(since recall we defined $\one = \del$ in Remark \ref{rem:effect-multiplication}).
	Since $\leq$ is reflexive, every normalised morphism $\R$ is therefore substochastic.
	It is also standard to show that the structure morphisms of a CD category are all normalised, i.e.,\ identities, copy and delete maps, and the symmetric monoidal structure maps \cite[Section 7]{Cho2019}.
	In this way, the required CD category structure is inherited from $\C$.
\end{proof}

\subsubsection{Aside: connection with distributive Markov categories}

Statisticians are trained automatically to associate a convex combination like \eqref{eq:cvx_sfkern} with a sampling procedure as follows:
\begin{align} \label{eq:mh-sampling-procedure}
	B \sim \bernoulli(\alpha(x)) \qquad
	Y \sim \P(x, \dif y) \qquad
	Y' \sim \Q(x, \dif y) \qquad
	\text{return $Y$ if $B = 1$ else $Y'$}
\end{align}
Here $\bernoulli(a)$ denotes a Bernoulli distribution (i.e.\ a coin toss) with success probability $a$.
However, while this makes sense in $\sfkern$, it is not clear how to interpret \eqref{eq:mh-sampling-procedure} semantically inside a general semiadditive CD category, since in general these do not supply morphisms corresponding to $\bernoulli$ or the ``if-else'' term here.
On the other hand, \cite[Definition 2]{ackerman2024probabilistic} recently introduced \emph{distributive Markov categories}, which are Markov categories equipped with coproducts that distribute over the monoidal product (see Appendix \ref{sec:appendix-substochasticity}), and which do provide such morphisms in a canonical way.
However, since effects in Markov categories are always trivial, distributive Markov categories in general do not supply morphisms corresponding to individual probability \emph{values} such as $\alpha : \tarS \to \ind$.
In turn, this means they cannot be used to interpret convex combinations like \eqref{eq:cvx_sfkern} directly.

For our goal of proving Theorem \ref{thm:andrieu2020}, it is sufficient to work entirely in semiadditive CD categories, as we do below.
However, for general probability theory, it seems useful to have a setting that supports both these perspectives.
This is provided by the following result, which we prove in Appendix \ref{sec:appendix-substochasticity}.

\begin{proposition} \label{prop:distributive}
	Let $\C$ be a semiadditive CD category with an initial object and binary coproducts whose inclusions are normalised and deterministic.
	Then these coproducts equip $\Cnorm$ (Remark \ref{rem:normalised-morphisms}) with the structure of a distributive Markov category.
\end{proposition}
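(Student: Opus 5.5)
The plan is to verify the definition of a distributive Markov category \cite[Definition 2]{ackerman2024probabilistic} directly for $\Cnorm$. That means checking that $\Cnorm$ has finite coproducts with deterministic injections, and that the canonical comparison maps $\tarS \otimes \sndS + \tarS \otimes \auxS \to \tarS \otimes (\sndS + \auxS)$ and $0 \to \tarS \otimes 0$ are isomorphisms. The recurring tool is a small lemma: if a normalised morphism $g$ is invertible in $\C$, then $g^{-1}$ is normalised, because $1 \circ g^{-1} = (1 \circ g) \circ g^{-1} = 1$. This lets me build inverses in the ambient semiadditive category $\C$ and then transfer them to $\Cnorm$.

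\emph{Step 1: coproducts in $\Cnorm$.} Let $f : \sndS \to \tarS$ and $g : \auxS \to \tarS$ be normalised. Their copairing $[f,g]$ in $\C$ satisfies
\[
1 \circ [f,g] \circ \iota_k \in \{1 \circ f,\ 1 \circ g\} = \{1\},
\]
and $1 \circ \iota_k = 1$ because the injections are normalised. The universal property of the coproduct in $\C$, applied to maps into $\ind$, then gives $1 \circ [f,g] = 1$. Uniqueness of copairings is inherited from $\C$. For the initial object $0$, both $1 \circ {!}$ and $1$ are maps $0 \to \ind$, so they are equal by initiality; hence the unique map out of $0$ is normalised. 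Determinism of the injections is assumed, so $\Cnorm$ has finite coproducts of the required kind.

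\emph{Step 2: distributivity, via biproducts.} I will use the $\CMon$-enrichment to show that the coproducts are biproducts in $\C$.
\begin{enumerate}
\item Since $0 \circ \mathrm{id}_0 = 0$, the identity $\mathrm{id}_0$ is the zero map. Any $h : \tarS \to 0$ therefore equals $\mathrm{id}_0 \circ h = 0$, so $0$ is a zero object.
\item Define projections $p_1 \coloneqq [\mathrm{id}, 0]$ and $p_2 \coloneqq [0, \mathrm{id}]$. The usual argument gives $\iota_1 p_1 + \iota_2 p_2 = \mathrm{id}$, by checking both sides against the injections, and also $p_j \iota_k = \delta_{jk}$.
\item Write $j_1, j_2$ for the injections of $\tarS\otimes\sndS + \tarS\otimes\auxS$. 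The candidate inverse of the comparison map $c = [\mathrm{id}\otimes\iota_1, \mathrm{id}\otimes\iota_2]$ is
\[
d \coloneqq j_1 \circ (\mathrm{id}\otimes p_1) + j_2 \circ (\mathrm{id}\otimes p_2).
\]
\item To check $c \circ d = \mathrm{id}$, use bilinearity of $\circ$ and $\otimes$: the composite equals $\mathrm{id}\otimes(\iota_1p_1+\iota_2p_2) = \mathrm{id}$.
\item To check $d \circ c = \mathrm{id}$, test against each $j_k$. For example, $d \circ c \circ j_1 = j_1 \circ (\mathrm{id}\otimes p_1\iota_1) + j_2\circ(\mathrm{id}\otimes p_2\iota_1)$. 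The first term is $j_1$ and the second vanishes because $\otimes$ with $0$ is $0$.
\item For the nullary case, $\mathrm{id}_{\tarS\otimes 0} = \mathrm{id}\otimes\mathrm{id}_0 = \mathrm{id}\otimes 0 = 0$. So $\tarS\otimes 0$ is a zero object, hence initial, and $0 \to \tarS \otimes 0$ is an isomorphism.
\end{enumerate}

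\emph{Step 3: transfer back to $\Cnorm$.} The comparison map $c$ is normalised, as a copairing of normalised maps by Step 1, since $\mathrm{id}\otimes\iota_k$ is normalised. By the opening lemma, its inverse $d$ is normalised as well, so $c$ is an isomorphism in $\Cnorm$. The same holds for the nullary comparison. Combined with the Markov structure of $\Cnorm$ from Remark~\ref{rem:normalised-morphisms}, this gives the distributive Markov category structure.

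I expect the main obstacle to be Step 2. The point is that the auxiliary maps $p_k$ and $d$ are not normalised, and $d$ is a sum. So the inverse has to be constructed in $\C$ using the enrichment, and only afterwards shown to lie in $\Cnorm$ via the lemma.
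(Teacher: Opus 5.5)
Your proof is correct and takes essentially the same route as the paper. Normalisation of the injections lifts the initial object and copairings to $\Cnorm$, and distributivity follows because the $\CMon$-enrichment makes coproducts biproducts and the monoidal bilinearity makes $-\otimes\tarS$ additive. The paper imports those two facts by citation, whereas you verify them directly through the explicit inverse $d$. Your lemma that the inverse of a normalised isomorphism is normalised also supplies the transfer back to $\Cnorm$, which the paper leaves implicit.

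The only slip is in Step~2(1): $\Id_{\zeroobj} = 0$ holds because $\C(\zeroobj,\zeroobj)$ is a singleton by initiality, not because of the equation $0 \circ \Id_{\zeroobj} = 0$ that you cite.
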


\subsection{Cancellative and finite morphisms} \label{sec:cancellative}

In this subsection we build up to the definition of a \emph{finitely cancellative} CD category (Definition \ref{def:finitely_c}), which we require for the converse part of Theorem \ref{thm:enrich_rev} below.
To begin, it is often useful to know that a commutative monoid $\M$ is \emph{cancellative}.
This property says that if $\P + \Q = \P + \R$ holds in $\M$, then also $\Q = \R$.
Unfortunately, due to the possibility of obtaining infinite mass, kernels are not cancellative in general.
For example, if $\P(x, A) = \infty$ for all $x$ and $A$, then $\P + \Q = \P$ for any $\Q$.
This motivates the following elementwise definition instead:

\begin{definition}\label{def:cancellative}
	Let $\P : \tarS \to \sndS$ be a morphism in a category $\C$ equipped with a $\CMon$-enrichment.
	We say that $\P$ is \emph{cancellative} if for all $\Q, \R : \tarS \to \sndS$ such that $\P + \Q = \P + \R$, it holds that $\Q = \R$.
\end{definition}

Note that we define this purely in terms of a $\CMon$-enrichment (not a CD structure).
It therefore applies to the category $\Kern$ (Example~\ref{exa:sfkern}) as well as to $\sfkern$.

The following basic result holds for cancellative morphisms in any $\CMon$-enriched category.

\begin{proposition} \label{prop:cancellative-inequality}
	Let $\P, \Q : \tarS \to \sndS$ be morphisms in a $\CMon$-enriched category $\C$.
	If $\P \leq \Q$ and $\Q$ is cancellative, then so is $\P$.
\end{proposition}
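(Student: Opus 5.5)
The plan is a direct algebraic argument using only the commutative monoid axioms on the hom-set $\C(\tarS, \sndS)$; no composition or CD structure is needed. Since $\P \leq \Q$, Remark~\ref{rem:preorder-enrichment} gives some $\S : \tarS \to \sndS$ with $\P + \S = \Q$. To show $\P$ is cancellative, we fix $\R, \R' : \tarS \to \sndS$ with $\P + \R = \P + \R'$ and aim to conclude $\R = \R'$.

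The key step is to add $\S$ to both sides of the assumed equation. By associativity and commutativity of $+$ this gives
\[
	\Q + \R = (\P + \S) + \R = \S + (\P + \R) = \S + (\P + \R') = (\P + \S) + \R' = \Q + \R'.
\]
Cancellativity of $\Q$ (Definition~\ref{def:cancellative}) then yields $\R = \R'$ directly, as required.

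There is no real obstacle. The only point needing care is the bookkeeping of the rearrangement above: we must correctly use that $\Q$ is obtained from $\P$ by adding $\S$, rather than the other way round. Notably, the argument does not need cancellativity of $\S$ or any further property of the enrichment.
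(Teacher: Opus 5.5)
Your proposal is correct and is the paper's argument: take the witness $S$ with $P + S = Q$, add it to both sides of $P + R = P + R'$ to get $Q + R = Q + R'$, and cancel using cancellativity of $Q$. The only difference is the names of the morphisms.
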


\begin{proof}
	By assumption, there is some $\R : \tarS \to \sndS$ such that $\P + \R = \Q$.
	If we have $\P + \S = \P + \T$ for some $\S, \T : \tarS \to \sndS$, then
	\[
		\Q + \S = (\P + \R) + \S = (\P + \R) + \T = \Q + \T,
	\]
	and so $\S = \T$ by cancellativity of $\Q$.
\end{proof}

For kernels, cancellativity recovers the notion of \emph{$\sigma$-finiteness}.
Recall that a measure $\mu$ on a measurable space $\tarS$ is $\sigma$-finite if there exists disjoint measurable $A_1,A_2,\ldots \subseteq \tarS$ such that $\tarS = \bigcup_{n=1}^\infty A_n$ and each $\mu(A_n) < \infty$.
The following result is proven in Appendix~\ref{sec:appendix-sigma-finiteness}.

\begin{proposition} \label{prop:sigma-finiteness-cancellative}
	Let $\K : \tarS \to \sndS$ be a kernel (i.e.\ a morphism in $\Kern$).
	If $\K$ is pointwise $\sigma$-finite (i.e.\ each $\K(x, \dif y)$ is $\sigma$-finite), then $\K$ is cancellative.
	The converse holds if all singleton subsets of $\tarS$ are measurable.
\end{proposition}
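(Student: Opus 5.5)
The forward direction reduces to finite measures, where subtraction is available. For the converse, given a point $x$ where $\K(x,\dif y)$ fails to be $\sigma$-finite, I construct a nonzero measure that $\K(x,\dif y)$ absorbs, and then localise it at $x$ using measurability of singletons.

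\emph{Forward direction.} Suppose each $\K(x, \dif y)$ is $\sigma$-finite, and let $\Q, \R : \tarS \to \sndS$ be kernels with $\K + \Q = \K + \R$. Since kernels are equal exactly when they agree pointwise, it suffices to fix $x \in \tarS$ and show $\Q(x, B) = \R(x, B)$ for every measurable $B \subseteq \sndS$.
\begin{enumerate}
\item Choose a disjoint measurable partition $\sndS = \bigcup_n A_n$ with $\K(x, A_n) < \infty$.
\item For each $n$, the hypothesis gives
\[
\K(x, B \cap A_n) + \Q(x, B\cap A_n) = \K(x, B\cap A_n) + \R(x, B \cap A_n)
\]
in $[0,\infty]$.
\item The common term $\K(x, B \cap A_n)$ is finite, and finite elements of $[0,\infty]$ are cancellable. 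Hence $\Q(x, B\cap A_n) = \R(x, B\cap A_n)$.
\item Summing over $n$ and using countable additivity gives $\Q(x,B) = \R(x,B)$.
\end{enumerate}

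\emph{Converse.} Assume singletons of $\tarS$ are measurable and that $\K(x_0, \dif y) =: m$ is not $\sigma$-finite for some $x_0$. I will exhibit $\Q \neq \R$ with $\K + \Q = \K + \R$.

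The key step, and the only real obstacle, is constructing a measure $\nu \neq 0$ on $\sndS$ with $m + \nu = m$. I define
\[
\nu(B) \coloneqq \infty \text{ if the restriction } \restr{m}{B} \text{ is not } \sigma\text{-finite}, \qquad \nu(B) \coloneqq 0 \text{ otherwise}.
\]
\begin{itemize}
\item \emph{$\nu$ is a measure.} Clearly $\nu(\emptyset) = 0$. A countable union of sets on which $m$ is $\sigma$-finite is again such a set. Hence, for disjoint $B_n$, $\nu(\bigcup_n B_n) = \infty$ if and only if some $\nu(B_n) = \infty$, which is exactly countable additivity for a $\{0,\infty\}$-valued set function.
\item \emph{$\nu \neq 0$.} By assumption $\restr{m}{\sndS} = m$ is not $\sigma$-finite, so $\nu(\sndS) = \infty$.
\item \emph{$m + \nu = m$.} If $\nu(B) = \infty$, then $m(B) = \infty$, since otherwise $\restr{m}{B}$ would be finite and hence $\sigma$-finite. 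So $m(B) + \nu(B) = m(B)$ in all cases.
\end{itemize}

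Finally, set $\Q \coloneqq 0$ and $\R(x, B) \coloneqq \mathbf{1}[x = x_0]\, \nu(B)$.
\begin{itemize}
\item \emph{$\R$ is a kernel.} Each $\R(x, \dif y)$ is a measure, and $x \mapsto \R(x,B)$ is measurable because $\{x_0\}$ is measurable; this is precisely where the singleton hypothesis is used.
\item \emph{$\K + \Q = \K + \R$.} At $x_0$ this is $m + \nu = m$. At every other $x$ the kernel $\R$ vanishes, so both sides equal $\K(x, \dif y)$.
\end{itemize}
Since $\R(x_0, \sndS) = \infty \neq 0$, we have $\Q \neq \R$, so $\K$ is not cancellative.
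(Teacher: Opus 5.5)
Your proposal is correct and follows the paper's argument essentially step for step. The forward direction cancels the finite terms $\K(x, B \cap A_n)$ over a $\sigma$-finite partition and then sums. The converse uses the same $\{0,\infty\}$-valued measure $\nu$, which is infinite exactly on sets where $m$ fails to be $\sigma$-finite, so that $m + \nu = m$, and localises it at $x_0$ using the measurable singleton.

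The only difference is organisational. The paper factors the argument through two lemmas: a measure is cancellative iff it is $\sigma$-finite, and a kernel is cancellative iff it is pointwise cancellative when singletons are measurable. One small point: your additivity check for $\nu$ also tacitly uses that $\sigma$-finiteness of a restriction passes to measurable subsets, which is immediate.
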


The following basic result would be tedious to show in terms of $\sigma$-finiteness, but is straightforward in terms of cancellativity.

\begin{proposition} \label{prop:cancellative-isomorphisms}
	Cancellative morphisms in a $\CMon$-enriched category are preserved under composition with isomorphisms.
\end{proposition}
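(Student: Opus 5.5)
The plan is to verify the claim directly from Definition~\ref{def:cancellative} and the bilinearity of composition \eqref{eq:bilinearity-of-composition}. Concretely, I would show: if $\P : \tarS \to \sndS$ is cancellative and $f : \sndS \to \sndS'$ and $g : \tarS' \to \tarS$ are isomorphisms, then $f \circ \P$ and $\P \circ g$ are both cancellative. The case $f \circ \P \circ g$ then follows by applying these two one-sided statements in turn.

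For post-composition, suppose $\Q, \R : \tarS \to \sndS'$ satisfy $f \circ \P + \Q = f \circ \P + \R$. I post-compose both sides with $f^{-1}$ and distribute using \eqref{eq:bilinearity-of-composition}, which gives
\[
	\P + f^{-1} \circ \Q = \P + f^{-1} \circ \R .
\]
Cancellativity of $\P$ then yields $f^{-1} \circ \Q = f^{-1} \circ \R$. Post-composing with $f$ recovers $\Q = \R$.

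Pre-composition is symmetric. Suppose $\P \circ g + \Q = \P \circ g + \R$ for $\Q, \R : \tarS' \to \sndS$. I pre-compose with $g^{-1}$ and use right distributivity to obtain $\P + \Q \circ g^{-1} = \P + \R \circ g^{-1}$. Cancellativity of $\P$ gives $\Q \circ g^{-1} = \R \circ g^{-1}$, and pre-composing with $g$ gives $\Q = \R$.

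There is no real obstacle here. The only point needing care is to apply the correct side of bilinearity in each case, and to note that only the enrichment axioms are used, with no monoidal or CD structure. The statement therefore holds in any $\CMon$-enriched category, including $\Kern$.
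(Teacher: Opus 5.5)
Your proof is correct and is essentially the paper's argument. The paper handles $\psi \circ P \circ \varphi$ in one step: it composes with $\psi^{-1}$ and $\varphi^{-1}$ on either side, distributes by bilinearity, cancels $P$, and composes back with $\psi$ and $\varphi$; your split into one-sided cases that you then combine is an immaterial difference.
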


\begin{proof}
	Let $\P : \tarS \to \sndS$ be cancellative, and let $\varphi : \W \to \tarS$ and $\psi : \sndS \to \auxS$ be isomorphisms.
	Suppose $\Q, \R : \W \to \auxS$ satisfy
	\[
		\psi \circ \P \circ \varphi + \Q = \psi \circ \P \circ \varphi + \R.
	\]
	Bilinearity of composition (Definition~\ref{def:CMon-enrichment}) implies that
	\[
		\P + \psi^{-1} \circ \Q \circ \varphi^{-1} = \P + \psi^{-1} \circ \R \circ \varphi^{-1},
	\]
	and hence $\psi^{-1} \circ \Q \circ \varphi^{-1} = \psi^{-1} \circ \R \circ \varphi^{-1}$ by cancellativity of $\P$.
	Applying  $\psi$ and $\varphi$ on either side then gives $\Q = \R$.
\end{proof}

A closely related notion to cancellativity is \emph{finiteness}. 
Our terminology here is motivated by the case of kernels (see Proposition \ref{prop:sfkern-finite} below, proven in Appendix \ref{sec:appendix-sfkern-finite}).

\begin{definition}
	A morphism $\P : \tarS \to \sndS$ in a semiadditive CD category $\C$ is \emph{finite} if the effect $\one \circ \P$ is cancellative.
\end{definition}

\begin{proposition} \label{prop:sfkern-finite}
	Let $P : \tarS \to \sndS$ be a kernel (i.e.\ a morphism in $\Kern$).
	Then $P$ is finite if and only if $P(x, \sndS) < \infty$ for all $x \in \tarS$.
\end{proposition}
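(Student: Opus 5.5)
The statement reduces to the case $\sndS = \ind$. By Example~\ref{exa:substoch_sfkern}, the effect $\one \circ \P$ corresponds to the measurable function $f(x) \coloneqq \P(x, \sndS)$, so $\P$ is finite exactly when the kernel $f : \tarS \to \ind$ is cancellative in $\Kern(\tarS, \ind)$. Sums of effects are computed pointwise (Proposition~\ref{prop:CMon-sfkern}), so $\Kern(\tarS, \ind)$ is just the commutative monoid of measurable functions $\tarS \to [0,\infty]$ under pointwise addition. It therefore suffices to prove: a measurable $f : \tarS \to [0,\infty]$ is cancellative in this monoid if and only if $f(x) < \infty$ for all $x$.

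For the ``if'' direction, suppose $f$ is everywhere finite and $f + g = f + h$ pointwise for measurable $g, h \ge 0$. At each $x$ we have $f(x) + g(x) = f(x) + h(x)$ in $[0,\infty]$. Since $f(x)$ is a finite real, we may subtract it from both sides, which gives $g(x) = h(x)$. Hence $g = h$.

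For the ``only if'' direction, suppose $f(x_0) = \infty$ for some $x_0$. We need two distinct measurable functions that $f$ cannot cancel. The obvious choice is $g \coloneqq 0$ and $h \coloneqq \mathbf{1}_{\{f = \infty\}}$. The set $\{f = \infty\} = f^{-1}(\{\infty\})$ is measurable because $f$ is measurable, so $h$ is a valid effect, and $h \neq g$ because $h(x_0) = 1$. Checking pointwise:
\begin{itemize}
\item on $\{f = \infty\}$, both $f + g$ and $f + h$ equal $\infty$;
\item off this set, $h = 0 = g$.
\end{itemize}
So $f + g = f + h$ with $g \neq h$, and $f$ is not cancellative.

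There is no real obstacle here. The only point needing care is the one just addressed: the witness must be a measurable function. Using the indicator of $\{f=\infty\}$ rather than of the singleton $\{x_0\}$ is what lets the statement hold without assuming that singletons are measurable, in contrast with Proposition~\ref{prop:sigma-finiteness-cancellative}.
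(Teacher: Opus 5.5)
Your proof is correct and follows the paper's argument: both identify $1 \circ P$ with the measurable function $x \mapsto P(x, \sndS)$ and cancel pointwise when it is finite. Otherwise, both use the indicator of the measurable set $f^{-1}(\{\infty\})$ as a nonzero effect that $1 \circ P$ absorbs, which is why no assumption about singletons is needed.
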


In $\Kern$ (and hence $\sfkern$), it is clear that finiteness implies cancellativity (i.e.\ $\sigma$-finiteness) by Proposition \ref{prop:sigma-finiteness-cancellative}.
It is therefore natural to ask whether this relationship holds generally.
We conjecture that it does not always hold, and so introduce the following definition to capture the cases where it does.

\begin{definition}\label{def:finitely_c}
	A semiadditive CD category $\C$ is \emph{finitely cancellative} if all finite morphisms in $\C$ are cancellative.
\end{definition}

Finiteness interacts with substochasticity in a natural way, as the following result shows.

\begin{proposition} \label{prop:finite-substochastic-composition}
	Let $\P : \tarS \to \sndS$ and $\Q : \sndS \to \auxS$ be morphisms in a semiadditive CD category $\C$.
	If $\P$ is finite and $\Q$ is substochastic, then $\Q \circ \P$ is finite.
\end{proposition}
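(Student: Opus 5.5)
The plan is to reduce the claim to Proposition~\ref{prop:cancellative-inequality}, applied in the commutative monoid of effects $\C(\tarS, \ind)$. By definition, finiteness of $\Q \circ \P$ means that the effect $\one \circ \Q \circ \P$ is cancellative. We already know that $\one \circ \P$ is cancellative, since $\P$ is finite. So it suffices to show
\[
	\one \circ \Q \circ \P \leq \one \circ \P
\]
in the preorder of Remark~\ref{rem:preorder-enrichment}, and then invoke Proposition~\ref{prop:cancellative-inequality} with the larger morphism $\one \circ \P$.

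To get the inequality, start from substochasticity of $\Q$, which says $\one \circ \Q \leq \one$ as effects $\sndS \to \ind$. Since the preorder is preserved under precomposition (Remark~\ref{rem:preorder-enrichment}), precomposing with $\P$ gives $\one \circ \Q \circ \P \leq \one \circ \P$. Concretely, if $\one \circ \Q + \R = \one$ for some effect $\R : \sndS \to \ind$, then bilinearity of composition \eqref{eq:bilinearity-of-composition} gives
\[
	\one \circ \Q \circ \P + \R \circ \P = \one \circ \P,
\]
which exhibits the required witness $\R \circ \P$.

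Combining the two steps, $\one \circ \Q \circ \P$ lies below the cancellative effect $\one \circ \P$, so it is cancellative by Proposition~\ref{prop:cancellative-inequality}. Hence $\Q \circ \P$ is finite. This is the same manoeuvre already used in the proof of Proposition~\ref{prop:substochastic-CD-category}. I expect no real obstacle; the only point needing care is that both Proposition~\ref{prop:cancellative-inequality} and the preorder are applied within the single hom-monoid $\C(\tarS, \ind)$, which is where both effects live.
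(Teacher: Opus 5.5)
Your proposal is correct and follows the paper's own proof essentially verbatim: precompose the substochasticity inequality $\one \circ \Q \leq \one$ with $\P$ to get $\one \circ \Q \circ \P \leq \one \circ \P$, then apply Proposition~\ref{prop:cancellative-inequality} in the effect monoid $\C(\tarS, \ind)$. Your explicit witness $\R \circ \P$ obtained via bilinearity is a correct unpacking of the appeal to Remark~\ref{rem:preorder-enrichment}.
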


\begin{proof}
	By substochasticity of $\Q$ and Remark \ref{rem:preorder-enrichment}, we have $\one \circ \Q \circ \P \leq \one \circ \P$.
	Since $\P$ is finite, the right-hand side is cancellative.
	By Proposition~\ref{prop:cancellative-inequality}, the left-hand side is also cancellative, so $\Q \circ \P$ is finite.
\end{proof}

\begin{remark} \label{rem:almost-sure-equality}
	Let $\mu : \ind \to \tarS$ and $\P, \Q : \tarS \to \sndS$ be morphisms in a CD category $\C$.
	Following \cite[Definition 5.1]{Cho2019}, we will say that \emph{$\P = \Q$ $\mu$-almost everywhere} if
	\begin{cdiag} \label{eq:almost_sure}
		\begin{tikzpicture}
	\begin{pgfonlayer}{nodelayer}
		\node [style=bn] (0) at (-3.25, -0.5) {};
		\node [style=none] (1) at (-3.25, -1.5) {};
		\node [style=none] (2) at (-4.25, 0.5) {};
		\node [style=none] (3) at (-2.25, 0.5) {};
		\node [style=state] (4) at (-3.25, -1.5) {$\mu$};
		\node [style=morphism] (5) at (-2.25, 0.5) {$\P$};
		\node [style=none] (6) at (-2.25, 1.5) {};
		\node [style=none] (7) at (-4.25, 1.5) {};
		\node [style=none] (9) at (0, 0) {$=$};
		\node [style=bn] (10) at (3, -0.5) {};
		\node [style=none] (11) at (3, -1.5) {};
		\node [style=none] (12) at (2, 0.5) {};
		\node [style=none] (13) at (4, 0.5) {};
		\node [style=state] (14) at (3, -1.5) {$\mu$};
		\node [style=morphism] (15) at (4, 0.5) {$\Q$};
		\node [style=none] (16) at (4, 1.5) {};
		\node [style=none] (17) at (2, 1.5) {};
		\node [style=none] (18) at (-4.25, 2) {$\tarS$};
		\node [style=none] (19) at (-2.25, 2) {$\sndS$};
		\node [style=none] (20) at (2, 2) {$\tarS$};
		\node [style=none] (21) at (4, 2) {$\sndS$};
	\end{pgfonlayer}
	\begin{pgfonlayer}{edgelayer}
		\draw (0) to (1.center);
		\draw [in=-180, out=-90] (2.center) to (0);
		\draw [in=270, out=0] (0) to (3.center);
		\draw (6.center) to (5);
		\draw (7.center) to (2.center);
		\draw (10) to (11.center);
		\draw [in=-180, out=-90] (12.center) to (10);
		\draw [in=270, out=0] (10) to (13.center);
		\draw (16.center) to (15);
		\draw (17.center) to (12.center);
	\end{pgfonlayer}
\end{tikzpicture}
	\end{cdiag}
	Notice that this equation makes sense in any CD category.
	However, \cite{Cho2019} consider this specifically in the context of Markov categories (or \emph{affine CD categories} in their terminology).
	One of their reasons for this is that, in $\sfkern$, the condition \eqref{eq:almost_sure} does not necessarily imply (for example) that
	\begin{equation} \label{eq:almost-sure-instantiation}
		\P(x, A) = \Q(x, A) \quad \text{for $\mu$-almost all $x \in \tarS$,}
	\end{equation}
	since pathologies can arise if $\mu$ is infinite.
	However, \eqref{eq:almost-sure-instantiation} \emph{does} follow from \eqref{eq:almost_sure} if $\mu$ is $\sigma$-finite.
	Accordingly, we will allow ourselves to consider the equation in full generality, knowing we can recover the expected meaning in $\sfkern$ by restricting to a cancellative $\mu$.
\end{remark}

\subsection{The full MH kernel} \label{sec:full-mh-kernel}

Semiadditive CD categories allow us to reason about the full MH kernel $\Pimh$ from \eqref{eq:mh-kernel}, rather than just its first term.
For this, simply observe that $\Pimh$ is a convex combination of two kernels: one that applies the involution $\phi$, and the identity kernel that leaves the input unchanged.
Accordingly, suppose $\phi : \augS \to \augS$ is a deterministic involution and $\alpha : \augS \to \ind$ is a probability in a semiadditive CD category $\C$.
The following morphism in $\C$ then constitutes an abstract version of $\Pimh$:
\[
	\Pmh \coloneqq \alpha \lact \phi + \comp{\alpha} \lact \Id_{\augS}.
\]
In string diagrams, this is as follows:
\begin{cdiag} \label{eq:mh-kernel-abstract}
	\begin{tikzpicture}
	\begin{pgfonlayer}{nodelayer}
		\node [style=none] (0) at (-3.25, 0) {$+$};
		\node [style=none] (1) at (-5.5, -1) {};
		\node [style=none] (2) at (-6.5, 0) {};
		\node [style=none] (3) at (-4.5, 0) {};
		\node [style=none] (4) at (-5.5, -2) {};
		\node [style=none] (5) at (-6.5, 0) {};
		\node [style=likelihood] (7) at (-6.5, 0.25) {$\alpha$};
		\node [style=bn] (8) at (-5.5, -1) {};
		\node [style=none] (9) at (-4.5, 2) {};
		\node [style=morphism] (10) at (-4.5, 0.5) {$\phi$};
		\node [style=none] (11) at (-4.5, 2.5) {$\augS$};
		\node [style=none] (12) at (-5.5, -2.5) {$\augS$};
		\node [style=none] (13) at (-0.5, -1) {};
		\node [style=none] (14) at (-1.5, 0) {};
		\node [style=none] (15) at (0.5, 0) {};
		\node [style=none] (16) at (-0.5, -2) {};
		\node [style=none] (17) at (-1.5, 0) {};
		\node [style=likelihood] (19) at (-1.5, 0.25) {$\comp{\alpha}$};
		\node [style=bn] (20) at (-0.5, -1) {};
		\node [style=none] (21) at (0.5, 2) {};
		\node [style=none] (23) at (0.5, 2.5) {$\augS$};
		\node [style=none] (24) at (-0.5, -2.5) {$\augS$};
		\node [style=none] (25) at (-8.75, 0) {$\coloneqq$};
		\node [style=morphism] (26) at (-11, 0) {$\Pmh$};
		\node [style=none] (27) at (-11, -0.25) {};
		\node [style=none] (28) at (-11, -1.5) {};
		\node [style=none] (29) at (-11, 0.25) {};
		\node [style=none] (30) at (-11, 1.5) {};
		\node [style=none] (31) at (-11, 2) {$\augS$};
		\node [style=none] (32) at (-11, -2) {$\augS$};
	\end{pgfonlayer}
	\begin{pgfonlayer}{edgelayer}
		\draw (4.center) to (1.center);
		\draw [bend left=45] (1.center) to (2.center);
		\draw [bend right=45] (1.center) to (3.center);
		\draw (5.center) to (2.center);
		\draw (9.center) to (3.center);
		\draw (16.center) to (13.center);
		\draw [bend left=45] (13.center) to (14.center);
		\draw [bend right=45] (13.center) to (15.center);
		\draw (17.center) to (14.center);
		\draw (21.center) to (15.center);
		\draw (30.center) to (29.center);
		\draw (28.center) to (27.center);
	\end{pgfonlayer}
\end{tikzpicture}
\end{cdiag}

\begin{example} \label{exa:MH-kernel-sfkern}
	Take $\C \coloneqq \sfkern$, and suppose $\phi$ is obtained by lifting an involutive measurable function to a deterministic kernel $\delta_{\phi(\xi)}(\dif \xi')$ as in \eqref{eq:deterministic_morphism}, where $\delta$ is the Dirac delta.
	Likewise, the identity $\Id_\augS$ in $\sfkern$ becomes the kernel $\delta_{\xi}(\dif \xi')$.
	By Remark \ref{rem:convex-combinations}, we may therefore write
	\[
		\Pmh(\xi, \dif \xi') = \alpha(\xi) \, \delta_{\phi(\xi)}(\dif \xi') + (1 - \alpha(\xi)) \, \delta_{\xi}(\dif \xi'),
	\]
	which recovers the original MH kernel \eqref{eq:mh-kernel} exactly.
\end{example}

We want to give conditions for our abstract MH kernel $\Pmh$ to be reversible.
For this we require the following two results, which relate reversibility of a sum to reversibility of its summands.

\begin{proposition} \label{prop:sum-of-reversible-is-reversible}
	Let $\mu : \ind \to \tarS$ and $\P, \Q : \tarS \to \tarS$ be morphisms in a semiadditive CD category $\C$.
	If $\P$ and $\Q$ and are $\mu$-reversible, then so is $\P + \Q$.
\end{proposition}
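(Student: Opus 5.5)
The plan is to express the left-hand side of the reversibility equation \eqref{eq:reversibility} for $\P + \Q$ as a sum of the corresponding diagrams for $\P$ and for $\Q$, and then to apply each hypothesis separately. Concretely, write $\Delta_\mu \coloneqq \cop_\tarS \circ \mu : \ind \to \tarS \otimes \tarS$. The left-hand side of \eqref{eq:reversibility} for a morphism $\R$ is then $(\Id_\tarS \otimes \R) \circ \Delta_\mu$. The right-hand side is $(\R \otimes \Id_\tarS) \circ \Delta_\mu$, or equivalently $\sigma \circ (\Id_\tarS \otimes \R) \circ \Delta_\mu$, where $\sigma$ is the swap.

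The first step uses monoidality of the enrichment \eqref{eq:monoidal-enrichment}: this gives $\Id_\tarS \otimes (\P + \Q) = (\Id_\tarS \otimes \P) + (\Id_\tarS \otimes \Q)$. Bilinearity of composition \eqref{eq:bilinearity-of-composition} then gives
\[
	(\Id_\tarS \otimes (\P+\Q)) \circ \Delta_\mu = (\Id_\tarS \otimes \P)\circ\Delta_\mu + (\Id_\tarS \otimes \Q)\circ\Delta_\mu .
\]
This is exactly the diagrammatic principle of Remark~\ref{rem:diagrammatic-reasoning-semiadditive}, applied to the single box containing $\P + \Q$.

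The second step applies $\mu$-reversibility of $\P$ and of $\Q$ to each summand. This rewrites the sum as $(\P \otimes \Id_\tarS)\circ\Delta_\mu + (\Q \otimes \Id_\tarS)\circ\Delta_\mu$. The third step reverses the first computation, again using bilinearity of $\otimes$ and of $\circ$, to obtain $((\P+\Q)\otimes \Id_\tarS)\circ\Delta_\mu$. This is the right-hand side of \eqref{eq:reversibility} for $\P+\Q$, which proves the claim.

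I do not expect a real obstacle here. The only point needing care is that both the monoidal product and composition must distribute over $+$; this is precisely why the enrichment is required to be monoidal. Neither cancellativity nor any of the extra conditions of Remark~\ref{rem:extra-properties-semiadditive} is needed.
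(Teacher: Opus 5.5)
Your proof is correct and is essentially the paper's argument. The paper also splits the reversibility diagram for $P+Q$ into a sum using Remark~\ref{rem:diagrammatic-reasoning-semiadditive} (monoidality of the enrichment plus bilinearity of composition), applies reversibility of $P$ and $Q$ to each summand, and recombines; as you note, no cancellativity is needed here, unlike in Proposition~\ref{prop:reversibility-of-difference}.
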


\begin{proof}
	If $P$ and $Q$ are $\mu$-reversible, then
	\begin{cdiag} \label{eq:reversibility-of-sum}
		\begin{tikzpicture}
	\begin{pgfonlayer}{nodelayer}
		\node [style=none] (0) at (-0.75, 0) {$=$};
		\node [style=state] (1) at (-3, -2) {$\mu$};
		\node [style=none] (2) at (-3, -1.75) {};
		\node [style=none] (3) at (-3, -1) {};
		\node [style=none] (9) at (-4, 0) {};
		\node [style=none] (10) at (-2, 0) {};
		\node [style=bn] (11) at (-3, -1) {};
		\node [style=morphism] (12) at (-4, 0.5) {$\P+\Q$};
		\node [style=none] (13) at (-4, 0.75) {};
		\node [style=none] (14) at (-4, 1.5) {};
		\node [style=none] (15) at (-2, 1.5) {};
		\node [style=none] (16) at (-4, 2) {$\tarS$};
		\node [style=none] (17) at (-2, 2) {$\tarS$};
		\node [style=state] (18) at (1.75, -2) {$\mu$};
		\node [style=none] (19) at (1.75, -1.75) {};
		\node [style=none] (20) at (1.75, -1) {};
		\node [style=none] (21) at (0.75, 0) {};
		\node [style=none] (22) at (2.75, 0) {};
		\node [style=bn] (23) at (1.75, -1) {};
		\node [style=morphism] (24) at (0.75, 0.5) {$\P$};
		\node [style=none] (25) at (0.75, 0.75) {};
		\node [style=none] (26) at (0.75, 1.5) {};
		\node [style=none] (27) at (2.75, 1.5) {};
		\node [style=none] (28) at (0.75, 2) {$\tarS$};
		\node [style=none] (29) at (2.75, 2) {$\tarS$};
		\node [style=state] (30) at (6.25, -2) {$\mu$};
		\node [style=none] (31) at (6.25, -1.75) {};
		\node [style=none] (32) at (6.25, -1) {};
		\node [style=none] (33) at (5.25, 0) {};
		\node [style=none] (34) at (7.25, 0) {};
		\node [style=bn] (35) at (6.25, -1) {};
		\node [style=morphism] (36) at (5.25, 0.5) {$\Q$};
		\node [style=none] (37) at (5.25, 0.75) {};
		\node [style=none] (38) at (5.25, 1.5) {};
		\node [style=none] (39) at (7.25, 1.5) {};
		\node [style=none] (40) at (5.25, 2) {$\tarS$};
		\node [style=none] (41) at (7.25, 2) {$\tarS$};
		\node [style=none] (42) at (3.75, 0) {$+$};
		\node [style=none] (43) at (5.25, 0.25) {};
		\node [style=none] (44) at (8.5, 0) {$=$};
		\node [style=state] (45) at (10.75, -2) {$\mu$};
		\node [style=none] (46) at (10.75, -1.75) {};
		\node [style=none] (47) at (10.75, -1) {};
		\node [style=none] (48) at (9.75, 0) {};
		\node [style=none] (49) at (11.75, 0) {};
		\node [style=bn] (50) at (10.75, -1) {};
		\node [style=morphism] (51) at (11.75, 0.5) {$\P$};
		\node [style=none] (52) at (9.75, 0) {};
		\node [style=none] (53) at (9.75, 1.5) {};
		\node [style=none] (54) at (11.75, 1.5) {};
		\node [style=none] (55) at (9.75, 2) {$\tarS$};
		\node [style=none] (56) at (11.75, 2) {$\tarS$};
		\node [style=state] (57) at (15.25, -2) {$\mu$};
		\node [style=none] (58) at (15.25, -1.75) {};
		\node [style=none] (59) at (15.25, -1) {};
		\node [style=none] (60) at (14.25, 0) {};
		\node [style=none] (61) at (16.25, 0) {};
		\node [style=bn] (62) at (15.25, -1) {};
		\node [style=morphism] (63) at (16.25, 0.5) {$\Q$};
		\node [style=none] (64) at (14.25, 0.25) {};
		\node [style=none] (65) at (14.25, 1.5) {};
		\node [style=none] (66) at (16.25, 1.5) {};
		\node [style=none] (67) at (14.25, 2) {$\tarS$};
		\node [style=none] (68) at (16.25, 2) {$\tarS$};
		\node [style=none] (69) at (13.25, 0) {$+$};
		\node [style=none] (70) at (14.25, 0.25) {};
		\node [style=none] (71) at (17.75, 0) {$=$};
		\node [style=state] (72) at (20, -2) {$\mu$};
		\node [style=none] (73) at (20, -1.75) {};
		\node [style=none] (74) at (20, -1) {};
		\node [style=none] (75) at (19, 0) {};
		\node [style=none] (76) at (21, 0) {};
		\node [style=bn] (77) at (20, -1) {};
		\node [style=morphism] (78) at (21, 0.5) {$\P+\Q$};
		\node [style=none] (79) at (19, 0) {};
		\node [style=none] (80) at (19, 1.5) {};
		\node [style=none] (81) at (21, 1.5) {};
		\node [style=none] (82) at (19, 2) {$\tarS$};
		\node [style=none] (83) at (21, 2) {$\tarS$};
	\end{pgfonlayer}
	\begin{pgfonlayer}{edgelayer}
		\draw (3.center) to (2.center);
		\draw [bend left=45] (10.center) to (3.center);
		\draw [bend left=45] (3.center) to (9.center);
		\draw (14.center) to (13.center);
		\draw (10.center) to (15.center);
		\draw (20.center) to (19.center);
		\draw [bend left=45] (22.center) to (20.center);
		\draw [bend left=45] (20.center) to (21.center);
		\draw (26.center) to (25.center);
		\draw (22.center) to (27.center);
		\draw (32.center) to (31.center);
		\draw [bend left=45] (34.center) to (32.center);
		\draw [bend left=45] (32.center) to (33.center);
		\draw (38.center) to (37.center);
		\draw (34.center) to (39.center);
		\draw (43.center) to (33.center);
		\draw (21.center) to (24);
		\draw (47.center) to (46.center);
		\draw [bend left=45] (49.center) to (47.center);
		\draw [bend left=45] (47.center) to (48.center);
		\draw (53.center) to (52.center);
		\draw (49.center) to (54.center);
		\draw (59.center) to (58.center);
		\draw [bend left=45] (61.center) to (59.center);
		\draw [bend left=45] (59.center) to (60.center);
		\draw (65.center) to (64.center);
		\draw (61.center) to (66.center);
		\draw (70.center) to (60.center);
		\draw (74.center) to (73.center);
		\draw [bend left=45] (76.center) to (74.center);
		\draw [bend left=45] (74.center) to (75.center);
		\draw (80.center) to (79.center);
		\draw (76.center) to (81.center);
	\end{pgfonlayer}
\end{tikzpicture}
	\end{cdiag}
	The first and third steps use Remark \ref{rem:diagrammatic-reasoning-semiadditive}, and the second uses reversibility of $\P$ and $\Q$.
\end{proof}

\begin{proposition} \label{prop:reversibility-of-difference}
	Let $\mu : \ind \to \tarS$ and $\P, \Q : \tarS \to \tarS$ be morphisms in a finitely cancellative semiadditive CD category $\C$, where $\mu$ is finite and $\Q$ is substochastic.
	If $\P + \Q$ and $\Q$ are both $\mu$-reversible, then so is $\P$.
\end{proposition}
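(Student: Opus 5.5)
Write $J(\R)$ for the joint morphism $\ind \to \tarS \otimes \tarS$ on the left of \eqref{eq:reversibility}, i.e.\ $J(\R) \coloneqq (\Id_\tarS \otimes \R) \circ \cop_\tarS \circ \mu$, and $\sigma$ for the swap on $\tarS \otimes \tarS$. Then $\R$ is $\mu$-reversible iff $\sigma \circ J(\R) = J(\R)$. By Remark~\ref{rem:diagrammatic-reasoning-semiadditive}, $J$ is additive: $J(\P + \Q) = J(\P) + J(\Q)$, and likewise $\sigma \circ (-)$ is additive by bilinearity of composition.

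From the hypotheses, the chain of equalities
\[
	J(\P) + J(\Q) = J(\P + \Q) = \sigma \circ J(\P+\Q) = \sigma \circ J(\P) + \sigma \circ J(\Q) = \sigma \circ J(\P) + J(\Q)
\]
holds. So it suffices to cancel $J(\Q)$, which is where finite cancellativity enters. We show that $J(\Q)$ is finite, i.e.\ that the effect $\one \circ J(\Q) = \del_{\tarS \otimes \tarS} \circ J(\Q)$ is cancellative; finite cancellativity then makes $J(\Q)$ cancellative and gives $J(\P) = \sigma \circ J(\P)$, as required.

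For finiteness, write $J(\Q) = (\Id_\tarS \otimes \Q) \circ \cop_\tarS \circ \mu$. The morphism $\Id_\tarS \otimes \Q$ is substochastic by Proposition~\ref{prop:substochastic-CD-category}, since $\Id$ and $\Q$ are and substochastic morphisms are closed under $\otimes$. The composite $\cop_\tarS \circ \mu$ is finite: $\one \circ \cop_\tarS = \del_{\tarS\otimes\tarS} \circ \cop_\tarS = \del_\tarS$ by the comonoid counit law and the compatibility of $\del$ with $\otimes$, so $\one \circ \cop_\tarS \circ \mu = \one \circ \mu$, which is cancellative because $\mu$ is finite. Proposition~\ref{prop:finite-substochastic-composition} then shows $J(\Q)$ is finite.

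The main point requiring care is this finiteness step. Cancellation must be applied to the joint morphism $J(\Q)$ rather than to $\Q$ itself, and this is exactly why the hypotheses ask for $\mu$ finite and $\Q$ substochastic. The remaining steps are routine uses of bilinearity.
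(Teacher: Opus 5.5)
Your proposal is correct and follows the paper's argument: expand $J(\P+\Q)$ additively, use the reversibility of $\P+\Q$ and of $\Q$, and cancel the $\Q$-summand after showing it is finite via Propositions~\ref{prop:substochastic-CD-category} and~\ref{prop:finite-substochastic-composition} together with finite cancellativity. The only cosmetic difference is how the $\Q$-summand is split: the paper takes the finite part to be $\mu$ and the substochastic part to be $(\Id_\tarS \otimes \Q) \circ \cop_\tarS$, whereas you take $\cop_\tarS \circ \mu$ as the finite part (checked explicitly via the counit law) and $\Id_\tarS \otimes \Q$ as the substochastic part.
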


\begin{proof}
	Suppose $\P + \Q$ and $\Q$ are $\mu$-reversible.
	Then similar to \eqref{eq:reversibility-of-sum}, we have
	\begin{cdiag*}
		\begin{tikzpicture}
	\begin{pgfonlayer}{nodelayer}
		\node [style=state] (13) at (1, -2) {$\mu$};
		\node [style=none] (14) at (1, -1.75) {};
		\node [style=none] (15) at (1, -1) {};
		\node [style=none] (16) at (0, 0) {};
		\node [style=none] (17) at (2, 0) {};
		\node [style=bn] (18) at (1, -1) {};
		\node [style=none] (20) at (0, 0) {};
		\node [style=none] (21) at (0, 1.5) {};
		\node [style=none] (22) at (2, 1.5) {};
		\node [style=none] (23) at (0, 2) {$\tarS$};
		\node [style=none] (24) at (2, 2) {$\tarS$};
		\node [style=none] (39) at (-1, 0) {$=$};
		\node [style=state] (40) at (-3, -2) {$\mu$};
		\node [style=none] (41) at (-3, -1.75) {};
		\node [style=none] (42) at (-3, -1) {};
		\node [style=none] (43) at (-4, 0) {};
		\node [style=none] (44) at (-2, 0) {};
		\node [style=bn] (45) at (-3, -1) {};
		\node [style=morphism] (46) at (-4, 0.5) {$P+Q$};
		\node [style=none] (47) at (-4, 0.75) {};
		\node [style=none] (48) at (-4, 1.5) {};
		\node [style=none] (49) at (-2, 1.5) {};
		\node [style=none] (50) at (-4, 2) {$\tarS$};
		\node [style=none] (51) at (-2, 2) {$\tarS$};
		\node [style=morphism] (52) at (2, 0.5) {$P + Q$};
		\node [style=state] (53) at (-12, -2) {$\mu$};
		\node [style=none] (54) at (-12, -1.75) {};
		\node [style=none] (55) at (-12, -1) {};
		\node [style=none] (56) at (-13, 0) {};
		\node [style=none] (57) at (-11, 0) {};
		\node [style=bn] (58) at (-12, -1) {};
		\node [style=morphism] (59) at (-13, 0.5) {$P$};
		\node [style=none] (60) at (-13, 0) {};
		\node [style=none] (61) at (-13, 1.5) {};
		\node [style=none] (62) at (-11, 1.5) {};
		\node [style=none] (63) at (-13, 2) {$\tarS$};
		\node [style=none] (64) at (-11, 2) {$\tarS$};
		\node [style=state] (65) at (-8, -2) {$\mu$};
		\node [style=none] (66) at (-8, -1.75) {};
		\node [style=none] (67) at (-8, -1) {};
		\node [style=none] (68) at (-9, 0) {};
		\node [style=none] (69) at (-7, 0) {};
		\node [style=bn] (70) at (-8, -1) {};
		\node [style=morphism] (71) at (-9, 0.5) {$Q$};
		\node [style=none] (72) at (-9, 0.25) {};
		\node [style=none] (73) at (-9, 1.5) {};
		\node [style=none] (74) at (-7, 1.5) {};
		\node [style=none] (75) at (-9, 2) {$\tarS$};
		\node [style=none] (76) at (-7, 2) {$\tarS$};
		\node [style=none] (77) at (-10.25, 0) {$+$};
		\node [style=none] (78) at (-9, 0.25) {};
		\node [style=none] (80) at (-6, 0) {$=$};
		\node [style=none] (81) at (3.25, -5.5) {$=$};
		\node [style=state] (82) at (5.25, -7.5) {$\mu$};
		\node [style=none] (83) at (5.25, -7.25) {};
		\node [style=none] (84) at (5.25, -6.5) {};
		\node [style=none] (85) at (4.25, -5.5) {};
		\node [style=none] (86) at (6.25, -5.5) {};
		\node [style=bn] (87) at (5.25, -6.5) {};
		\node [style=morphism] (88) at (6.25, -5) {$P$};
		\node [style=none] (89) at (4.25, -5.5) {};
		\node [style=none] (90) at (4.25, -4) {};
		\node [style=none] (91) at (6.25, -4) {};
		\node [style=none] (92) at (4.25, -3.5) {$\tarS$};
		\node [style=none] (93) at (6.25, -3.5) {$\tarS$};
		\node [style=state] (94) at (9.75, -7.5) {$\mu$};
		\node [style=none] (95) at (9.75, -7.25) {};
		\node [style=none] (96) at (9.75, -6.5) {};
		\node [style=none] (97) at (8.75, -5.5) {};
		\node [style=none] (98) at (10.75, -5.5) {};
		\node [style=bn] (99) at (9.75, -6.5) {};
		\node [style=morphism] (100) at (8.75, -5) {$Q$};
		\node [style=none] (101) at (8.75, -5.25) {};
		\node [style=none] (102) at (8.75, -4) {};
		\node [style=none] (103) at (10.75, -4) {};
		\node [style=none] (104) at (8.75, -3.5) {$\tarS$};
		\node [style=none] (105) at (10.75, -3.5) {$\tarS$};
		\node [style=none] (106) at (7.5, -5.5) {$+$};
		\node [style=none] (107) at (8.75, -5.25) {};
		\node [style=state] (108) at (-3, -7.5) {$\mu$};
		\node [style=none] (109) at (-3, -7.25) {};
		\node [style=none] (110) at (-3, -6.5) {};
		\node [style=none] (111) at (-4, -5.5) {};
		\node [style=none] (112) at (-2, -5.5) {};
		\node [style=bn] (113) at (-3, -6.5) {};
		\node [style=morphism] (114) at (-2, -5) {$P$};
		\node [style=none] (115) at (-4, -5.5) {};
		\node [style=none] (116) at (-4, -4) {};
		\node [style=none] (117) at (-2, -4) {};
		\node [style=none] (118) at (-4, -3.5) {$\tarS$};
		\node [style=none] (119) at (-2, -3.5) {$\tarS$};
		\node [style=state] (120) at (1, -7.5) {$\mu$};
		\node [style=none] (121) at (1, -7.25) {};
		\node [style=none] (122) at (1, -6.5) {};
		\node [style=none] (123) at (0, -5.5) {};
		\node [style=none] (124) at (2, -5.5) {};
		\node [style=bn] (125) at (1, -6.5) {};
		\node [style=morphism] (126) at (2, -5) {$Q$};
		\node [style=none] (127) at (0, -5.25) {};
		\node [style=none] (128) at (0, -4) {};
		\node [style=none] (129) at (2, -4) {};
		\node [style=none] (130) at (0, -3.5) {$\tarS$};
		\node [style=none] (131) at (2, -3.5) {$\tarS$};
		\node [style=none] (132) at (-0.75, -5.5) {$+$};
		\node [style=none] (133) at (0, -5.25) {};
		\node [style=none] (134) at (-5, -5.5) {$=$};
	\end{pgfonlayer}
	\begin{pgfonlayer}{edgelayer}
		\draw (15.center) to (14.center);
		\draw [bend left=45] (17.center) to (15.center);
		\draw [bend left=45] (15.center) to (16.center);
		\draw (21.center) to (20.center);
		\draw (17.center) to (22.center);
		\draw (42.center) to (41.center);
		\draw [bend left=45] (44.center) to (42.center);
		\draw [bend left=45] (42.center) to (43.center);
		\draw (48.center) to (47.center);
		\draw (44.center) to (49.center);
		\draw (43.center) to (46);
		\draw (55.center) to (54.center);
		\draw [bend left=45] (57.center) to (55.center);
		\draw [bend left=45] (55.center) to (56.center);
		\draw (61.center) to (60.center);
		\draw (57.center) to (62.center);
		\draw (67.center) to (66.center);
		\draw [bend left=45] (69.center) to (67.center);
		\draw [bend left=45] (67.center) to (68.center);
		\draw (73.center) to (72.center);
		\draw (69.center) to (74.center);
		\draw (78.center) to (68.center);
		\draw (84.center) to (83.center);
		\draw [bend left=45] (86.center) to (84.center);
		\draw [bend left=45] (84.center) to (85.center);
		\draw (90.center) to (89.center);
		\draw (86.center) to (91.center);
		\draw (96.center) to (95.center);
		\draw [bend left=45] (98.center) to (96.center);
		\draw [bend left=45] (96.center) to (97.center);
		\draw (102.center) to (101.center);
		\draw (98.center) to (103.center);
		\draw (107.center) to (97.center);
		\draw (110.center) to (109.center);
		\draw [bend left=45] (112.center) to (110.center);
		\draw [bend left=45] (110.center) to (111.center);
		\draw (116.center) to (115.center);
		\draw (112.center) to (117.center);
		\draw (122.center) to (121.center);
		\draw [bend left=45] (124.center) to (122.center);
		\draw [bend left=45] (122.center) to (123.center);
		\draw (128.center) to (127.center);
		\draw (124.center) to (129.center);
		\draw (133.center) to (123.center);
	\end{pgfonlayer}
\end{tikzpicture}
	\end{cdiag*}
	where we use reversibility of $\P + \Q$ in the first step, and reversibility of $\Q$ in the second.
	But now the summand involving $\Q$ is a composition of a finite morphism (namely $\mu$) and a substochastic morphism (using Proposition \ref{prop:substochastic-CD-category}), and is therefore finite by Proposition \ref{prop:finite-substochastic-composition}.
	Since $\C$ is finitely cancellative, we may cancel this summand on both sides to obtain reversibility of $\P$.
\end{proof}

We can now give necessary and sufficient conditions for reversibility of the full MH kernel $\Pmh$ from \eqref{eq:mh-kernel-abstract} in a finitely cancellative semiadditive CD category.
 
\begin{theorem} \label{thm:enrich_rev}
	Let $\C$ be a finitely cancellative semiadditive CD category, and suppose
	\begin{itemize}
		\item $\mu : \ind \to \augS$ is finite
		\item $\phi : \augS \to \augS$ is a deterministic involution
		\item $\alpha : \augS \to \ind$ is a probability
		\item $\rx = \frac{\dif (\phi \circ \mu)}{\dif \mu}$ is a Radon--Nikodym derivative.
	\end{itemize}
	Then $\Pmh$ defined in \eqref{eq:mh-kernel-abstract} is $\mu$-reversible if and only if the following ``balancing condition'' holds:
	\begin{equation} \label{eq:algebraic-balancing-condition-synthetic}
		\alpha = (\alpha \circ \phi) \ast \rx \qquad \text{$\mu$-almost everywhere.}
	\end{equation}
\end{theorem}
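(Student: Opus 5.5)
The plan is to split $\Pmh = \alpha \lact \phi + \comp{\alpha} \lact \Id_\augS$ into its two summands and handle each with the additive tools of this section. The rejection summand is reversible for free, so reversibility of $\Pmh$ reduces to reversibility of the first summand $\alpha \lact \phi$. Theorem~\ref{thm:mh-categorical} then translates that into the balancing condition.

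First I will show that $\comp{\alpha} \lact \Id_\augS$ is always $\mu$-reversible. Writing out the left-hand side of \eqref{eq:reversibility}, this morphism copies the output of $\mu$ three times and applies $\comp{\alpha}$ to one copy, leaving the other two as outputs. By coassociativity and cocommutativity of the copy map, this diagram is invariant under swapping the two output wires. Next I note that $\comp{\alpha} \lact \Id_\augS$ is substochastic. Indeed, $\one \circ (\comp{\alpha} \lact \Id_\augS) = \comp{\alpha} \ast 1 = \comp{\alpha} \leq 1$, since $\comp{\alpha}$ is a probability (Remark~\ref{rem:substochastic-interpretation}).

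For the ``if'' direction I will assume \eqref{eq:algebraic-balancing-condition-synthetic}. Unfolding $\mu$-almost-everywhere equality (Remark~\ref{rem:almost-sure-equality}) and the definition of $\ast$, this says precisely that the diagrams on the two sides of \eqref{eq:algebraic-balancing-condition} agree. The left-hand side of \eqref{eq:algebraic-balancing-condition} is $\mu$ copied with $\alpha$ applied to one copy. The right-hand side is $\mu$ copied twice, with $\alpha \circ \phi$ and $\rx$ applied to two copies. Discarding or copying the free wire does not change this equality. Theorem~\ref{thm:mh-categorical} then gives $\mu$-reversibility of $\alpha \lact \phi$, which is the diagram \eqref{eq:mh-first-term}. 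Proposition~\ref{prop:sum-of-reversible-is-reversible} now yields $\mu$-reversibility of $\Pmh$.

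For the ``only if'' direction I will assume $\Pmh$ is $\mu$-reversible. I apply Proposition~\ref{prop:reversibility-of-difference} with $\P \coloneqq \alpha \lact \phi$ and $\Q \coloneqq \comp{\alpha} \lact \Id_\augS$. Its hypotheses hold: $\C$ is finitely cancellative, $\mu$ is finite, $\Q$ is substochastic and reversible, and $\P + \Q$ is reversible. So $\alpha \lact \phi$ is $\mu$-reversible, and Theorem~\ref{thm:mh-categorical} gives \eqref{eq:algebraic-balancing-condition}, which I will repackage as \eqref{eq:algebraic-balancing-condition-synthetic}.

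I expect the main obstacle to be this repackaging, in both directions. The equation \eqref{eq:algebraic-balancing-condition} has one output wire. The almost-everywhere statement \eqref{eq:almost_sure} for the effects $\alpha$ and $(\alpha\circ\phi)\ast\rx$ also has exactly one output wire, namely the copied $\mu$ wire. Expanding $\ast$ via copying and using coassociativity, the two should match term by term. Still, I will check carefully that the wire carrying $\alpha \circ \phi$ versus $\phi$ followed by $\alpha$ lines up with the diagram in \eqref{eq:algebraic-balancing-condition}. Checking the reversibility of the identity summand is routine comonoid manipulation.
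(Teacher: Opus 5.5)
Your proposal is correct and follows the paper's proof essentially step for step: split $\Pmh$ into $\alpha \lact \phi$ and $\comp{\alpha} \lact \Id_\augS$, note that the latter is always $\mu$-reversible and substochastic, apply Theorem~\ref{thm:mh-categorical} to the former, and conclude with Propositions~\ref{prop:sum-of-reversible-is-reversible} and~\ref{prop:reversibility-of-difference}. The only cosmetic difference is that you check substochasticity of the rejection summand directly via the counit law ($\one \circ (\comp{\alpha} \lact \Id_\augS) = \comp{\alpha} \leq 1$), whereas the paper cites Proposition~\ref{prop:substochastic-CD-category}.
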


\begin{proof}
	By the definition of $\ast$ (Remark \ref{rem:effect-multiplication}) and $\mu$-almost everywhere (Remark \ref{rem:almost-sure-equality}), the balancing condition \eqref{eq:algebraic-balancing-condition-synthetic} becomes precisely \eqref{eq:algebraic-balancing-condition}.
	From Theorem \ref{thm:mh-categorical}, it follows that the first summand $\alpha \lact \phi$ of $\Pmh$ is $\mu$-reversible if and only if the balancing condition holds.
	The second summand $\comp{\alpha} \lact \Id_{\augS}$ is always $\mu$-reversible, as can be seen from inspection of its string diagram in \eqref{eq:mh-kernel-abstract}.
	By Proposition \ref{prop:substochastic-CD-category}, it is also substochastic, since $\comp{\alpha}$ is a probability (Remark \ref{rem:substochastic-interpretation}).
	The result now follows by Propositions \ref{prop:sum-of-reversible-is-reversible} and \ref{prop:reversibility-of-difference}.
\end{proof}

Instantiating this result in $\sfkern$ now yields the following:

\begin{corollary} \label{cor:andrieu2020}
	Suppose $\mu$ is a finite measure on a measurable space $\augS$, and $\phi : \augS \to \augS$ is a measurable involution such the pushforward measure $\mu^\phi$ is absolutely continuous with respect to $\mu$.
	Then the MH kernel $\Pimh$ defined in \eqref{eq:mh-kernel} is $\mu$-reversible if and only if
	\begin{equation} \label{eq:andrieu2020-balancing-condition}
		\alpha(\xi) = \alpha(\phi(\xi)) \, \rx(\xi) \qquad \text{for $\mu$-almost all $\xi \in \augS$,}
	\end{equation}
	where $\rx \coloneqq \frac{\dif \mu^\phi}{\dif\mu}$ denotes the Radon--Nikodym derivative.
\end{corollary}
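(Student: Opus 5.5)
The plan is to instantiate Theorem~\ref{thm:enrich_rev} in $\C = \sfkern$ and then translate each hypothesis and conclusion back into measure-theoretic language. There are three checks to make.

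\textbf{The category.} $\sfkern$ is a semiadditive CD category by Proposition~\ref{prop:CMon-sfkern}. It is finitely cancellative because a finite morphism $\P$ has $\P(x,\sndS)<\infty$ for all $x$ (Proposition~\ref{prop:sfkern-finite}). Each $\P(x,\dif y)$ is then a finite, hence $\sigma$-finite, measure, so $\P$ is cancellative by Proposition~\ref{prop:sigma-finiteness-cancellative}.

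\textbf{The hypotheses.}
\begin{itemize}
	\item Finiteness of $\mu$ as a morphism $\ind \to \augS$ is exactly $\mu(\augS)<\infty$, again by Proposition~\ref{prop:sfkern-finite}.
	\item The measurable involution $\phi$ lifts to a deterministic kernel $\delta_{\phi(\xi)}$ as in \eqref{eq:deterministic_morphism}. Composing this kernel with itself gives $\delta_{\phi(\phi(\xi))}=\delta_\xi$, the identity, so it is an involution.
	\item A measurable $\alpha:\augS\to[0,1]$ is a probability with $\comp{\alpha}=1-\alpha$ (Example~\ref{exa:substoch_sfkern}).
	\item Since $\mu^\phi=\phi\circ\mu$ is finite and $\mu^\phi \ll \mu$, the classical Radon--Nikodym theorem supplies a measurable $\rx\ge 0$ with $\mu^\phi(A)=\int_A \rx\,\dif\mu$. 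This is exactly $\rx = \frac{\dif(\phi\circ\mu)}{\dif\mu}$ in the sense of Definition~\ref{def:radon-nikodym}, by Example~\ref{exa:radon-nikodym}.
\end{itemize}
With these choices, $\Pmh$ is precisely $\Pimh$ by Example~\ref{exa:MH-kernel-sfkern}. Also, categorical $\mu$-reversibility in $\sfkern$ is detailed balance (Example~\ref{ex:invariance-reversibility-stoch}, whose computation did not use normalisation).

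\textbf{Translating the balancing condition.} This is where the care is needed. By Remark~\ref{rem:effect-multiplication}, the effect $(\alpha\circ\phi)\ast\rx$ is the function $\xi\mapsto\alpha(\phi(\xi))\,\rx(\xi)$. By Remark~\ref{rem:almost-sure-equality}, the categorical a.e.\ equality \eqref{eq:almost_sure} for effects $f,g$ reads
\[
	\int_A f\,\dif\mu=\int_A g\,\dif\mu \quad \text{for all measurable } A.
\]
We need this to be equivalent to $f=g$ $\mu$-a.e., which holds because $\mu$ is finite, hence $\sigma$-finite.
\begin{itemize}
	\item In the direction ``pointwise a.e.\ implies integral equality'', this is immediate.
	\item In the converse direction, $f=\alpha\le 1$ is bounded. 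Setting $A=\{f<g\}\cap\{g\le n\}$ gives two finite integrals whose equality forces $\mu(A)=0$. Letting $n\to\infty$ gives $\mu(\{f<g\})=0$.
	\item Symmetrically, take $A=\{g<f\}$. Here $\int_A g\,\dif\mu\le\int_A f\,\dif\mu\le\mu(\augS)<\infty$, so subtracting is legitimate and yields $\mu(\{g<f\})=0$.
\end{itemize}
Hence \eqref{eq:algebraic-balancing-condition-synthetic} is equivalent to \eqref{eq:andrieu2020-balancing-condition}, and the corollary follows from Theorem~\ref{thm:enrich_rev}.
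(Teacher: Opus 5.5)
Your proposal is correct and follows the paper's route: instantiate Theorem~\ref{thm:enrich_rev} in $\sfkern$ and translate each ingredient back. You also spell out two things the paper only cites, namely why $\sfkern$ is finitely cancellative and why the diagrammatic almost-everywhere equality coincides with the pointwise one. One small tidy-up: truncating by $\{g\le n\}$ only handles $\{f<g<\infty\}$, so either take $r$ finite-valued (which the Radon--Nikodym theorem for finite measures permits, and the condition does not depend on the version) or argue as in your third bullet using $\int_{\{f<g\}} g\,d\mu=\int_{\{f<g\}} f\,d\mu<\infty$.
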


\begin{proof}
	We apply Theorem \ref{thm:enrich_rev} with $\C \coloneqq \sfkern$.
	In this setting $\mu$ becomes an finite measure on a measurable space $\augS$ (Proposition \ref{prop:sfkern-finite}), and $\mu$-reversibility corresponds to the usual measure-theoretic notion of reversibility (Example \ref{ex:invariance-reversibility-stoch}).
	Any choice of involutive measurable function $\phi : \augS \to \augS$ lifts to become a determinsitic involutive kernel, and $\alpha$ and $\rx$ correspond to measurable functions $\augS \to [0, 1]$ and $\augS \to [0, \infty]$ respectively (Examples~\ref{exa:substoch_sfkern} and \ref{exa:effects-sfkern}).
	The condition $\rx = \frac{\dif (\phi \circ \mu)}{\dif \mu}$ says that $\rx$ is a Radon--Nikodym derivative in the usual measure-theoretic sense (Example \ref{exa:radon-nikodym}).
	The morphism $\Pmh$ becomes $\Pimh$ (Example \ref{exa:MH-kernel-sfkern}), while the balancing condition becomes \eqref{eq:andrieu2020-balancing-condition} (Remarks~\ref{rem:effect-multiplication} and \ref{rem:almost-sure-equality}), which gives the result.
\end{proof}

Corollary \ref{cor:andrieu2020} fully recovers Theorem \ref{thm:andrieu2020} in the case that $\mu$ and $\mu^\phi$ are both absolutely continuous with respect to each other (which ensures the Radon--Nikodym derivative $r$ exists).
It also supplies a converse statement, which was not given in \cite{Andrieu2020}.
In Section \ref{subsec:abs_cty} below, we show how we may also reason about absolute continuity within semiadditive CD categories, thereby recovering the rest of Theorem \ref{thm:andrieu2020} as well.

\subsubsection{Extension to skew-reversibility}

We can combine Theorem~\ref{thm:enrich_rev} with our earlier Proposition~\ref{prop:pi-s-reversibility-condition} (see also Remark \ref{rem:CD-cat-Markov-cat}) to obtain necessary and sufficient conditions for the more complex property of skew-reversibility (Section \ref{subsec:skew-reversibility}). %
Consider the morphism $s \circ \Pmh$ obtained by postcomposing our MH kernel with an involution $s : \augS \to \augS$.
From bilinearity of composition, we have
\begin{align*}
	s \circ \Pmh &= s \circ (\alpha \cdot \phi + \alpha^c \cdot \Id_\augS)
		= \alpha \cdot (s \circ \phi) + \alpha^c \cdot (s \circ \Id_\augS),
\end{align*}
where the second step is easy to check using string diagrams.
In the context of $\sfkern$, this recovers the following MH-like kernel considered by \cite[Section 4]{Andrieu2020}:
\begin{equation} \label{eq:skew-mh-kernel}
	\Pi_{\mathrm{sMH}}(\xi, \dif \xi') \coloneqq \alpha(\xi) \, \delta_{s(\phi(\xi))}(\dif \xi') + (1 - \alpha(\xi)) \, \delta_{s(\xi)}(\dif \xi').
\end{equation}
When $s$ is $\mu$-invariant, Proposition~\ref{prop:pi-s-reversibility-condition} implies that $\Pmh$ is $(\mu,s)$-reversible if and only if $\Pmh$ is $\mu$-reversible.
Together with Theorem~\ref{thm:enrich_rev}, this gives directly the following:

\begin{corollary} \label{cor:skew}
	Given the setup of Corollary \ref{cor:andrieu2020}, suppose we also have a measurable involution $s:\augS \to \augS$ that is $\mu$-invariant.
	Then $\Pi_{\mathrm{sMH}}$ is $(\mu,s)$-reversible if and only if the balancing condition \eqref{eq:andrieu2020-balancing-condition} holds.
\end{corollary}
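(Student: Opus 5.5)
The plan is to reduce the statement to Theorem~\ref{thm:enrich_rev}, via Proposition~\ref{prop:pi-s-reversibility-condition}. Work in $\C \coloneqq \sfkern$, with $\mu$, $\phi$, $\alpha$ and $\rx$ as in Corollary~\ref{cor:andrieu2020}, and regard $s$ as a deterministic involutive kernel via \eqref{eq:deterministic_morphism}. Write $\Pmh = \alpha \lact \phi + \comp{\alpha} \lact \Id_\augS$ as in \eqref{eq:mh-kernel-abstract}. By bilinearity of composition, as computed just before the statement, $s \circ \Pmh = \alpha \lact (s \circ \phi) + \comp{\alpha} \lact s$. In $\sfkern$ this is exactly the kernel $\Pi_{\mathrm{sMH}}$ in \eqref{eq:skew-mh-kernel}, by Remark~\ref{rem:reweighting-action} and the identification of deterministic kernels with Dirac deltas.

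First I would check that $(\mu,s)$-reversibility is meaningful for $\Pi_{\mathrm{sMH}}$. Definition~\ref{def:pi_s_rev} requires $s$ to be a deterministic involution that is $\mu$-invariant; both hold by hypothesis. By Remark~\ref{rem:CD-cat-Markov-cat}, the definition and Proposition~\ref{prop:pi-s-reversibility-condition} transfer verbatim to the CD category $\sfkern$, since normalisation is never used. Applying the equivalence (1)$\Leftrightarrow$(3) of that proposition with $\P \coloneqq s \circ \Pmh$ gives:
\[
s \circ \Pmh \text{ is } (\mu,s)\text{-reversible} \iff s \circ s \circ \Pmh \text{ is } \mu\text{-reversible}.
\]
Since $s \circ s = \Id_\augS$, the right-hand side says exactly that $\Pmh$ is $\mu$-reversible.

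Second, invoke Corollary~\ref{cor:andrieu2020}, which is Theorem~\ref{thm:enrich_rev} instantiated in $\sfkern$. Its hypotheses are those of the present setup: $\mu$ finite, $\phi$ a measurable involution with $\mu^\phi \ll \mu$, and $\rx = \frac{\dif \mu^\phi}{\dif \mu}$. It gives that $\Pimh = \Pmh$ is $\mu$-reversible if and only if \eqref{eq:andrieu2020-balancing-condition} holds. Chaining the two equivalences proves the corollary.

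The only step needing care is the first. One must confirm that the proof of Proposition~\ref{prop:pi-s-reversibility-condition} does not secretly use normalisation. It relies only on Proposition~\ref{prop:inv_rev} (determinism, the involution property and invariance of $s$) and on string-diagram manipulations valid in any CD category, so it does not. One should also note that (1)$\Leftrightarrow$(3) is applied to $s\circ\Pmh$ rather than to $\Pmh$ itself. Everything else is bookkeeping already done in Corollary~\ref{cor:andrieu2020}.
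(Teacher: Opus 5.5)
Your proof is correct and follows the paper's route. You rewrite $\Pi_{\mathrm{sMH}}$ as $s \circ \Pmh$ by bilinearity, use Proposition~\ref{prop:pi-s-reversibility-condition} (valid in $\sfkern$ by Remark~\ref{rem:CD-cat-Markov-cat}) to reduce $(\mu,s)$-reversibility to $\mu$-reversibility of $\Pmh$, and conclude with Theorem~\ref{thm:enrich_rev} as instantiated in Corollary~\ref{cor:andrieu2020}. Applying (1)$\Leftrightarrow$(3) to $s \circ \Pmh$ and using $s \circ s = \Id_\augS$ is exactly the right bookkeeping; the paper's prose loosely says $\Pmh$ itself is $(\mu,s)$-reversible, so your formulation is the more precise one.
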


Combined with our results on absolute continuity in Section~\ref{subsec:abs_cty} below, this gives a more streamlined and more general version of \cite[Proposition~3(c)]{Andrieu2020}, which gives sufficient conditions for $\Pi_{\mathrm{sMH}}$ to be reversible but does not supply a converse statement.

\subsection{Absolute continuity} \label{subsec:abs_cty}

From Remark \ref{rem:preorder-enrichment}, the hom-sets of a $\CMon$-enriched category $\C$ admit a preorder $\leq$.
They also admit a separate preorder related to \emph{absolute continuity}.
Recall that a measure $\nu$ is absolutely continuous with respect to another measure $\mu$ if $\mu(A) = 0$ implies $\nu(A) = 0$.
Intuitively, this says that $\nu$ assigns mass only to regions where $\mu$ does.
When both are absolutely continuous with respect to each other, they are said to be \emph{equivalent}.
The following extends these ideas to arbitrary morphisms in a $\CMon$-enriched category.

\begin{definition} \label{def:ac-preorder}
	Let $\P, \Q : \tarS \to \sndS$ be morphisms in a $\CMon$-enriched category $\C$.
	We write $\P \ll \Q$ if for all compatible morphisms $\R$ and $\S$ in $\C$ we have
	\begin{equation} \label{eq:ac-preorder-condition}
		\S \circ \Q \circ \R = 0 \implies \S \circ \P \circ \R = 0.
	\end{equation}
	We write $\P \equiv \Q$ if both $\P \ll \Q$ and $\Q \ll \P$ hold.
\end{definition}

Note that this does not require a CD structure: only a $\CMon$-enrichment.
In particular, we can ask what happens when $\C = \Kern$ (Example \ref{exa:sfkern}).
This is answered by the following result, proven in Section \ref{sec:appendix-ac-sfkern} of the Appendix.

\begin{proposition} \label{prop:ac-sfkern}
	Let $\P, \Q : \tarS \to \sndS$ be kernels (i.e.\ morphisms in $\Kern$).
	Then $\P \ll \Q$ if and only if $\P(x, \dif y)$ is absolutely continuous with respect to $\Q(x, \dif y)$ for all $x \in \tarS$.
	Accordingly, $\P \equiv \Q$ if and only if $\P(x, \dif y)$ and $\Q(x, \dif y)$ are equivalent as measures for all $x \in \tarS$.
\end{proposition}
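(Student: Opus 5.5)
The statement has two directions. In both we unfold the definition of $\ll$ in $\Kern$ by choosing suitable test morphisms $\R : \W \to \tarS$ and $\S : \sndS \to \auxS$. The equivalence claim then follows immediately by applying the first claim in both directions.

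For the ``if'' direction, assume $\P(x,\cdot) \ll \Q(x,\cdot)$ for every $x$, and suppose $\S \circ \Q \circ \R = 0$. Evaluating at $w \in \W$ and a measurable $C \subseteq \auxS$ gives
\[
  \int \R(w,\dif x) \int \Q(x,\dif y)\, \S(y, C) = 0 .
\]
Since integrands are nonnegative, the function $g_C(x) := \int \Q(x,\dif y)\,\S(y,C)$ vanishes for $\R(w,\cdot)$-almost every $x$. Fix such an $x$. Then $\S(\cdot, C) = 0$ holds $\Q(x,\cdot)$-almost everywhere, i.e.\ the measurable set $\{y : \S(y,C) > 0\}$ is $\Q(x,\cdot)$-null. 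By absolute continuity it is also $\P(x,\cdot)$-null, so $\int \P(x,\dif y)\,\S(y,C) = 0$. This holds for $\R(w,\cdot)$-almost every $x$, so integrating against $\R(w,\dif x)$ gives $(\S\circ\P\circ\R)(w,C) = 0$. No finiteness is needed, because only the implication ``integral zero $\Rightarrow$ integrand zero a.e.'' is used, which is valid for $[0,\infty]$-valued functions.

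For the ``only if'' direction, fix $x_0 \in \tarS$ and a measurable $B$ with $\Q(x_0,B) = 0$; we must show $\P(x_0,B) = 0$. Take $\W = \auxS = \ind$. Let $\R : \ind \to \tarS$ be the Dirac measure $\delta_{x_0}$; this is a legitimate kernel even if singletons are not measurable, since $\delta_{x_0}(A) = [x_0 \in A]$. Let $\S : \sndS \to \ind$ be the effect $y \mapsto \mathbf 1_B(y)$. Then $\S \circ \Q \circ \R = \Q(x_0,B) = 0$, so by hypothesis $\S \circ \P \circ \R = \P(x_0,B) = 0$.

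The main subtlety is in the ``if'' direction: care is needed with the almost-everywhere quantifiers when $\R$ and $\Q$ may carry infinite mass. One might also worry that measurability of $g_C$ or of $\{y : \S(y,C) > 0\}$ is needed. It is not an issue, since both follow from the kernel axioms and the standard measurability of $x \mapsto \int \Q(x,\dif y) f(y)$ for measurable $f \ge 0$. Nothing further is required, as the argument is pointwise in $w$ and $C$.
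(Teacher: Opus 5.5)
Your proof is correct and follows essentially the same route as the paper: a Dirac state $\delta_{x_0}$ with an indicator effect for the ``only if'' direction, and a pointwise argument holding for $R(w,\dif x)$-almost every $x$ for the ``if'' direction. The only difference is that you transfer nullness through the single set $\{y : S(y,C) > 0\}$, whereas the paper expands $S(\cdot,C)$ as a countable sum $\sum_i a_i \indicator_{A_i}$ with $a_i > 0$ and transfers nullness of each $A_i$. Your version is slightly more direct and avoids needing that representation.
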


It is straightforward to check that $\ll$ is reflexive and transitive and is therefore a preorder.
As for $\leq$ in Remark \ref{rem:preorder-enrichment}, it also induces an enrichment of $\C$ over preorders, as the following result shows.
We omit the proof, which is immediate from Definition \ref{def:ac-preorder}.

\begin{proposition} \label{prop:ac-preorder-enrichment}
	$\ll$ is preserved by pre- and post-composition.
\end{proposition}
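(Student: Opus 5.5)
The claim is that if $\P \ll \Q$ for $\P, \Q : \tarS \to \sndS$, then $\T \circ \P \circ \U \ll \T \circ \Q \circ \U$ for all compatible $\U : \W \to \tarS$ and $\T : \sndS \to \auxS$. The plan is to unfold Definition~\ref{def:ac-preorder} directly and use only associativity of composition; no enrichment axioms beyond the definition itself should be needed.

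We must show that for all compatible morphisms $\R : \V \to \W$ and $\S : \auxS \to \Z$,
\[
	\S \circ (\T \circ \Q \circ \U) \circ \R = 0 \implies \S \circ (\T \circ \P \circ \U) \circ \R = 0 .
\]
By associativity, the hypothesis can be rewritten as $(\S \circ \T) \circ \Q \circ (\U \circ \R) = 0$. Now apply the defining property \eqref{eq:ac-preorder-condition} of $\P \ll \Q$ with the test morphisms $\S' \coloneqq \S \circ \T$ and $\R' \coloneqq \U \circ \R$; these are compatible by construction. This yields $(\S \circ \T) \circ \P \circ (\U \circ \R) = 0$, and reassociating gives the conclusion.

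Pre- and post-composition separately are the special cases $\T = \Id$ or $\U = \Id$. The same argument, applied in both directions, shows that $\equiv$ is likewise preserved.

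There is no real obstacle here. The only point to check is that the test morphisms in the definition range over \emph{all} compatible morphisms, with arbitrary domain and codomain. That universality is exactly what allows the outer morphisms $\T$ and $\U$ to be absorbed into them.
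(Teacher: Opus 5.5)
Your proof is correct and is exactly the argument the paper has in mind; the paper omits the proof as immediate from Definition~\ref{def:ac-preorder}. As you do, the point is to absorb the outer morphisms into the universally quantified test morphisms via associativity.
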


The next two results establish basic facts about the preorder $\ll$.
Proposition \ref{prop:ac-zero-monic} shows that it has $0$ as the unique bottom element, and Proposition \ref{prop:ac-preorder-enrichment-conical} shows it is a less informative preorder than $\leq$ in many cases of interest.

\begin{proposition} \label{prop:ac-zero-monic}
    It holds that $0 \ll \P$ for every morphism $\P$. Also, if $\P \ll 0$, then $\P = 0$.
\end{proposition}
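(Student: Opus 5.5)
Both claims follow directly from Definition~\ref{def:ac-preorder} together with the bilinearity axioms \eqref{eq:bilinearity-of-composition}; no CD structure is needed.

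For the first claim, let $\P : \tarS \to \sndS$ be arbitrary. We must show that for all compatible $\R$ and $\S$, if $\S \circ \P \circ \R = 0$ then $\S \circ 0 \circ \R = 0$. The conclusion holds unconditionally. By \eqref{eq:bilinearity-of-composition}, $0 \circ \R = 0$ and then $\S \circ 0 = 0$. So the implication \eqref{eq:ac-preorder-condition} is trivially satisfied, and $0 \ll \P$.

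For the second claim, suppose $\P \ll 0$. The plan is to choose $\R$ and $\S$ to be identities, which are always compatible morphisms. The hypothesis $\S \circ 0 \circ \R = 0$ becomes $\Id_\sndS \circ 0 \circ \Id_\tarS = 0$, and this holds because identities are units for composition. Since $\P \ll 0$, the implication then gives $\Id_\sndS \circ \P \circ \Id_\tarS = 0$, that is, $\P = 0$.

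The only point needing care is that Definition~\ref{def:ac-preorder} quantifies over all compatible $\R$ and $\S$, so the identities are admissible choices. Neither step presents a real obstacle.
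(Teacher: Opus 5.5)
Your proposal is correct and follows the paper's proof essentially verbatim: the first claim comes from bilinearity of composition giving $\S \circ 0 \circ \R = 0$ for all compatible $\R$ and $\S$, and the second from taking $\R$ and $\S$ to be identities.
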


\begin{proof}
	We have directly from \eqref{eq:bilinearity-of-composition} that $\S \circ 0 \circ \R = 0$ for all compatible morphisms $\R$ and $\S$.
	This immediately gives the first claim.
	If $\P \ll 0$, then also $\S \circ \P \circ \R = 0$ for all compatible $\R$ and $\S$.
	Taking $\R$ and $\S$ to be identities shows $\P = 0$, which gives the second claim.
\end{proof}

\begin{proposition} \label{prop:ac-preorder-enrichment-conical}
	Let $\C$ be a $\CMon$-enriched category.
	If $\C$ is zero-sum-free (Remark \ref{rem:extra-properties-semiadditive}), then $\P \leq \Q$ implies $\P \ll \Q$.
\end{proposition}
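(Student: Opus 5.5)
The proof goes directly through the definitions, using only bilinearity of composition and the zero-sum-free hypothesis. Suppose $\P \leq \Q$, so by Remark~\ref{rem:preorder-enrichment} there is some $\T : \tarS \to \sndS$ with $\P + \T = \Q$. To show $\P \ll \Q$, fix compatible morphisms $\R$ and $\S$ with $\S \circ \Q \circ \R = 0$; we must deduce that $\S \circ \P \circ \R = 0$.

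First substitute $\Q = \P + \T$. Bilinearity of composition \eqref{eq:bilinearity-of-composition}, applied on both sides, gives
\[
	0 = \S \circ (\P + \T) \circ \R = \S \circ \P \circ \R + \S \circ \T \circ \R .
\]
Next invoke zero-sum-freeness (Remark~\ref{rem:extra-properties-semiadditive}): a sum equal to $0$ forces each summand to be $0$, so $\S \circ \P \circ \R = 0$, as required. An equivalent route is to note that Remark~\ref{rem:preorder-enrichment} gives $\S \circ \P \circ \R \leq \S \circ \Q \circ \R = 0$, and that zero-sum-freeness is equivalent to $0$ being the bottom element, so $\S \circ \P \circ \R = 0$.

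There is no genuine obstacle here. The only point requiring care is that zero-sum-freeness must be applied in the hom-set of $\S \circ \P \circ \R$, which may differ from $\C(\tarS, \sndS)$. This is harmless because the hypothesis is assumed for every hom-set of $\C$.
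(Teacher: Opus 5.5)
Your proof is correct and is essentially the paper's argument. The paper takes your second route, $\S \circ \P \circ \R \leq \S \circ \Q \circ \R = 0$ by Remark~\ref{rem:preorder-enrichment} followed by zero-sum-freeness, and your first route simply unfolds that remark via bilinearity into $0 = \S \circ \P \circ \R + \S \circ \T \circ \R$.
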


\begin{proof}
	Suppose $\P \leq \Q$, and $\S \circ \Q \circ \R = 0$.
	Then $\S \circ \P \circ \R \leq \S \circ \Q \circ \R$ by Remark \ref{rem:preorder-enrichment}, and so $\S \circ \P \circ \R = 0$ since $\C$ is zero-sum-free.
\end{proof}

A distinct notion of absolute continuity was previously introduced by \cite[Definition 2.2.1]{Fritz2023} for Markov categories.
Their definition applies to morphisms with potentially different domains: given $\P : \tarS \to \sndS$ and $\Q : \auxS \to \sndS$, they say that \emph{$\P$ is absolutely continuous with respect to $\Q$} if $\Q$-almost sure equality implies $\P$-almost sure equality (where their notion of \emph{almost sure equality} is an extension of \eqref{eq:almost_sure}; see \cite[Definition 2.1.1]{Fritz2023}).
In $\stoch$, this becomes: %
\[
	\text{$\Q(z, A) = 0$ for all $z \in \auxS$} \implies \text{$\P(x, A) = 0$ for all $x \in \tarS$.}
\]
Note that this is not a ``pointwise'' condition, even when $\tarS = \auxS$, and as such is generally distinct from the concept we consider (cf.\ Proposition \ref{prop:ac-sfkern} above).
However, when $\tarS = \auxS = \ind$, the following result shows that their notion implies ours under mild assumptions on $\C$.
See Section \ref{sec:appendix-ac-comparison} of the Appendix for the proof.

\begin{proposition} \label{prop:ac-comparison}
	Let $\C$ be a semiadditive CD category that is zero-monic and has no zero divisors (Remark \ref{rem:extra-properties-semiadditive}).
	If $\mu, \nu : \ind \to \tarS$ are morphisms in $\C$ such that $\nu$ is absolutely continuous with respect to $\mu$ in the sense of \cite[Definition 2.2.1]{Fritz2023}, then $\nu \ll \mu$ in our sense.
\end{proposition}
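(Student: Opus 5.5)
The plan is to unfold both notions of absolute continuity and use the two extra hypotheses on $\C$ to move between ``$\S \circ \mu = 0$'' and ``$\S = 0$ $\mu$-almost surely''. Recall that for a state $\mu : \ind \to \tarS$, the almost-sure equality of \cite[Definition 2.1.1]{Fritz2023} specialises to \eqref{eq:almost_sure}. So $\S =_\mu \T$ for $\S, \T : \tarS \to \sndS$ means
\[
(\Id_\tarS \otimes \S) \circ \cop_\tarS \circ \mu = (\Id_\tarS \otimes \T) \circ \cop_\tarS \circ \mu .
\]
Absolute continuity of $\nu$ with respect to $\mu$ in their sense then says that $\S =_\mu \T$ implies $\S =_\nu \T$. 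To establish $\nu \ll \mu$ in the sense of Definition~\ref{def:ac-preorder}, I fix $\R : \W \to \ind$ and $\S : \tarS \to \sndS$ with $\S \circ \mu \circ \R = 0$, and aim to show $\S \circ \nu \circ \R = 0$.

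\emph{Step 1 (splitting off $\R$).} Since $\mu \circ \R$ factors through the unit $\ind$, the interchange law gives
\[
\S \circ \mu \circ \R \cong (\S \circ \mu) \otimes \R
\]
up to unitors. Unitors are isomorphisms, so by bilinearity of composition zero is preserved and reflected across them. Because $\C$ has no zero divisors, either $\R = 0$ or $\S \circ \mu = 0$. If $\R = 0$, then $\S \circ \nu \circ \R = 0$ trivially, so I may assume $\S \circ \mu = 0$.

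\emph{Step 2 (upgrading to $\S =_\mu 0$).} Set $P \coloneqq (\Id_\tarS \otimes \S) \circ \cop_\tarS \circ \mu$. Compose with $1 = \del_{\tarS \otimes \sndS} = \del_\tarS \otimes \del_\sndS$. By the comonoid counit law, deleting the first copied wire leaves
\[
1 \circ P = \del_\sndS \circ \S \circ \mu = 1 \circ 0 = 0 .
\]
Zero-monicity then yields $P = 0$. The right-hand side $(\Id_\tarS \otimes 0) \circ \cop_\tarS \circ \mu$ is also $0$, since zero maps annihilate string diagrams (Remark~\ref{rem:diagrammatic-reasoning-semiadditive}). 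Hence $\S =_\mu 0$.

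\emph{Step 3 (transfer to $\nu$).} By the hypothesis, $\S =_\nu 0$, that is, $(\Id_\tarS \otimes \S) \circ \cop_\tarS \circ \nu = 0$. Postcompose with $\del_\tarS \otimes \Id_\sndS$ and use the counit law to get $\S \circ \nu = 0$. Precomposing with $\R$ gives $\S \circ \nu \circ \R = 0$, as required.

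I expect the main subtlety to be Step 2. Simply knowing $\S \circ \mu = 0$ does not obviously give the joint-diagram statement $P = 0$, and zero-monicity must be applied to $P$ itself, not to $\S \circ \mu$. The key check is that deleting both outputs of $P$ collapses to $1 \circ \S \circ \mu$. Step 1 needs some care too: it is exactly where the no-zero-divisors hypothesis enters, and it requires verifying that the identification of $\S \circ \mu \circ \R$ with a monoidal product respects zero maps.
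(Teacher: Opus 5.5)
Your proposal is correct and follows the paper's proof. You split off $\R$ via the unitor identification and the no-zero-divisors hypothesis. You then apply zero-monicity to the copied diagram $(\Id_\tarS \otimes \S) \circ \cop_\tarS \circ \mu$, whose total deletion collapses to $1 \circ \S \circ \mu = 0$, so that $\S = 0$ holds $\mu$-almost surely; finally you transfer this to $\nu$ and delete the copied wire to obtain $\S \circ \nu = 0$.
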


\subsection{Lebesgue decompositions}

A fundamental concept in measure theory is the notion of a \emph{Lebesgue decomposition}.
Given measures $\nu$ and $\mu$ on the same measurable space, recall that a Lebesgue decomposition of $\nu$ with respect to $\mu$ is an expression of the form $\nu = \nu_\ac + \nu_\si$,
where $\nu_\ac$ is a measure that is absolutely continuous with respect to $\mu$, and $\nu_\si$ is another measure that is \emph{singular} with respect to $\mu$.
Here singular means there exists a measurable set $S$ such that $\nu_\si(S) = 0$ and $\mu(S^c) = 0$.
Intuitively, $\nu_\si$ and $\mu$ assign mass to disjoint regions of the space.
It is a classical result that a Lebesgue decomposition always exists whenever $\nu$ and $\mu$ are $\sigma$-finite \cite[Section~32]{halmos2013measure}.
We now establish an abstract version of this concept in the setting of $\CMon$-enriched categories.

\begin{definition}
	Let $\C$ be a $\CMon$-enriched category.
	A \emph{meet} of morphisms $\P, \Q : \tarS \to \sndS$ in $\C$ is a greatest lower bound of $\P$ and $\Q$ with respect to the preorder $\ll$.
\end{definition}

Meets do not need always need to exist.
Moreover, when they do, they are usually not unique, since $\ll$ is only a preorder.

\begin{proposition} \label{prop:meet-isomorphism}
	Meets are preserved by pre- and post-composition with isomorphisms.
\end{proposition}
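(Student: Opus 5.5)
The plan is to show that if $\M$ is a meet of $\P, \Q : \tarS \to \sndS$, and $\varphi : \W \to \tarS$, $\psi : \sndS \to \auxS$ are isomorphisms, then $\psi \circ \M \circ \varphi$ is a meet of $\psi \circ \P \circ \varphi$ and $\psi \circ \Q \circ \varphi$. The key observation is that the map $F(\T) \coloneqq \psi \circ \T \circ \varphi$ from $\C(\tarS, \sndS)$ to $\C(\W, \auxS)$ is a bijection that preserves and reflects $\ll$. This is an order-isomorphism of preordered sets, and such maps automatically send greatest lower bounds to greatest lower bounds.

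First, $F$ is monotone for $\ll$ by Proposition~\ref{prop:ac-preorder-enrichment}, since $\ll$ is preserved by pre- and post-composition. Second, $F$ has an inverse $G(\T') \coloneqq \psi^{-1} \circ \T' \circ \varphi^{-1}$. Applying the same proposition to the isomorphisms $\varphi^{-1}$ and $\psi^{-1}$ shows $G$ is also monotone. Hence for all $\T_1, \T_2$ we have $\T_1 \ll \T_2$ if and only if $F(\T_1) \ll F(\T_2)$.

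Now we verify the meet property directly. Since $\M \ll \P$ and $\M \ll \Q$, monotonicity gives $F(\M) \ll F(\P)$ and $F(\M) \ll F(\Q)$, so $F(\M)$ is a lower bound. Suppose $\N : \W \to \auxS$ satisfies $\N \ll F(\P)$ and $\N \ll F(\Q)$. Applying $G$ and using $G \circ F = \mathrm{id}$ gives $G(\N) \ll \P$ and $G(\N) \ll \Q$. Since $\M$ is a greatest lower bound, $G(\N) \ll \M$, and applying $F$ yields $\N = F(G(\N)) \ll F(\M)$. So $F(\M)$ is a greatest lower bound of $F(\P)$ and $F(\Q)$.

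The same argument covers pre-composition only or post-composition only, by taking the other isomorphism to be an identity. There is no real obstacle. The only point to check is that composing with identities and inverses really gives $G \circ F = \mathrm{id}$ and $F \circ G = \mathrm{id}$, which holds by associativity of composition. Note also that the argument never needs meets to be unique: it applies to any chosen meet $\M$.
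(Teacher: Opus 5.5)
Your proof is correct and follows essentially the same route as the paper's. The paper also uses Proposition~\ref{prop:ac-preorder-enrichment} to see that $\psi \circ R \circ \varphi$ is a lower bound. For any other lower bound $S$, it observes that $\psi^{-1} \circ S \circ \varphi^{-1}$ is a lower bound of $P$ and $Q$, hence $\ll R$, and conjugates back; your order-isomorphism framing packages exactly these steps.
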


\begin{proof}
	Suppose $\R$ is a meet of $\P$ and $\Q$.
	Given compatible isomorphisms $\varphi$ and $\psi$, Proposition \ref{prop:ac-preorder-enrichment} immediately gives that $\psi \circ \R \circ \varphi$ is a lower bound of $\psi \circ \P \circ \varphi$ and $\psi \circ \Q \circ \varphi$ with respect to $\ll$.
	Now suppose that $\S$ is some other lower bound.
	Then $\psi^{-1} \circ \S \circ \varphi^{-1}$ is a lower bound of $\P$ and $\Q$, and so $\psi^{-1} \circ \S \circ \varphi^{-1} \ll \R$.
	Proposition~\ref{prop:ac-preorder-enrichment} then gives $\S \ll \psi \circ \R \circ \varphi$ as required.
\end{proof}

Meets allow us to capture the notion of \textit{singular measures} abstractly via the following definition.
Proposition \ref{prop:singular-measures-meet} then shows that this recovers the usual measure-theoretic notion of singularity in the case of kernels (see Section \ref{sec:appendix-singular-measures-meet} of the Appendix for the proof).

\begin{definition}
	Given morphisms $\P, \Q : \tarS \to \sndS$ in a $\CMon$-enriched category $\C$, we write $\P \perp \Q$ if $0$ is a meet of $\P$ and $\Q$.
\end{definition}

\begin{proposition} \label{prop:singular-measures-meet}
	Suppose $\P, \Q : \tarS \to \sndS$ are kernels (i.e.\ morphisms in $\Kern$).
	If $\P(x, \dif y)$ and $\Q(x, \dif y)$ are singular for all $x \in \tarS$, then $\P \perp \Q$.
	The converse also holds when all singleton subsets of $\tarS$ are measurable, and $\P$ and $\Q$ are pointwise $\sigma$-finite.
\end{proposition}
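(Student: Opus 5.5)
We must show two things. First, pointwise singularity implies $\P \perp \Q$. Second, under the extra assumptions, $\P \perp \Q$ implies pointwise singularity. In both directions we use Proposition~\ref{prop:ac-sfkern}, which identifies $\ll$ in $\Kern$ with pointwise absolute continuity, and Proposition~\ref{prop:ac-zero-monic}, which says $0$ is a bottom element. Together these reduce $\P \perp \Q$ to the statement: every kernel $\R$ with $\R \ll \P$ and $\R \ll \Q$ satisfies $\R \ll 0$, i.e.\ $\R = 0$.

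\emph{Forward direction.} Suppose $\R \ll \P$ and $\R \ll \Q$, and fix $x$.
\begin{itemize}
\item By hypothesis there is a measurable $S_x$ with $\P(x, S_x) = 0$ and $\Q(x, S_x^c) = 0$.
\item Proposition~\ref{prop:ac-sfkern} gives that $\R(x, \dif y)$ is absolutely continuous with respect to both $\P(x, \dif y)$ and $\Q(x, \dif y)$.
\item Hence $\R(x, S_x) = 0$ and $\R(x, S_x^c) = 0$, so $\R(x, \sndS) = 0$.
\end{itemize}
Since $x$ was arbitrary, $\R = 0$, and $0$ is a meet.

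\emph{Converse.} Fix $x_0$ and set $p \coloneqq \P(x_0, \dif y)$ and $q \coloneqq \Q(x_0, \dif y)$, both $\sigma$-finite. The Lebesgue decomposition of $p$ with respect to $q$ gives $p = p_\ac + p_\si$, and it suffices to show $p_\ac = 0$. Define the kernel
\[
\R(x, \dif y) \coloneqq \mathbf{1}[x = x_0]\, p_\ac(\dif y).
\]
It is measurable in $x$ because $\{x_0\}$ is measurable.

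We check that $\R$ is a lower bound of $\P$ and $\Q$ pointwise. At $x \ne x_0$ the measure $\R(x, \dif y)$ is zero. At $x_0$, we have $p_\ac \ll p$ because $p_\ac \le p$, and $p_\ac \ll q$ by construction. Proposition~\ref{prop:ac-sfkern} then gives $\R \ll \P$ and $\R \ll \Q$. Since $0$ is a meet, $\R \ll 0$, so $\R = 0$ by Proposition~\ref{prop:ac-zero-monic}. Hence $p_\ac = 0$, and $p = p_\si$ is singular with respect to $q$.

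The main obstacle is purely measure-theoretic: the existence of the Lebesgue decomposition, which is exactly where $\sigma$-finiteness enters. It can be cited from \cite[Section~32]{halmos2013measure}. One could also avoid it by using the Radon--Nikodym argument on $p + q$, which is $\sigma$-finite, and taking $\{dp/d(p+q) > 0,\ dq/d(p+q) > 0\}$. In that version, the restriction of $p$ to this set plays the role of $p_\ac$, and both $\ll$ relations are immediate.
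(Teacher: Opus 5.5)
Your proof is correct and follows the paper's own argument. In the forward direction, the singular set $S_x$ together with Proposition~\ref{prop:ac-sfkern} forces every common lower bound to vanish. In the converse, you build a kernel concentrated at $x_0$ whose value there is the absolutely continuous part of the Lebesgue decomposition, and then kill it with Proposition~\ref{prop:ac-zero-monic}. The only cosmetic difference is that you get $R \ll P$ from pointwise absolute continuity via Proposition~\ref{prop:ac-sfkern}, whereas the paper observes $R \leq P$ and deduces $R \ll P$ from that.
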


The following result shows that $\perp$ and $\ll$ interact as measure-theoretic intuitions (in terms of ``assigning mass'') would suggest.

\begin{proposition} \label{prop:ac-singular-transitivity}
	If $\P \ll \Q$ and $\Q \perp \R$, then $\P \perp \R$.
\end{proposition}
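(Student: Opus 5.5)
The plan is to unfold the definitions of $\perp$ and meet and use only transitivity of the preorder $\ll$. By definition, $\P \perp \R$ means that $0$ is a greatest lower bound of $\P$ and $\R$ with respect to $\ll$. So two things must be checked. First, $0$ is a lower bound of $\P$ and $\R$. Second, any lower bound $\S$ of $\P$ and $\R$ satisfies $\S \ll 0$.

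The first point is immediate from Proposition~\ref{prop:ac-zero-monic}, which gives $0 \ll \P$ and $0 \ll \R$.

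For the second point, let $\S$ satisfy $\S \ll \P$ and $\S \ll \R$. Since $\P \ll \Q$ by hypothesis, transitivity of $\ll$ gives $\S \ll \Q$. (Transitivity is noted just after Proposition~\ref{prop:ac-sfkern}, and follows directly from Definition~\ref{def:ac-preorder} by chaining the implications \eqref{eq:ac-preorder-condition}.) Hence $\S$ is a common lower bound of $\Q$ and $\R$. Because $\Q \perp \R$, the element $0$ is a greatest such lower bound, so $\S \ll 0$. By Proposition~\ref{prop:ac-zero-monic} this in fact forces $\S = 0$, although the weaker statement $\S \ll 0$ already suffices for the meet condition.

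Together these show that $0$ is a meet of $\P$ and $\R$, that is, $\P \perp \R$. I see no real obstacle. The only care needed is to recall that meets are taken with respect to the preorder $\ll$, so "greatest" means only that every lower bound is $\ll 0$, not that it equals $0$. Consequently no uniqueness of meets is required.
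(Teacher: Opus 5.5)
Your proof is correct and follows the same route as the paper: a common lower bound of $\P$ and $\R$ is, by transitivity of $\ll$ through $\P \ll \Q$, a common lower bound of $\Q$ and $\R$, and so it is $\ll 0$ because $\Q \perp \R$. The only addition is that you explicitly check $0$ is itself a lower bound via Proposition~\ref{prop:ac-zero-monic}, which the paper leaves implicit.
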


\begin{proof}
	If we have $\S$ with $\S \ll \P$ and $\S \ll \R$, then $\S \ll \Q$ also, and so $\S = 0$ since $0$ is a meet of $Q$ and $R$.
\end{proof}

The preceding definitions now allow us to give an abstract version of the notion of a Lebesgue decomposition in a $\CMon$-enriched category as follows.

\begin{definition}
	Let $\P, \Q : \tarS \to \sndS$ be morphisms in a $\CMon$-enriched category $\C$.
	A \emph{$\Q$-decomposition} of $\P$ is a pair of morphisms $\P_\ac, \P_\si : \tarS \to \sndS$ such that
	\begin{itemize}
		\item $\P = \P_\ac + \P_\si$,
		\item $\P_\ac$ is a meet of $\P$ and $\Q$,
		\item $\P_\si \perp \Q$.
	\end{itemize}
\end{definition}

The following result shows that this definition recovers the classical notion of a Lebesgue decomposition in the case of kernels.
See Section \ref{sec:appendix-lebesgue-decomposition-meet} of the Appendix for the proof.

\begin{proposition} \label{prop:lebesgue-decomposition-sfkern}
	Suppose $\P, \Q, \P_\ac, \P_\si : \tarS \to \sndS$ are kernels (i.e.\ morphisms in $\Kern$).
	If $\P(x, \dif y) = \P_\ac(x, \dif y) + \P_\si(x, \dif y)$ is a Lebesgue decomposition with respect to $\Q(x, \dif y)$ for all $x \in \tarS$, then $(\P_\ac, \P_\si)$ is a $\Q$-decomposition of $\P$.
	The converse holds when all singleton subsets of $\tarS$ are measurable, and $\P_\ac$ and $\P_\si$ are pointwise $\sigma$-finite.
\end{proposition}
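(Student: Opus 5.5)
The plan is to reduce both directions to the pointwise statements already available, namely Proposition~\ref{prop:ac-sfkern} (for $\ll$ and $\equiv$) and Proposition~\ref{prop:singular-measures-meet} (for $\perp$). The remaining content is then the meet condition $\P_\ac$, which needs a genuinely pointwise argument.

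\emph{Forward direction.} Assume that for each $x$, $\P(x,\dif y) = \P_\ac(x,\dif y) + \P_\si(x,\dif y)$ with $\P_\ac(x,\cdot)$ absolutely continuous with respect to $\Q(x,\cdot)$, and $\P_\si(x,\cdot)$ singular with respect to $\Q(x,\cdot)$. The equation $\P = \P_\ac + \P_\si$ holds by Proposition~\ref{prop:CMon-sfkern}. The relation $\P_\si \perp \Q$ follows from the first part of Proposition~\ref{prop:singular-measures-meet}, which needs no extra hypotheses. For the meet, $\P_\ac \ll \Q$ holds pointwise and hence by Proposition~\ref{prop:ac-sfkern}. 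Also $\P_\ac \le \P$ in $\Kern$, which is zero-sum-free, so Proposition~\ref{prop:ac-preorder-enrichment-conical} gives $\P_\ac \ll \P$.

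It remains to show $\P_\ac$ is greatest. Let $\S \ll \P$ and $\S \ll \Q$. By Proposition~\ref{prop:ac-sfkern}, for each $x$ we have $\S(x,\cdot) \ll \P(x,\cdot)$ and $\S(x,\cdot) \ll \Q(x,\cdot)$. Choose a set $A_x$ with $\Q(x,A_x)=0$ and $\P_\si(x,A_x^c)=0$. Suppose $\P_\ac(x,B)=0$. Then
\[
\P(x,B\cap A_x^c) = \P_\ac(x,B\cap A_x^c) + \P_\si(x,B\cap A_x^c) = 0,
\]
so $\S(x,B\cap A_x^c)=0$. Also $\Q(x,B\cap A_x)=0$, so $\S(x,B\cap A_x)=0$. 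Hence $\S(x,\cdot)\ll\P_\ac(x,\cdot)$ for every $x$, and Proposition~\ref{prop:ac-sfkern} gives $\S\ll\P_\ac$. This step uses no measurability of $x\mapsto A_x$, since everything is pointwise.

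\emph{Converse.} Assume singletons of $\tarS$ are measurable and $\P_\ac$, $\P_\si$ are pointwise $\sigma$-finite. The second part of Proposition~\ref{prop:singular-measures-meet} needs $\P_\si$ and $\Q$ pointwise $\sigma$-finite, but $\Q$ need not be. So first I would prove a pointwise version of singularity directly, as follows. The meet property gives $\P_\ac \ll \Q$, hence pointwise absolute continuity by Proposition~\ref{prop:ac-sfkern}. For singularity of $\P_\si(x,\cdot)$ with respect to $\Q(x,\cdot)$, I would take the classical Lebesgue decomposition of the $\sigma$-finite measure $\P_\si(x,\cdot)$ with respect to $\Q(x,\cdot)$. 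This is available when the measure being decomposed is $\sigma$-finite, using $\sigma$-finiteness of $\P_\si$ and working on a $\sigma$-finite piece, or via the Radon–Nikodym-free construction maximising $\P_\si(x,A)$ over $\Q(x,\cdot)$-null sets $A$. Write the absolutely continuous part as $\nu_x$.

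Next I would lift $\nu_x$ to a kernel $\S$ supported at the single point $x$. Using the measurable singleton $\{x\}$, set $\S := \mathbb{1}_{\{x\}}\cdot \nu_x$, viewed as a kernel that is $\nu_x$ at $x$ and $0$ elsewhere; this is measurable in the input because $\{x\}$ is measurable. Then $\S\ll\P_\si$ and $\S\ll\Q$ pointwise, so $\S\ll\P_\si,\Q$ by Proposition~\ref{prop:ac-sfkern}. Since $\P_\si\perp\Q$, we get $\S=0$, so $\nu_x=0$ and $\P_\si(x,\cdot)$ is singular with respect to $\Q(x,\cdot)$.

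\emph{Main obstacle.} This step, producing the pointwise Lebesgue decomposition of $\P_\si(x,\cdot)$ against a possibly non-$\sigma$-finite $\Q(x,\cdot)$, is where I expect the main difficulty. If it proves troublesome, I would instead mimic the proof of Proposition~\ref{prop:singular-measures-meet} with $\Q$ replaced by $\Q$ restricted to the support-type sets that $\P_\si$ charges. Finally, $\P=\P_\ac+\P_\si$ holds pointwise, which completes the Lebesgue decomposition.
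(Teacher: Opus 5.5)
Your proof is correct. The forward direction is the paper's argument essentially verbatim: you split $B$ along a set witnessing singularity of $\P_\si(x,\cdot)$ and $\Q(x,\cdot)$.

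Your converse follows the same skeleton as the paper's route through Proposition~\ref{prop:singular-measures-meet}. You form a kernel concentrated at the measurable singleton $\{x\}$ from the absolutely continuous part of $\P_\si(x,\cdot)$ with respect to $\Q(x,\cdot)$, and kill it using $\P_\si \perp \Q$. Where you differ is in the lemma that produces that absolutely continuous part.

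The paper's proof of the converse actually assumes that $\P$ and $\Q$ (rather than $\P_\ac$ and $\P_\si$) are pointwise $\sigma$-finite. It then appeals to Proposition~\ref{prop:singular-measures-meet}, whose converse rests on the classical Lebesgue decomposition theorem. That theorem needs $\Q(x,\cdot)$ to be $\sigma$-finite, which the statement does not assume. You instead decompose $\P_\si(x,\cdot)$ against an \emph{arbitrary} $\Q(x,\cdot)$, using only $\sigma$-finiteness of $\P_\si$ (not even of $\P_\ac$). So your argument establishes the statement exactly as written, whereas the paper's written proof strictly covers only the case where $\Q$ is also pointwise $\sigma$-finite.

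The step you flag as the main obstacle is routine, so no fallback is needed. Write $\nu \coloneqq \P_\si(x,\cdot)$ and let $\nu'$ be a finite measure with the same null sets, which exists since $\nu$ is $\sigma$-finite. Put $s \coloneqq \sup\{\nu'(A) : \Q(x,A) = 0\} < \infty$. Pick $\Q(x,\cdot)$-null sets $A_n$ with $\nu'(A_n) \to s$, and let $N \coloneqq \bigcup_n A_n$, so that $\Q(x,N) = 0$ and $\nu'(N) = s$.

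For any $\Q(x,\cdot)$-null $B$, the set $B \cup N$ is also null. Hence $s + \nu'(B \setminus N) = \nu'(B \cup N) \le s$, so $\nu'(B \setminus N) = 0$ and therefore $\nu(B \setminus N) = 0$. It follows that $\nu_x \coloneqq \nu|_{N^c}$ is absolutely continuous with respect to $\Q(x,\cdot)$. Once your singleton argument gives $\nu_x = 0$, you have $\nu(N^c) = 0$ and $\Q(x,N) = 0$, which is the required singularity directly.
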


\subsubsection{Completing Theorem \ref{thm:andrieu2020}}

We now wish to complete the proof of Theorem \ref{thm:andrieu2020} using the tools developed above.
It remains to show the first part of this result, which involves a decomposition of the target measure $\mu$ according to an involution $\phi$.
Lebesgue decompositions capture this abstractly via the following result.

\begin{theorem} \label{thm:lebesgue-decomposition-involution}
	Let $\C$ be a zero-sum-free $\CMon$-enriched category.
	Let $\P : \tarS \to \sndS$ and $\phi : \sndS \to \sndS$ be morphisms in $\C$, where $\phi$ is an involution, and suppose $(\P_\ac, \P_\si)$ is a $(\phi \circ \P)$-decomposition of $\P$.
	Then we have both of the following:
	\begin{enumerate}[(1)]
		\item $\P_\ac \equiv \phi \circ \P_\ac$
		\item $\P_\si \perp \phi \circ \P_\si$
	\end{enumerate}
\end{theorem}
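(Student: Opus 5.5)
The plan is to treat the two claims separately. Claim (1) should follow purely from the universal property of meets together with the fact that $\phi$ is an isomorphism with $\phi^{-1} = \phi$. Claim (2) should follow from the order relation $\P_\si \leq \P$, converted into absolute continuity using zero-sum-freeness, and then combined with Proposition~\ref{prop:ac-singular-transitivity}.

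For (1), by definition $\P_\ac$ is a meet of $\P$ and $\phi \circ \P$, so $\P_\ac \ll \P$ and $\P_\ac \ll \phi \circ \P$.
\begin{itemize}
	\item Postcomposing both relations with $\phi$ is allowed by Proposition~\ref{prop:ac-preorder-enrichment}. Using $\phi \circ \phi = \Id_\sndS$, this gives $\phi \circ \P_\ac \ll \phi \circ \P$ and $\phi \circ \P_\ac \ll \P$.
	\item Hence $\phi \circ \P_\ac$ is a lower bound of $\P$ and $\phi \circ \P$ with respect to $\ll$. Since $\P_\ac$ is a greatest lower bound, $\phi \circ \P_\ac \ll \P_\ac$.
	\item Postcomposing this once more with $\phi$ and using the involution property gives $\P_\ac \ll \phi \circ \P_\ac$.
\end{itemize}
Together these give $\P_\ac \equiv \phi \circ \P_\ac$. (Equivalently, one could note that the pair $\{\P, \phi\circ\P\}$ is mapped to itself by $\phi$, and invoke Proposition~\ref{prop:meet-isomorphism}.) This part does not use zero-sum-freeness.

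For (2), we start from $\P = \P_\ac + \P_\si$, which says exactly that $\P_\si \leq \P$ in the preorder of Remark~\ref{rem:preorder-enrichment}.
\begin{itemize}
	\item Since $\leq$ is preserved by postcomposition, $\phi \circ \P_\si \leq \phi \circ \P$.
	\item As $\C$ is zero-sum-free, Proposition~\ref{prop:ac-preorder-enrichment-conical} upgrades this to $\phi \circ \P_\si \ll \phi \circ \P$.
	\item By assumption $\P_\si \perp \phi \circ \P$. Since a meet of $\P$ and $\Q$ is the same as a meet of $\Q$ and $\P$, the relation $\perp$ is symmetric, so also $\phi \circ \P \perp \P_\si$.
	\item Proposition~\ref{prop:ac-singular-transitivity} now yields $\phi \circ \P_\si \perp \P_\si$, and by symmetry again $\P_\si \perp \phi \circ \P_\si$.
\end{itemize}

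I do not expect a serious obstacle. The only points needing care are making the symmetry of $\perp$ explicit, and noting that zero-sum-freeness is exactly what converts $\P_\si \leq \P$ into an absolute-continuity statement.
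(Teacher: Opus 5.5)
Your proof is correct and follows essentially the paper's argument. In (1) you apply the greatest-lower-bound property of $\P_\ac$ itself, whereas the paper transports the meet to $\phi \circ \P_\ac$ via Proposition~\ref{prop:meet-isomorphism}; in (2) you route the paper's inline argument through Proposition~\ref{prop:ac-singular-transitivity} and the symmetry of $\perp$. Your remark that (1) needs no zero-sum-freeness is also right, since $\P_\ac \ll \P$ is already part of $\P_\ac$ being a meet.
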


\begin{proof}
	(1) By Proposition \ref{prop:meet-isomorphism}, $\phi \circ \P_\ac$ is a meet of $\phi \circ \P$ and $\P = \phi \circ \phi \circ \P$.
	Since $(\P_\ac, \P_\si)$ is a $(\phi \circ \P)$-decomposition, we also have $\P_\ac \ll \phi \circ \P$ (by definition) and $\P_\ac \ll \P$ (by Proposition \ref{prop:ac-preorder-enrichment-conical}, since $\P_\ac \leq \P$).
	The meet property then implies $\P_\ac \ll \phi \circ \P_\ac$, which establishes one direction of (1).
	The other direction follows from Proposition \ref{prop:ac-preorder-enrichment}, which gives
    \[
		\phi \circ \P_\ac \ll \phi \circ \phi \circ \P_\ac = \P_\ac
	\]
	and hence $\phi \circ \P_\ac \equiv \P_\ac$ as required.

	(2) Suppose $R \ll \P_\si$ and $R \ll \phi \circ \P_\si$.
	Since $\P_\si \leq \P$, we have $\P_\si \ll \P$ and hence $\phi \circ \P_\si \ll \phi \circ \P$ by Propositions \ref{prop:ac-preorder-enrichment-conical} and \ref{prop:ac-preorder-enrichment} respectively.
	It follows that $R \ll \phi \circ \P$.
	But now since $0$ is a meet of $\P_\si$ and $\phi \circ \P$, we therefore have $R \ll 0$ as required.
\end{proof}

Instantiating this result in $\sfkern$ now gives the remaining part of Theorem \ref{thm:andrieu2020}:

\begin{corollary}
	Suppose $\mu$ is a $\sigma$-finite measure on $\augS$, and $\phi : \augS \to \augS$ is a measurable involution.
	Then there exists a measurable $S \subseteq \augS$ such that the restriction $\restr{\mu}{S}$ and its pushforward $\restr{\mu}{S}^\phi$ are equivalent, and $\restr{\mu}{S^c}$ and its pushforward $\restr{\mu}{S^c}^\phi$ are singular.
\end{corollary}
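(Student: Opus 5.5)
We will obtain the corollary by instantiating Theorem~\ref{thm:lebesgue-decomposition-involution} in $\Kern$, taking $\tarS \coloneqq \ind$, $\sndS \coloneqq \augS$, $\P \coloneqq \mu$, and $\phi$ the deterministic kernel $\delta_{\phi(\xi)}(\dif\xi')$, which is an involution in $\Kern$. By Remark~\ref{rem:extra-properties-semiadditive}, $\Kern$ is zero-sum-free, so the theorem applies as soon as we have a $(\phi\circ\mu)$-decomposition of $\mu$. Its conclusions are (1) $\mu_\ac \equiv \phi\circ\mu_\ac$ and (2) $\mu_\si \perp \phi\circ\mu_\si$. We then translate these via Propositions~\ref{prop:ac-sfkern} and~\ref{prop:singular-measures-meet}. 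The singleton space $\ind$ has measurable singletons, so both propositions apply in their converse directions once the relevant $\sigma$-finiteness hypotheses are met.

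To produce the decomposition, we note that $\mu$ is $\sigma$-finite, and so is $\mu^\phi = \phi\circ\mu$: if the sets $A_n$ partition $\augS$ with $\mu(A_n)<\infty$, then the sets $\phi^{-1}(A_n)=\phi(A_n)$ do the same for $\mu^\phi$. The classical Lebesgue decomposition therefore applies. Moreover, we can choose it in set form: there is a measurable $S$ with $\mu^\phi(S^c)=0$ such that $\mu_\ac = \restr{\mu}{S}\ll\mu^\phi$ and $\mu_\si=\restr{\mu}{S^c}$, with $\mu_\si\perp\mu^\phi$ witnessed by $S$. This set form follows from the standard proof, using the Hahn/singular set for $\mu$ relative to $\mu^\phi$ (or for $\mu$ relative to $\mu+\mu^\phi$). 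Both parts are restrictions of $\mu$, hence $\sigma$-finite. By the forward direction of Proposition~\ref{prop:lebesgue-decomposition-sfkern}, $(\restr{\mu}{S},\restr{\mu}{S^c})$ is a $(\phi\circ\mu)$-decomposition of $\mu$ in $\Kern$.

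Applying Theorem~\ref{thm:lebesgue-decomposition-involution} gives $\restr{\mu}{S}\equiv\phi\circ\restr{\mu}{S}=\restr{\mu}{S}^\phi$. By Proposition~\ref{prop:ac-sfkern} this is exactly equivalence of measures. It also gives $\restr{\mu}{S^c}\perp\restr{\mu}{S^c}^\phi$. Both measures here are $\sigma$-finite, the pushforward by the same argument as above, so the converse part of Proposition~\ref{prop:singular-measures-meet} yields singularity in the measure-theoretic sense. Together these give the statement.

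The main obstacle is the step establishing that a Lebesgue decomposition of $\mu$ with respect to $\mu^\phi$ can be taken of restriction form $\restr{\mu}{S}$, $\restr{\mu}{S^c}$, rather than an arbitrary pair $\nu_\ac+\nu_\si$. We plan to argue this directly. Take $N$ with $\mu^\phi(N)=0$ and $\nu_\si(N^c)=0$, and set $S\coloneqq N^c$. Then $\restr{\mu}{S}=\restr{\nu_\ac}{S}=\nu_\ac$, because $\nu_\ac\ll\mu^\phi$ gives $\nu_\ac(N)=0$. Similarly $\restr{\mu}{S^c}=\restr{\nu_\si}{N}=\nu_\si$. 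The remaining checks are routine.
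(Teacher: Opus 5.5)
Your proof is correct, but you obtain the set $S$ by a different route from the paper.

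The paper starts from an arbitrary Lebesgue decomposition $\mu=\mu_{\mathrm{ac}}+\mu_{\mathrm{si}}$ with respect to $\mu^\phi$ and applies Theorem~\ref{thm:lebesgue-decomposition-involution} to it directly. Only afterwards does it extract $S$, and it does so synthetically:
\begin{itemize}
\item Proposition~\ref{prop:ac-singular-transitivity}, applied to $\mu_{\mathrm{ac}}\ll\mu^\phi$ and $\mu^\phi\perp\mu_{\mathrm{si}}$, gives $\mu_{\mathrm{ac}}\perp\mu_{\mathrm{si}}$;
\item the converse of Proposition~\ref{prop:singular-measures-meet} turns this into a set $S$ with $\mu_{\mathrm{ac}}(S^c)=0=\mu_{\mathrm{si}}(S)$;
\item hence $\mu_{\mathrm{ac}}=\restr{\mu}{S}$ and $\mu_{\mathrm{si}}=\restr{\mu}{S^c}$.
\end{itemize}

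You instead do this step first, with a short measure-theoretic lemma. Any Lebesgue decomposition is already of restriction form, witnessed by the $\mu^\phi$-null set $N$ that carries $\nu_{\mathrm{si}}$. Your check that $\restr{\mu}{N^c}=\nu_{\mathrm{ac}}$ and $\restr{\mu}{N}=\nu_{\mathrm{si}}$ is right. You then feed this decomposition into the theorem, so the categorical machinery is used only to transfer equivalence and singularity to the pushforwards.

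\emph{What each route buys.} Yours is more elementary. It avoids Proposition~\ref{prop:ac-singular-transitivity} and the extra use of the converse of Proposition~\ref{prop:singular-measures-meet} on the pair $(\mu_{\mathrm{ac}},\mu_{\mathrm{si}})$. The paper's route keeps more of the reasoning inside the synthetic framework, and works from whatever decomposition is handed to it; that is the point the corollary is meant to illustrate.

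Two minor points in your favour:
\begin{itemize}
\item Working in $\Kern$ matches the category in which Propositions~\ref{prop:ac-sfkern}, \ref{prop:singular-measures-meet} and~\ref{prop:lebesgue-decomposition-sfkern} are actually stated; the paper phrases its proof in $\sfkern$.
\item You explicitly verify that $\mu^\phi$ and $(\restr{\mu}{S^c})^{\phi}$ are $\sigma$-finite, via $\phi^{-1}(A_n)=\phi(A_n)$. The paper uses this tacitly.
\end{itemize}
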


\begin{proof}
	Since $\mu$ and $\mu^\phi$ are both $\sigma$-finite measures, there exists a $\mu^\phi$-decomposition $\mu = \mu_\ac + \mu_\si$ \cite[Section 32, Theorem C]{halmos2013measure}.
	Regarding these as morphisms in $\sfkern$, Theorem \ref{thm:lebesgue-decomposition-involution} gives $\mu_\ac \equiv \mu_\ac^\phi$ and $\mu_\si \perp \mu_\si^\phi$, and Proposition \ref{prop:ac-singular-transitivity} gives $\mu_\ac \perp \mu_\si$.
	This shows $\mu_\ac$ and $\mu_\ac^\phi$ are equivalent (Proposition \ref{prop:ac-sfkern})
	and $\mu_\si$ and $\mu_\si^\phi$ are singular (Proposition \ref{prop:singular-measures-meet}).
	Similarly, $\mu_\ac$ and $\mu_\si$ are singular, which by definition just says there is a measurable $S \subseteq \augS$ with $\mu_\ac(S^c) = 0$ and $\mu_\si(S) = 0$, and hence $\mu_\ac = \restr{\mu}{S}$ and $\mu_\si = \restr{\mu}{S^c}$ as required.
\end{proof}
\section{Conclusion and future work}

In this work we developed an enrichment of CD categories over commutative monoids, providing a minimal setting for reasoning about sums of morphisms, and used it to study the involutive MH framework of \cite{Andrieu2020}.
This gave a purely algebraic derivation of their ``balancing condition'' in Theorem \ref{thm:andrieu2020}, together with a converse result and an immediate extension to the skew-reversible setting.

An interesting direction for future work is to give general conditions under which Lebesgue decompositions exist in this setting, rather than only isolating the properties that such decompositions ought to satisfy.
This appears to require a suitable version of the Radon--Nikodym theorem, which we conjecture may be attainable by assuming that the preorder $\leq$ induced by the enrichment is complete.
We also expect that making this precise will involve an explicit notion of \emph{predicates}, suggesting an interesting connection with \emph{effectus theory} \cite{cho2015introduction}, which we leave to future work.

\bibliography{bib-mcmc_cats}

\appendix

\section{Measure theory terminology and notation} \label{app:measure_theory}

We include here some basic terminology and notation from measure theory that we use in the paper.

Recall that a \emph{kernel} is a function $\K : \tarS \times \Sigma_{\sndS} \to [0, \infty]$ such that each function $x \mapsto \K(x, A)$ is measurable, and each function $A \mapsto \K(x, A)$ is a measure.
A \emph{Markov kernel} is similar, but with the codomain restricted to $[0, 1]$, so that each $A \mapsto \K(x, A)$ becomes a probability measure.
When describing kernels, it is often convenient to use ``infinitesimal'' notation.
So for example, we denote by $\K(x, \dif y)$ the measure $A \mapsto \K(x, A)$ on $\sndS$ obtained from a fixed input $x \in \tarS$.
(Sometimes in the literature this measure is written as $\K(x, \cdot)$ or $\K(x, -)$ instead.)

For other notation and terminology we use, let $\mu$ and $\nu$ be measures on a measurable space $\tarS$.
\begin{itemize}
	\item Given measurable $A \subseteq \tarS$, the \emph{restriction} of $\mu$ to $A$ is the measure $\restr{\mu}{A}$ on $\tarS$ defined by $\restr{\mu}{A}(B) := \mu(A \cap B)$ for all measurable $B \subseteq \tarS$.
	\item Given a measurable function $\phi : \tarS \to \sndS$, the \emph{pushforward of $\mu$ by $\phi$} is the measure $\mu^\phi$ on $\sndS$ defined by $\mu^\phi(A) := \mu(\phi^{-1}(A))$ for all measurable $A \subseteq \sndS$.
	\item The measure $\nu$ is \emph{absolutely continuous} with respect to $\mu$ if for all measurable $A \subseteq \tarS$, $\mu(A) = 0$ implies $\nu(A) = 0$.
	\item The measures $\mu$ and $\nu$ are \emph{equivalent} if they are both absolutely continuous with respect to each other.
	\item The measures $\mu$ and $\nu$ are \emph{singular} with respect to each other if there is some measurable $A \subseteq \tarS$ such that $\mu(A) = 0$ and $\nu(A^c) = 0$.
	\item Given $x \in \tarS$, the \emph{Dirac delta} at $x$ is the measure defined for any measurable $A\subset \tarS$ by
	\[
		\delta_x(A) := \begin{cases}
			1, & x \in A, \\
			0, & x \not\in A,
		\end{cases}
	\]
\end{itemize}

\section{Additional background on involutive MH} \label{app:MCMC_involutions}

We provide here some additional discussion of the involutive MH framework of \cite{Andrieu2020}: how it relates to our Theorem \ref{thm:andrieu2020}, and how existing MH algorithms can be recovered within it.

\subsection{Form of the acceptance probability}

In \cite{Andrieu2020}, the authors state their version of Theorem \ref{thm:andrieu2020} in a slightly different style than we do.
Specifically, they start with a function of the form $a: [0,\infty) \to [0,1]$ (referred to as a \emph{balancing function} \cite{Zanella2020}) with the property that
\[
	a(t) = \begin{cases}
		0 & t = 0; \\
		t \cdot a(1/t) & t > 0.
	\end{cases}
\]
They then \emph{define} $\alpha \coloneqq a \circ \rx$, and show that it satisfies the balancing condition \eqref{eq:balancing-ae} as a consequence of this choice.
Then, in their proof of reversibility, they make no further use of $a$: they only use the fact that it implies \eqref{eq:balancing-ae}.
This motivates our approach, which is of course implied whenever $\alpha$ has the form $a \circ \rx$ as in \cite{Andrieu2020}.

\subsection{Recovering existing MH algorithms}

\subsubsection{Classical Metropolis--Hastings}

The Metropolis--Hastings algorithm is typically presented in the following form (assuming for convenience that all the distributions and kernels involved have positive densities):

\begin{algorithm} \label{alg:MH}
	Given a target density $\pi$ on $\tarS$, proposal kernel $\Q:\tarS \to \tarS$ with density $q(x,y)$ and current state $x\in \tarS$:
	\begin{enumerate}
		\item Sample $Y\sim \Q(x,\dif y)$.
		\item Return $Y$ with probability $\min\left(1, \frac{\pi(Y) \, q(Y,x)}{\pi(x) \, q(x,Y)}\right)$.
		\item Otherwise return $x$.
	\end{enumerate}
\end{algorithm}

At first glance, it is not immediately obvious how this algorithm can be recovered in terms of the involutive MH kernel $\Pimh$ from \eqref{eq:mh-kernel}, and so we briefly outline how this is done here.
As described in Section \ref{sec:augmented-reversibility}, the key idea is use a state space augmentation of the form in Corollary \ref{cor:derived-reversible-morphism}.
Specifically, let $\auxS \coloneqq \tarS$, and take $Q : \tarS \to \tarS$ to be some proposal kernel of choice.
Finally, let $P : \tarS \otimes \tarS \to \tarS \otimes \tarS$ be the kernel $\Pimh$ from \eqref{eq:mh-kernel}.
This requires choosing an involution $\phi : \tarS \otimes \tarS \to \tarS \otimes \tarS$ and acceptance probability $\alpha : \tarS \otimes \tarS \to [0, 1]$.
For these, make the following choices:
\begin{itemize}
	\item $\phi$ is the ``swap'' isomorphism: $\phi(x,y) = (y,x)$
	\item $\alpha$ computes the acceptance probability in step (2) of Algorithm \ref{alg:MH} above: $\alpha(x, y) \coloneqq \min\left(1, \frac{\pi(y) \, q(y,x)}{\pi(x) \, q(x,y)}\right)$.
\end{itemize}
It may now be checked that substituting these choices into the kernel $\Pi$ from \eqref{eq:mu-definition-in-proof} recovers exactly the transition kernel described in Algorithm \ref{alg:MH}.

\subsubsection{The Exchange Algorithm}

As a more advanced example, consider the \textit{Exchange Algorithm} \cite{Murray2012}, which was a major breakthrough in MCMC for \textit{doubly-intractable} models: the situation where we wish to sample from a posterior distribution $\pi(x)\propto p(x) \, \ell_x(z_\mathrm{obs})$, where $p$ is a prior density on $\tarS$, and given $x\in \tarS$, $\ell_x(\cdot)$ is the likelihood for the observed data $z_\mathrm{obs}\in \auxS$. The Exchange Algorithm is tailored for situations where this likelihood has an {intractable} normalising constant (and hence cannot be evaluated), but for any $x\in \tarS$, we are able to generate synthetic data from $\ell_x$. Given a proposal kernel $\Q:\tarS \to \tarS$ as in Algorithm~\ref{alg:MH}, the Exchange Algorithm can be phrased in the framework of \cite{Andrieu2020} as follows:

We take the augmented space $\augS := \tarS \times \auxS \times \tarS$ with augmented measure
$$
\mu(\dif x, \dif z, \dif y) = \pi(x) \, \ell_y(\dif z) \, \Q(x,\dif y),
$$
and the involution is $\phi(x,z,y)= (y,z,x)$. Finally, the acceptance probability is
$$
\alpha(x,z,y) = \min\left(1, \frac{\pi(y) \, q(y,x) \, \ell_x(z)}{\pi(x) \, q(x,y) \, \ell_y(z)}\right) = \min\left(1, \frac{p(y) \, q(y,x) \,\ell_y(z_{\mathrm{obs}}) \, \ell_x(z)}{p(x) \, q(x,y) \, \ell_x(z_{\mathrm{obs}}) \, \ell_y(z)}\right),
$$
which can be evaluated since the ratio $\frac{\ell_y(z_{\mathrm{obs}}) \, \ell_x(z)}{\ell_x(z_{\mathrm{obs}}) \, \ell_y(z)}$ cancels the intractable normalising constants.
The results of \cite{Andrieu2020} (which we recover synthetically via our Theorem~\ref{thm:enrich_rev}) then ensure the resulting Markov chain \eqref{eq:mh-kernel} is $\mu$-reversible.

\section{Background on categorical probability} \label{app:intro_to_category_theory}

Copy-discard (CD) and Markov categories can be understood in terms of the following algebraic structures of increasing specificity:
\begin{align*}
	\text{Categories} &\supseteq \text{Symmetric monoidal categories} \\
		&\supseteq \text{CD categories} \\
		&\supseteq \text{Markov categories}.
\end{align*}
Much has been written about these topics in the literature to-date.
We refer the reader to \cite{perrone2023starting} for an introduction to categories and symmetric monoidal categories, and to \cite{Cho2019,Fritz2020} for CD and Markov categories.
We give here an overview of the key terminology and notation we need in the main text.

\subsection{Symmetric monoidal categories}

Recall that a \emph{category} $\C$ is a collection of \emph{objects} $\tarS, \sndS, \ldots$ and a collection of \emph{morphisms} $\P : \tarS \to \sndS, \Q : \sndS \to \auxS, \ldots$, together with an associative composition operation $\circ$ defined for pairs of compatible morphisms, and a distinguished \emph{identity} morphism $\Id_\tarS : \tarS \to \tarS$ defined for each object $\tarS$.
Given objects $\tarS$ and $\sndS$, the set of morphisms from $\tarS$ to $\sndS$ in $\C$ is denoted by $\C(\tarS, \sndS)$, and is referred to as a \emph{hom-set}.

A \emph{symmetric monoidal category} is a category $\C$ equipped with a distinguished object $\ind$ called the \emph{unit}, a \emph{monoidal product} $\otimes$, and \emph{swap} isomorphisms $\swap_{\tarS, \sndS} : \tarS \otimes \sndS \to \sndS \otimes \tarS$ defined for every pair of objects $\tarS$ and $\sndS$.
These components also satisfy certain coherence conditions (see e.g.\ \cite[Chapter 6]{perrone2023starting}) that permit the use of \emph{string diagrams} for reasoning graphically about $\C$.
In string diagrams, a morphism $\P : \tarS \to \sndS$ is depicted as a box with input and output wires as follows:
\begin{cdiag*}
\begin{tikzpicture}
	\begin{pgfonlayer}{nodelayer}
		\node [style=none] (0) at (0, -1.75) {$\tarS$};
		\node [style=morphism] (1) at (0, 0) {$\P$};
		\node [style=none] (2) at (0, 1.75) {$\sndS$};
		\node [style=none] (3) at (0, 1.25) {};
		\node [style=none] (4) at (0, -1.25) {};
	\end{pgfonlayer}
	\begin{pgfonlayer}{edgelayer}
		\draw (3.center) to (1);
		\draw (4.center) to (1);
	\end{pgfonlayer}
\end{tikzpicture}
\end{cdiag*}
Note that the input to $\P$ appears at the bottom, and information flows up the page.
The monoidal unit $\ind$ is thought of as representing ``no information'', and hence wires labelled by it are omitted. %
Identities and composition are denoted suggestively as:
\begin{cdiag*}
	\begin{tikzpicture}
	\begin{pgfonlayer}{nodelayer}
		\node [style=morphism] (0) at (-5.5, -0.75) {$\Q$};
		\node [style=morphism] (1) at (-5.5, 1) {$\P$};
		\node [style=none] (2) at (-5.5, -2) {};
		\node [style=none] (3) at (-5.5, 2.25) {};
		\node [style=morphism] (25) at (-8.5, 0) {$\P \circ \Q$};
		\node [style=none] (26) at (-8.5, 1.5) {};
		\node [style=none] (27) at (-8.5, -1.5) {};
		\node [style=none] (28) at (-6.75, 0) {$=$};
		\node [style=none] (44) at (-15.25, -1) {};
		\node [style=none] (45) at (-15.25, 1) {};
		\node [style=morphism] (46) at (-15.25, 0) {$\Id$};
		\node [style=none] (49) at (-14, 0) {$=$};
		\node [style=none] (50) at (-13, 1) {};
		\node [style=none] (51) at (-13, -1) {};
	\end{pgfonlayer}
	\begin{pgfonlayer}{edgelayer}
		\draw (3.center) to (2.center);
		\draw (27.center) to (26.center);
		\draw (45.center) to (44.center);
		\draw (51.center) to (50.center);
	\end{pgfonlayer}
\end{tikzpicture}
\end{cdiag*}
and the monoidal product and swaps as follows:
\begin{cdiag*}
	\begin{tikzpicture}
	\begin{pgfonlayer}{nodelayer}
		\node [style=morphism] (4) at (-3.75, 0) {$\P$};
		\node [style=none] (6) at (-3.75, -1.5) {};
		\node [style=none] (7) at (-3.75, 1.5) {};
		\node [style=morphism] (8) at (-2.25, 0) {$\R$};
		\node [style=none] (9) at (-2.25, -1.5) {};
		\node [style=none] (10) at (-2.25, 1.5) {};
		\node [style=none] (11) at (5.25, -1) {};
		\node [style=none] (12) at (6.75, 1) {};
		\node [style=none] (13) at (5.25, 1) {};
		\node [style=none] (14) at (6.75, -1) {};
		\node [style=morphism] (29) at (-7, 0) {$\P \otimes \R$};
		\node [style=none] (30) at (-7.75, 0.25) {};
		\node [style=none] (31) at (-6.25, 0.25) {};
		\node [style=none] (32) at (-7.75, 1.5) {};
		\node [style=none] (33) at (-6.25, 1.5) {};
		\node [style=none] (34) at (-7.75, -0.25) {};
		\node [style=none] (35) at (-7.75, -1.5) {};
		\node [style=none] (36) at (-6.25, -0.25) {};
		\node [style=none] (37) at (-6.25, -1.5) {};
		\node [style=none] (38) at (-5, 0) {$=$};
		\node [style=none] (43) at (-3.75, 0.25) {};
		\node [style=morphism] (52) at (2.75, 0) {$\swap$};
		\node [style=none] (53) at (4.5, 0) {$=$};
		\node [style=none] (54) at (2, 0.25) {};
		\node [style=none] (55) at (3.5, 0.25) {};
		\node [style=none] (56) at (2, 1) {};
		\node [style=none] (57) at (3.5, 1) {};
		\node [style=none] (58) at (2, -0.25) {};
		\node [style=none] (59) at (3.5, -0.25) {};
		\node [style=none] (60) at (2, -1) {};
		\node [style=none] (61) at (3.5, -1) {};
	\end{pgfonlayer}
	\begin{pgfonlayer}{edgelayer}
		\draw (7.center) to (6.center);
		\draw (10.center) to (9.center);
		\draw [in=-90, out=90] (14.center) to (13.center);
		\draw [in=90, out=-90, looseness=0.75] (12.center) to (11.center);
		\draw (33.center) to (31.center);
		\draw (32.center) to (30.center);
		\draw (35.center) to (34.center);
		\draw (37.center) to (36.center);
		\draw (61.center) to (59.center);
		\draw (60.center) to (58.center);
		\draw (54.center) to (56.center);
		\draw (55.center) to (57.center);
	\end{pgfonlayer}
\end{tikzpicture}
\end{cdiag*}
with the input and output wires labelled appropriately in all cases.
A fundamental result of \cite{Joyal1991} shows that this notation is sound: if two string diagrams have the same wiring topology (i.e.\ one can be continuously deformed into the other without ``cutting'' or ``reconnecting'' wires), then they denote the same morphism in any symmetric monoidal category.
See \cite[Section 2]{Cho2019} for an accessible introduction to string diagrams in the context of categorical probability.

\subsection{Copy-discard categories} \label{app:CD_cats}

We recall the definition of CD categories \cite[Definition 2.2]{Cho2019} as follows:

\begin{definition} \label{def:cd_cat_full}
	A \textit{copy-discard (CD) category} is a symmetric monoidal category $\C$ such that every object $\tarS$ is equipped with distinguished morphisms $\del_{\tarS}: \tarS \to \ind$ and $\cop_\tarS: \tarS \to \tarS \otimes \tarS$, which are depicted respectively as follows:
	\begin{cdiag*}
	\begin{tikzpicture}
	\begin{pgfonlayer}{nodelayer}
		\node [style=none] (0) at (-5, 0) {$\del_\tarS$};
		\node [style=none] (2) at (-3, 0) {$=$};
		\node [style=bn] (3) at (-1.5, 0.5) {};
		\node [style=none] (4) at (-1.5, -0.5) {};
		\node [style=none] (5) at (3.5, 0) {$\cop_\tarS$};
		\node [style=none] (11) at (6, 0) {$=$};
		\node [style=none] (12) at (8.5, 0) {};
		\node [style=none] (13) at (8.5, -1) {};
		\node [style=none] (14) at (7.5, 1) {};
		\node [style=none] (15) at (9.5, 1) {};
		\node [style=bn] (16) at (8.5, 0) {};
		\node [style=none] (24) at (-1.5, -1) {$\tarS$};
		\node [style=none] (25) at (8.5, -1.5) {$\tarS$};
		\node [style=none] (26) at (7.5, 1.5) {$\tarS$};
		\node [style=none] (27) at (9.5, 1.5) {$\tarS$};
	\end{pgfonlayer}
	\begin{pgfonlayer}{edgelayer}
		\draw (3) to (4.center);
		\draw (12.center) to (13.center);
		\draw [bend right=45] (12.center) to (15.center);
		\draw [bend right=45] (14.center) to (12.center);
	\end{pgfonlayer}
\end{tikzpicture}
	\end{cdiag*}
	These satisfy the following relationships:
	\begin{cdiag*} %
		\begin{tikzpicture}
	\begin{pgfonlayer}{nodelayer}
		\node [style=bn] (0) at (-9.25, -0.25) {};
		\node [style=none] (1) at (-9.25, -1.5) {};
		\node [style=bn] (2) at (-10.25, 1) {};
		\node [style=none] (3) at (-8.25, 1) {};
		\node [style=bn] (4) at (-3, -0.25) {};
		\node [style=none] (5) at (-3, -1.5) {};
		\node [style=none] (6) at (-4, 1) {};
		\node [style=bn] (7) at (-2, 1) {};
		\node [style=none] (8) at (-8.25, 1.5) {};
		\node [style=none] (9) at (-7.25, 0) {$=$};
		\node [style=none] (10) at (-6, -1.5) {};
		\node [style=none] (11) at (-6, 1.5) {};
		\node [style=none] (12) at (-5, 0) {$=$};
		\node [style=none] (13) at (-4, 1.5) {};
		\node [style=bn] (14) at (4, -0.5) {};
		\node [style=none] (15) at (4, -1.5) {};
		\node [style=none] (16) at (3, 0.5) {};
		\node [style=none] (17) at (5, 1.5) {};
		\node [style=bn] (18) at (3, 0.5) {};
		\node [style=none] (19) at (2, 1.5) {};
		\node [style=none] (20) at (4, 1.5) {};
		\node [style=none] (21) at (6, 0) {$=$};
		\node [style=bn] (22) at (8, -0.5) {};
		\node [style=none] (23) at (8, -1.5) {};
		\node [style=none] (24) at (7, 1.5) {};
		\node [style=none] (25) at (9, 0.5) {};
		\node [style=bn] (26) at (9, 0.5) {};
		\node [style=none] (27) at (8, 1.5) {};
		\node [style=none] (28) at (10, 1.5) {};
		\node [style=bn] (29) at (-2, -4.75) {};
		\node [style=none] (30) at (-2, -5.5) {};
		\node [style=none] (31) at (-3, -4) {};
		\node [style=none] (32) at (-1, -4) {};
		\node [style=none] (33) at (-3, -2.5) {};
		\node [style=none] (34) at (-1, -2.5) {};
		\node [style=none] (35) at (-1, -4) {};
		\node [style=none] (36) at (-3, -4) {};
		\node [style=none] (37) at (0, -4) {$=$};
		\node [style=bn] (38) at (2, -4) {};
		\node [style=none] (39) at (2, -5.25) {};
		\node [style=none] (40) at (1, -2.75) {};
		\node [style=none] (41) at (3, -2.75) {};
	\end{pgfonlayer}
	\begin{pgfonlayer}{edgelayer}
		\draw (0) to (1.center);
		\draw [in=180, out=-90, looseness=1.25] (2) to (0);
		\draw [in=270, out=0, looseness=1.25] (0) to (3.center);
		\draw (4) to (5.center);
		\draw [in=180, out=-90, looseness=1.25] (6.center) to (4);
		\draw [in=270, out=0, looseness=1.25] (4) to (7);
		\draw (8.center) to (3.center);
		\draw (11.center) to (10.center);
		\draw (13.center) to (6.center);
		\draw (14) to (15.center);
		\draw [in=180, out=-90, looseness=1.25] (16.center) to (14);
		\draw [in=270, out=0, looseness=1.25] (14) to (17.center);
		\draw [in=180, out=-90, looseness=1.25] (19.center) to (18);
		\draw [in=270, out=0, looseness=1.25] (18) to (20.center);
		\draw (22) to (23.center);
		\draw [in=180, out=-90, looseness=1.25] (24.center) to (22);
		\draw [in=270, out=0, looseness=1.25] (22) to (25.center);
		\draw [in=180, out=-90, looseness=1.25] (27.center) to (26);
		\draw [in=270, out=0, looseness=1.25] (26) to (28.center);
		\draw (29) to (30.center);
		\draw [in=180, out=-90, looseness=1.25] (31.center) to (29);
		\draw [in=270, out=0, looseness=1.25] (29) to (32.center);
		\draw [in=90, out=-90] (33.center) to (35.center);
		\draw [in=90, out=-90] (34.center) to (36.center);
		\draw (38) to (39.center);
		\draw [in=180, out=-90, looseness=1.25] (40.center) to (38);
		\draw [in=270, out=0, looseness=1.25] (38) to (41.center);
	\end{pgfonlayer}
\end{tikzpicture}
	\end{cdiag*}
	and the following compatibility conditions with the monoidal structure:
	\begin{cdiag*} %
		\begin{tikzpicture}
	\begin{pgfonlayer}{nodelayer}
		\node [style=none] (0) at (-4, -1) {};
		\node [style=bn] (1) at (-4, 1) {};
		\node [style=none] (2) at (-4, -1.5) {$\tarS \otimes \sndS$};
		\node [style=none] (3) at (-2.5, 0) {$=$};
		\node [style=none] (4) at (-1, -1) {};
		\node [style=bn] (5) at (-1, 1) {};
		\node [style=none] (6) at (0, -1) {};
		\node [style=bn] (7) at (0, 1) {};
		\node [style=none] (8) at (-1, -1.5) {$\tarS$};
		\node [style=none] (9) at (0, -1.5) {$\sndS$};
		\node [style=bn] (18) at (5.5, -0.25) {};
		\node [style=none] (19) at (5.5, -1.25) {};
		\node [style=none] (20) at (4.5, 1.25) {};
		\node [style=none] (21) at (6.5, 1.25) {};
		\node [style=none] (22) at (5.5, -1.75) {$\tarS \otimes \sndS$};
		\node [style=none] (23) at (8.5, 0) {$=$};
		\node [style=bn] (24) at (10.75, -0.25) {};
		\node [style=none] (25) at (10.75, -1.25) {};
		\node [style=none] (26) at (9.75, 1.25) {};
		\node [style=none] (27) at (11.75, 1.25) {};
		\node [style=bn] (28) at (11.75, -0.25) {};
		\node [style=none] (29) at (11.75, -1.25) {};
		\node [style=none] (30) at (10.75, 1.25) {};
		\node [style=none] (31) at (12.75, 1.25) {};
		\node [style=none] (32) at (10.75, -1.75) {$\tarS$};
		\node [style=none] (33) at (11.75, -1.75) {$\sndS$};
		\node [style=none] (34) at (4.25, 1.75) {$\tarS \otimes \sndS$};
		\node [style=none] (35) at (6.75, 1.75) {$\tarS \otimes \sndS$};
		\node [style=none] (36) at (9.75, 1.75) {$\tarS$};
		\node [style=none] (37) at (10.75, 1.75) {$\sndS$};
		\node [style=none] (38) at (11.75, 1.75) {$\tarS$};
		\node [style=none] (39) at (12.75, 1.75) {$\sndS$};
	\end{pgfonlayer}
	\begin{pgfonlayer}{edgelayer}
		\draw (1) to (0.center);
		\draw (5) to (4.center);
		\draw (7) to (6.center);
		\draw (18) to (19.center);
		\draw [in=180, out=-90, looseness=1.25] (20.center) to (18);
		\draw [in=270, out=0, looseness=1.25] (18) to (21.center);
		\draw (24) to (25.center);
		\draw [in=180, out=-90] (26.center) to (24);
		\draw [in=-90, out=0] (24) to (27.center);
		\draw (28) to (29.center);
		\draw [in=180, out=-90] (30.center) to (28);
		\draw [in=-90, out=0] (28) to (31.center);
	\end{pgfonlayer}
\end{tikzpicture}
	\end{cdiag*}
	In addition, it also holds that
	\begin{equation} \label{eq:del_cop_id}
		\del_\ind = \Id_\ind
		\qquad\qquad
		\cop_\ind = \ind \xrightarrow{\cong} \ind \otimes \ind,
	\end{equation}
	where the right-hand map is the canonical isomorphism supplied by the underlying monoidal structure of $\C$.
\end{definition}

\begin{remark}\label{rem:sigma_finite}
	The primary example of a CD category we consider in the main text is $\sfkern$ (Example \ref{exa:sfkern}), which recall has \emph{$s$-finite kernels} as its morphisms.
	A statistically-trained reader may wonder whether it is possible to construct a CD category analogous to this, but using $\sigma$-finite kernels instead of $s$-finite ones.
	These two notions are closely related: $s$-finite \emph{measures} (countable sums of finite measures \cite[Page 66]{Kallenberg2021}) are precisely the pushforwards of $\sigma$-finite ones \cite[Proposition 7]{Staton2017}, and hence a $\sigma$-finite measure is always $s$-finite.
	However, the converse is false: the pushforward of a $\sigma$-finite measure by a measurable function need not be $\sigma$-finite, which means that $\sigma$-finiteness is in general not preserved under composition \cite[Section 5.3]{Staton2017}.
	For this reason, it is standard practice to use $\sfkern$ as the canonical CD category of kernels in categorical probability \cite{Cho2019, Staton2017}.
\end{remark}

\subsection{Markov categories}

We recall the definition of Markov categories \cite[Definition 2.1]{Fritz2020} as follows:

\begin{definition}\label{def:Markov_cat}
	A \emph{Markov category} is a CD category $\C$ such that for any morphism $P : \tarS \to \sndS$, we have
	\begin{cdiag} \label{eq:del_nat}
		\begin{tikzpicture}
	\begin{pgfonlayer}{nodelayer}
		\node [style=morphism] (0) at (-2, 0) {$\P$};
		\node [style=bn] (1) at (-2, 1.25) {};
		\node [style=none] (2) at (-2, -1.25) {};
		\node [style=none] (4) at (1.75, -1.25) {};
		\node [style=bn] (5) at (1.75, 1.25) {};
		\node [style=none] (6) at (0, 0) {$=$};
		\node [style=none] (7) at (-2, -1.75) {$\tarS$};
		\node [style=none] (8) at (1.75, -1.75) {$\tarS$};
	\end{pgfonlayer}
	\begin{pgfonlayer}{edgelayer}
		\draw (1) to (0);
		\draw (0) to (2.center);
		\draw (5) to (4.center);
	\end{pgfonlayer}
\end{tikzpicture}
	\end{cdiag}
\end{definition}
 
\begin{remark}
	In Markov categories, the final two CD category conditions \eqref{eq:del_cop_id} follow automatically from \eqref{eq:del_nat} (see \cite[Remark 2.3]{Fritz2020}).
	As such, they are often omitted from standard definitions of Markov categories in the literature (e.g.\ Definition 2.1 of \cite{Fritz2020}).
\end{remark}

\begin{example}\label{exa:stoch}
	The canonical example of a Markov category is the category $\stoch$ of measurable spaces and Markov kernels.
	Composition and the monoidal product are defined as described in Example \ref{ex:stoch-defn}.
	The monoidal unit $\ind$ is the singleton measurable space $\{*\}$ equipped with the trivial $\sigma$-algebra, and the identity, swap, deletion, and copy morphisms are defined as follows:
	\begin{align*}
		\Id_{\tarS}(x, \dif x') &\coloneqq \delta_x(\dif x') \\
		\swap_{\tarS,\sndS}((x, y), (\dif x', \dif y')) &\coloneqq \delta_{(y, x)}(\dif x', \dif y') \\
		\del_\tarS(x, \dif z) &\coloneqq \delta_{*}(\dif z) \\
		\cop_{\tarS}(x, (\dif x', \dif x'') ) &\coloneqq \delta_{(x,x)}(\dif x', \dif x''),
	\end{align*}
	where $\delta$ is the Dirac delta.
	It is shown e.g.\ in \cite[Example 2.5]{Cho2019} that these components satisfy the required axioms of a Markov category.
\end{example}

\section{Proofs}

\subsection{$\Kern$ is a category} \label{app:kern}

In Example~\ref{exa:sfkern} we described the category $\Kern$, whose objects are measurable spaces and whose morphisms are kernels (not necessarily $s$-finite).
Recall that a kernel is a function $K:\tarS \times \Sigma_{\sndS} \to \left[0,\infty\right]$ such that $x\mapsto K(x,A)$ is measurable and $A\mapsto K(x,A)$ is a measure.
Given another kernel $L: \sndS \times \Sigma_{\auxS} \to \left[0,\infty\right]$, the composition $L\circ K$ in $\Kern$ is defined as
\begin{equation} \label{eq:composition-of-kernels}
    (L\circ K)(x,A) \coloneqq \int K(x,\mathrm{d}y) \, L(y,A).
\end{equation}
It is easily checked that $L \circ K$ is in fact a kernel $\tarS \times \Sigma_{\auxS} \to \left[0,\infty\right]$, and that composition is unital with respect to the Dirac kernels, which serve as identities in $\Kern$.
However, since the Fubini--Tonelli theorem does not hold for general kernels, some care is required to check that composition is associative.
For completeness, we give a proof of this fact in the remainder of this section.

\begin{lemma} \label{lem:For-any-simple}
	For any $x \in \tarS$ and measurable function $f : \auxS \to [0,\infty]$, we have that
	\[
	\int(L\circ K)(x,\mathrm{d}z) \, f(z) = \int K(x,\mathrm{d}y)\left[\int L(y,\mathrm{d}z) \, f(z)\right].
	\]
\end{lemma}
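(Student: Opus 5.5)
The natural route is the standard machine of measure theory: prove the identity first for indicator functions, then for nonnegative simple functions, and finally for arbitrary measurable $f : \auxS \to [0,\infty]$ by monotone convergence. Throughout, $x \in \tarS$ is fixed. We only ever integrate against the single measure $K(x, \dif y)$ on $\sndS$ and against the measures $L(y, \dif z)$. Consequently no Fubini--Tonelli argument is needed, and no finiteness assumptions on the kernels are required. This is exactly why the lemma holds in $\Kern$ and not just in $\sfkern$.

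\emph{Step 1 (indicators).} For $f = \mathbf{1}_A$ with $A \subseteq \auxS$ measurable, the left-hand side is $(L \circ K)(x, A)$. The inner integral on the right is $L(y, A)$. So the identity is precisely the definition \eqref{eq:composition-of-kernels} of composition. We also need the inner integral $y \mapsto \int L(y, \dif z)\, f(z)$ to be measurable, so that the outer integral makes sense. For indicators this is the kernel measurability of $y \mapsto L(y, A)$.

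\emph{Step 2 (simple functions).} For $f = \sum_{i=1}^n c_i \mathbf{1}_{A_i}$ with $c_i \in [0,\infty)$, both sides are additive and positively homogeneous in $f$. This uses linearity of the integral for nonnegative functions applied to the measure $(L\circ K)(x,\dif z)$, to each $L(y, \dif z)$, and then to $K(x, \dif y)$. The inner integral is a finite nonnegative combination of measurable functions, hence measurable. So the identity extends from Step 1.

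\emph{Step 3 (general $f$).} Choose simple $0 \le f_n \uparrow f$ pointwise. On the left, monotone convergence for the measure $(L \circ K)(x, \dif z)$ gives convergence to $\int (L\circ K)(x,\dif z)\, f(z)$. For each fixed $y$, monotone convergence for $L(y, \dif z)$ gives $g_n(y) \coloneqq \int L(y,\dif z) f_n(z) \uparrow g(y) \coloneqq \int L(y, \dif z) f(z)$. Hence $g$ is measurable as a pointwise increasing limit of measurable functions. A second application of monotone convergence, now for $K(x, \dif y)$, gives $\int K(x,\dif y)\, g_n(y) \uparrow \int K(x, \dif y)\, g(y)$. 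Equating the limits of the two sides, which agree for each $n$ by Step 2, gives the claim.

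The only point requiring care is measurability of $y \mapsto \int L(y,\dif z) f(z)$ together with the fact that values equal to $\infty$ cause no trouble. Both are handled by monotone convergence in $[0,\infty]$, so I expect no genuine obstacle. The lemma then yields associativity: apply it with $f = M(\cdot, B)$ for a third kernel $M$.
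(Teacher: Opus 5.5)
Your proposal is correct and follows essentially the same route as the paper: indicators via the definition of composition, extension to simple functions by linearity, then monotone convergence. You additionally make explicit the measurability of $y \mapsto \int L(y,\mathrm{d}z)\,f(z)$ and the separate applications of monotone convergence to the inner and outer integrals, which the paper leaves implicit.
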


\begin{proof}
	We first show the result when $f$ is an indicator function.
	For any $A\in\Sigma_{\auxS}$ and $x\in\tarS$, we have
	\begin{align*}
	\int (L\circ K)(x,\mathrm{d}z) \, \indicator_A(z) &= \int K(x,\mathrm{d}y) \, L(y,A) \\
	&= \int K(x,\mathrm{d}y)\left[\int L(y,\mathrm{d}z) \, \indicator_A(z)\right]
	\end{align*}
	where the first equality follows from the definition of $L\circ K$ as in \eqref{eq:composition-of-kernels}.
	By linearity, this now extends to all simple functions $f$, and hence to all measurable $f:\auxS \to[0,\infty]$ by the monotone convergence theorem.
\end{proof}

\begin{proposition} \label{prop:kern-composition-associative}
	Composition in $\Kern$ is associative.
\end{proposition}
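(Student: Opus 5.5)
Take kernels $K:\tarS\to\sndS$, $L:\sndS\to\auxS$ and $M:\auxS\to W$ in $\Kern$. The plan is to show that $M\circ(L\circ K)$ and $(M\circ L)\circ K$ agree on every pair $(x,A)$ with $x\in\tarS$ and $A\in\Sigma_W$. The only tool needed is Lemma~\ref{lem:For-any-simple}. It lets us push integration against a composite kernel through one layer at a time, and it needs no Fubini--Tonelli argument.

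First we unfold the left-hand side using the definition \eqref{eq:composition-of-kernels}:
\[
	\bigl(M\circ(L\circ K)\bigr)(x,A) = \int (L\circ K)(x,\dif z)\, M(z,A).
\]
The function $f(z)\coloneqq M(z,A)$ is measurable from $\auxS$ to $[0,\infty]$, because $M$ is a kernel. Lemma~\ref{lem:For-any-simple} applied to the pair $(K,L)$ with this $f$ then rewrites the expression as
\[
	\int K(x,\dif y)\left[\int L(y,\dif z)\, M(z,A)\right].
\]
Next, the bracketed inner integral is by definition $(M\circ L)(y,A)$. So the whole expression equals $\int K(x,\dif y)\,(M\circ L)(y,A)$, which is $\bigl((M\circ L)\circ K\bigr)(x,A)$ by definition. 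This gives the required equality.

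The one point needing care is that each composite really is a kernel, so that the definitions apply. In particular, $y\mapsto (M\circ L)(y,A)$ must be measurable, and the lemma must be applicable to $(K,L)$. The paper already asserts that composites of kernels are kernels. If one wanted to verify it, measurability of $y\mapsto\int L(y,\dif z)f(z)$ follows for indicator functions $f$ from the kernel property, then for simple functions by linearity, and then for general $f$ by monotone convergence. I do not expect a genuine obstacle here: all the work was done in Lemma~\ref{lem:For-any-simple}, which avoids Fubini--Tonelli for non-$s$-finite kernels.
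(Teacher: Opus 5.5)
Your proposal is correct and follows the paper's proof essentially verbatim. Like the paper, you unfold $M\circ(L\circ K)$ using the definition of composition, apply Lemma~\ref{lem:For-any-simple} to the pair $(K,L)$ with $f(z) = M(z,A)$, and recognise the inner integral as $(M\circ L)(y,A)$; your side remark on measurability of $y \mapsto \int L(y,\dif z)\, f(z)$ (indicators, then simple functions, then monotone convergence) correctly justifies a point the paper only calls ``easily checked''.
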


\begin{proof}
	With $K$ and $L$ as above, let $M:\auxS \times\Sigma_{\W}\to\left[0,\infty\right]$ be another kernel.
	Choose any $x \in \tarS$ and $B \in \Sigma_{\W}$.
	We then have
	\begin{align*}
	(M \circ (L \circ K))(x,B) &= \int (L\circ K)(x,\mathrm{d}z) \, M(z,B) \\
	&= \int K(x,\mathrm{d}y)\left[\int L(y,\mathrm{d}z) \, M(z,B)\right] \\
	&= ((M\circ L)\circ K)(x,B),
	\end{align*}
	where the second step holds by Lemma \ref{lem:For-any-simple} applied to the function $f(z) = M(z,B)$.
	Since $x$ and $B$ were arbitrary, we conclude that $M\circ\left(L\circ K\right)=\left(M\circ L\right)\circ K$.
\end{proof}

\subsection{Substochasticity, probabilities, and convex combinations} \label{sec:appendix-substochasticity}

Let $\C$ be a symmetric monoidal category equipped with an initial object $\zeroobj$ and binary coproducts $\oplus$.
Recall that $\C$ is \emph{distributive} if the canonical morphisms
\begin{equation*} %
	\zeroobj \to \zeroobj \otimes \tarS
	\qquad
	\qquad
	(\tarS \otimes \sndS) \oplus (\tarS \otimes \auxS) \to \tarS \otimes (\sndS \oplus \auxS)
\end{equation*}
induced by initiality and the coproducts are isomorphisms.
A \emph{distributive Markov category} \cite[Definition 2]{ackerman2024probabilistic} is a Markov category equipped with an initial object and binary coproducts, whose underlying symmetric monoidal category is distributive, and whose chosen coproduct inclusions are all deterministic.

\begin{proof}[Proof of Proposition~\ref{prop:distributive}]
	We first show that the initial object $\zeroobj$ in $\C$ is initial in $\Cnorm$ as well.
	Observe that for any object $\tarS$, the following diagram in $\C$ commutes by initiality:
	\begin{cdiag*}
		\begin{tikzcd}
			\zeroobj \ar{r}{!} \ar[swap]{dr}{\del} & \tarS \ar{d}{\del} \\
			& \ind
		\end{tikzcd}
	\end{cdiag*}
	where here $!$ is the unique morphism $0 \to \tarS$ in $\C$.
	This shows that $!$ is normalised, and so $\zeroobj$ is initial in $\Cnorm$.

	Next, suppose $\tarS \oplus \sndS$ is a coproduct in $\C$.
	This gives the following diagram in $\C$, where $i_\tarS$ and $i_\sndS$ are the coproduct injections:
	\begin{cdiag} \label{eq:coproduct-inclusions-normalised-proof}
		\begin{tikzcd}
			\tarS \ar{r}{i_\tarS} \ar[swap]{dr}{\del} & \tarS \oplus \sndS \ar[dashed]{d}{\del} & \sndS \ar[swap]{l}{i_\sndS} \ar{dl}{\del} \\
			& \ind &
		\end{tikzcd}
	\end{cdiag}
	This diagram commutes by the assumption that $i_\tarS$ and $i_\sndS$ are deterministic and normalised. %
	From this it follows that $\del_{\tarS \oplus \sndS}$ is equal to the coproduct comparison map $[\del_\tarS, \del_\sndS]$ as indicated in the diagram.
	Now let $f : \tarS \to \auxS$ and $g : \sndS \to \auxS$ be morphisms in $\Cnorm$.
	Then the following diagram in $\C$ commutes:
	\begin{cdiag} \label{eq:coproduct-in-norm-proof}
		\begin{tikzcd}
			\tarS \ar{r}{i_\tarS} \ar[equals]{d} & \tarS \oplus \sndS \ar[dashed]{d}{[f,g]} & \sndS \ar[swap]{l}{i_\sndS} \ar[equals]{d} \\
			\tarS \ar[swap]{dr}{\del} \ar{r}{f} & \auxS \ar{d}{\del} & \sndS \ar{dl}{\del} \ar[swap]{l}{g} \\
			& \ind
		\end{tikzcd}
	\end{cdiag}
	By comparing with \eqref{eq:coproduct-inclusions-normalised-proof} and using the universal property of the coproduct in $\C$, we now have
	\[
		\del_\auxS \circ [f, g] = [\del_\tarS, \del_\sndS] = \del_{\tarS \oplus \sndS}.
	\]
	This shows that $[f, g]$ is normalised and hence that \eqref{eq:coproduct-in-norm-proof} is a diagram in $\Cnorm$, and so $i_\tarS$ and $i_\sndS$ constitute coproduct injections in $\Cnorm$ as well.

	It remains to show that $\Cnorm$ is distributive.
	For this, recall that since $\C$ has a $\CMon$-enrichment, its initial object and binary coproducts constitute a zero object and biproducts respectively \cite[Proposition 5]{garner2016when}. 
	But now it is standard to show that each functor $T_\tarS \coloneqq - \otimes \tarS$ preserves biproducts (i.e.\ $\C$ is distributive as a monoidal category) if and only if it is \emph{additive} in the sense that
	\begin{equation} \label{eq:additive-functor}
		T_\tarS(f + g) = T_\tarS(f) + T_\tarS(g).
	\end{equation}
	(See e.g.\ \cite[Proposition 4 in Chapter 8]{mac1998categories} for the analogous case of an \emph{additive} category.)
	From the definition of $T_\tarS$, \eqref{eq:additive-functor} says $(f + g) \otimes \Id_\tarS = (f \otimes \Id_\tarS) + (g \otimes \Id_\tarS)$, which is easily seen to be equivalent to the $\CMon$-enrichment being monoidal.
\end{proof}

\subsection{Cancellativity and $\sigma$-finiteness} \label{sec:appendix-sigma-finiteness}

\subsubsection{Proof of Proposition \ref{prop:sigma-finiteness-cancellative}}

In this section we give a proof of Proposition \ref{prop:sigma-finiteness-cancellative}.
This requires the following two lemmas.

\begin{lemma} \label{lem:measure-cancellative}
	Suppose $\mu$ is a measure on a measurable space $\tarS$.
	Then $\mu$ is $\sigma$-finite if and only if it is cancellative. %
\end{lemma}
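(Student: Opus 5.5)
We prove the two directions separately, working with measures in $[0,\infty]$. Cancellativity of $\mu$ means: for all measures $\nu, \lambda$ on $\tarS$, if $\mu + \nu = \mu + \lambda$ then $\nu = \lambda$.

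\emph{$\sigma$-finite implies cancellative.}
Fix a partition $\tarS = \bigcup_n A_n$ into disjoint measurable sets with $\mu(A_n) < \infty$.
Suppose $\mu + \nu = \mu + \lambda$, and let $B$ be measurable.
For each $n$, evaluate at $B \cap A_n$:
\[
	\mu(B \cap A_n) + \nu(B \cap A_n) = \mu(B \cap A_n) + \lambda(B \cap A_n).
\]
Since $\mu(B \cap A_n) \le \mu(A_n) < \infty$, we may subtract it in $[0,\infty]$ to get $\nu(B \cap A_n) = \lambda(B \cap A_n)$.
Summing over $n$ by countable additivity gives $\nu(B) = \lambda(B)$.

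\emph{Cancellative implies $\sigma$-finite.}
We argue by contraposition: if $\mu$ is not $\sigma$-finite, we construct distinct $\nu$ and $\lambda$ with $\mu + \nu = \mu + \lambda$.
The natural choice is $\nu = 0$ and $\lambda = \infty \cdot \mu|_{E}$ for a suitable measurable $E$, where $\infty \cdot \mu|_E(B)$ equals $\infty$ if $\mu(B \cap E) > 0$ and $0$ otherwise.

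Naively taking $\lambda(B) = \infty$ whenever $\mu(B) = \infty$ does not give a measure in general, so $E$ must be chosen so that $\mu$ restricted to $E$ takes only the values $0$ and $\infty$.
In that case:
\begin{itemize}
	\item if $\mu(B \cap E) = \infty$, then both sides of $\mu + \nu = \mu + \lambda$ are infinite at $B$;
	\item if $\mu(B \cap E) = 0$, then $\lambda(B) = 0$, so both sides agree at $B$.
\end{itemize}
We also need $\mu(E) = \infty$, so that $\lambda \neq 0 = \nu$.

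\emph{Constructing $E$ (the main obstacle).}
Let $s = \sup\{\mu(F) : F \text{ measurable and } \sigma\text{-finite for } \mu\}$.
Take $\sigma$-finite sets $F_k$ with $\mu(F_k) \to s$ and set $F = \bigcup_k F_k$. Then $F$ is $\sigma$-finite for $\mu$ and attains $\mu(F) = s$.
Put $E = F^c$. Since $\mu$ is not $\sigma$-finite, $E$ is not $\sigma$-finite for $\mu$, so in particular $\mu(E) = \infty$.

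It remains to show that $\mu$ is $\{0,\infty\}$-valued on subsets of $E$.
Suppose $G \subseteq E$ has $0 < \mu(G) < \infty$. Then $F \cup G$ is $\sigma$-finite, and we split on $s$:
\begin{itemize}
	\item If $s < \infty$, then $\mu(F \cup G) = s + \mu(G) > s$, contradicting the maximality of $s$.
	\item If $s = \infty$, additivity gives no contradiction. In this case we refine the choice: take $F$ to be a \emph{maximal} $\sigma$-finite set in the sense of essential supremum, chosen so that every finite-measure set $G$ satisfies $\mu(G \setminus F) = 0$. This can be arranged by maximising $\sum_k 2^{-k}\min(1,\mu(F \cap A_k))$-type functionals, or more directly by using the measure-algebra class of finite-measure sets, whose supremum is attained by a countable union.
\end{itemize}
With this refined $F$, the contradiction goes through in both cases.

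Combining the three steps yields $\mu + 0 = \mu + \lambda$ with $\lambda \neq 0$, contradicting cancellativity. Hence a cancellative $\mu$ is $\sigma$-finite.
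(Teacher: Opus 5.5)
\emph{Verdict: there is a genuine gap in the converse direction when $s = \infty$.} Your forward direction ($\sigma$-finite $\Rightarrow$ cancellative) is correct and matches the paper. Your converse in the case $s < \infty$ is also fine.

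The gap in the case $s = \infty$ is not a missing detail: the whole strategy of taking $\lambda = \infty \cdot \mu|_E$ fails in general. Take $\mu$ to be counting measure on an uncountable set $\mathcal{X}$ with its power set. Then $\mu$ is not $\sigma$-finite and $s = \infty$. Every nonempty set has measure at least $1$, so the only $E$ on which $\mu$ is $\{0,\infty\}$-valued is $E = \emptyset$, which forces $\lambda = 0$.

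For the same reason, the ``refined'' $F$ you ask for cannot exist. The $\sigma$-finite sets are exactly the countable sets, and for any countable $F$ and any $x \notin F$, the finite-measure set $G = \{x\}$ has $\mu(G \setminus F) = 1$. Your appeal to the class of finite-measure sets having its supremum attained by a countable union implicitly assumes a countable chain condition. That condition fails exactly here, where there are uncountably many disjoint sets of positive finite measure.

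The missing idea is to use a witness that vanishes on all ``small'' sets, rather than one concentrated on a set $E$. Define $\nu(B) = 0$ if $\mu|_B$ is $\sigma$-finite, and $\nu(B) = \infty$ otherwise. This is the paper's argument, and it goes through as follows:
\begin{itemize}
\item The sets $B$ with $\mu|_B$ $\sigma$-finite are closed under measurable subsets and countable unions, so $\nu$ is countably additive. This closure property is exactly what fails for your naive criterion $\mu(B) = \infty$.
\item Any $B$ with $\mu|_B$ not $\sigma$-finite must have $\mu(B) = \infty$, so $\mu + \nu = \mu = \mu + 0$.
\item Since $\mu$ is not $\sigma$-finite, $\nu(\mathcal{X}) = \infty$, so $\nu \neq 0$.
\end{itemize}
This needs no supremum $s$, no case split and no maximal $\sigma$-finite set. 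In the counting-measure example it gives $\nu(B) = \infty$ exactly when $B$ is uncountable.
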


\begin{proof}
	For the ``only if'' part, suppose $\mu$ is $\sigma$-finite.
	Then there exists disjoint measurable $A_n \subseteq \tarS$ such that $\tarS = \bigcup_{n=1}^\infty A_n$ and $\mu(A_n) < \infty$.
	Now suppose $\nu$ and $\rho$ are measures on $\tarS$ with $\mu + \nu = \mu + \rho$.
	Then for any measurable set $A \in \Sigma_\tarS$, we have
	\[
		\mu(A \cap A_n) + \nu(A \cap A_n) = \mu(A \cap A_n) + \rho(A \cap A_n).
	\]
	Since $\mu(A \cap A_n) \leq \mu(A_n) < \infty$, we may cancel this term on both sides to obtain $\nu(A \cap A_n) = \rho(A \cap A_n)$.
	By summing both sides over $n$ and using disjointness of the $A_n$, we then obtain $\nu(A) = \rho(A)$.
	Hence $\nu = \rho$ and so $\mu$ is cancellative.

	For the ``if'' part, suppose $\mu$ is not $\sigma$-finite.
	Say that a measurable set $A \subseteq \tarS$ is \emph{$\mu$-$\sigma$-finite} if the restriction of $\mu$ to $A$ is $\sigma$-finite.
	Next, for any measurable $A \subseteq \tarS$, define
	\[
		\nu(A) \coloneqq \begin{cases}
			0 & \text{if $A$ is $\mu$-$\sigma$-finite} \\
			\infty & \text{otherwise.}
		\end{cases}
	\]
	It is straightforward to check that $\nu$ is a measure on $\tarS$.
	Likewise, for any measurable $A \subseteq \tarS$, we have
	\begin{align*}
		\mu(A) + \nu(A) &= \begin{cases}
			\mu(A) + 0 & \text{if $A$ is $\mu$-$\sigma$-finite} \\ 
			\mu(A) + \infty & \text{otherwise}
		\end{cases} \\
		&= \mu(A)
	\end{align*}
	where the second step follows from the easily checked fact that $\mu(A) = \infty$ if $A$ is not $\mu$-$\sigma$-finite.
	It follows that $\mu + \nu = \mu$.
	But now $\nu$ is not the zero measure, since by assumption $\tarS$ is not $\mu$-$\sigma$-finite.
	This shows that $\mu$ is not cancellative.
\end{proof}

\begin{lemma} \label{lem:kernel-cancellative}
	Let $\K : \tarS \times \Sigma_\sndS \to [0, \infty]$ be a kernel.
	If the measure $\K(x, \dif y)$ is cancellative for all $x \in \tarS$, then $\K$ is cancellative.
	If all singleton sets $\{x\} \subseteq \tarS$ are measurable, then the converse also holds.
\end{lemma}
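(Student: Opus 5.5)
The plan is to prove the two directions separately, using Lemma \ref{lem:measure-cancellative} to translate between pointwise cancellativity and pointwise $\sigma$-finiteness, and the additive structure of $\Kern$ from Proposition \ref{prop:CMon-sfkern}.

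\emph{Forward direction.} Suppose each $\K(x, \dif y)$ is cancellative, and let $\Q, \R : \tarS \to \sndS$ be kernels with $\K + \Q = \K + \R$. Since addition in $\Kern$ is pointwise, for every $x \in \tarS$ we get the equality of measures $\K(x, \dif y) + \Q(x, \dif y) = \K(x, \dif y) + \R(x, \dif y)$. Cancellativity of $\K(x, \dif y)$ then gives $\Q(x, \dif y) = \R(x, \dif y)$. As $x$ was arbitrary, $\Q = \R$.

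\emph{Converse.} Assume singletons in $\tarS$ are measurable and that $\K$ is cancellative. We argue by contraposition. Suppose some $x_0$ has $\K(x_0, \dif y)$ not cancellative. Then there are measures $\nu \neq \rho$ on $\sndS$ with $\K(x_0, \dif y) + \nu = \K(x_0, \dif y) + \rho$. Define
\[
	\Q(x, B) \coloneqq \indicator_{\{x_0\}}(x) \, \nu(B), \qquad \R(x, B) \coloneqq \indicator_{\{x_0\}}(x) \, \rho(B).
\]
For fixed $x$, each of $B \mapsto \Q(x, B)$ and $B \mapsto \R(x, B)$ is a measure (either zero or $\nu$, resp.\ $\rho$). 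For fixed $B$, the map $x \mapsto \Q(x, B)$ is a constant multiple of the indicator of $\{x_0\}$, which is measurable precisely because $\{x_0\}$ is measurable. Here we use the convention $0 \cdot \infty = 0$, so the indicator is still measurable when $\nu(B) = \infty$. Hence $\Q$ and $\R$ are kernels.

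We then check pointwise that $\K + \Q = \K + \R$: at $x_0$ this is the assumed identity, and at any $x \neq x_0$ both sides equal $\K(x, \dif y)$. But $\Q \neq \R$, since they differ at $x_0$. This contradicts cancellativity of $\K$.

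The only step needing care is confirming that $\Q$ and $\R$ are genuine kernels. This is exactly where the measurability of singletons enters, and it is the step I expect to be the main obstacle.
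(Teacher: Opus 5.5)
Your proof is correct and follows the paper's argument: the forward direction is the same pointwise cancellation, and the converse uses the same kernels supported on $\{x_0\}$. The paper argues the converse directly rather than by contraposition, which is only a cosmetic difference, and your opening appeal to Lemma~\ref{lem:measure-cancellative} is unnecessary because the statement concerns only cancellativity.
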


\begin{proof}
	For first claim, suppose each $\K(x, \dif y)$ is cancellative, and let $\Lx$ and $\Mx$ be kernels with $\K + \Lx = \K + \Mx$.
	Then for any $x \in \tarS$, we have
	\[
		\K(x, \dif y) + \Lx(x, \dif y) = \K(x, \dif y) + \Mx(x, \dif y)
	\]
	which gives $\Lx(x, \dif y) = \Mx(x, \dif y)$ and hence $\Lx = \Mx$.

	For the second part of the statement, suppose $K$ is cancellative.
	Choose $x_0 \in \tarS$, and let $\nu$ and $\rho$ be any measures on $\sndS$ such that
	\begin{equation} \label{eq:kernel-cancellative-goal}
		\K(x_0, \dif y) + \nu(\dif y) = \K(x_0, \dif y) + \rho(\dif y).
	\end{equation}
	Now define the following:
	\[
		\Lx(x, \dif y) \coloneqq \begin{cases}
			\nu(\dif y) & \text{if $x = x_0$} \\
			0 & \text{otherwise},
		\end{cases}
	\]
	and similarly for $\Mx$ with $\rho$ in place of $\nu$.
	It follows from our assumption that all singleton sets are measurable that $\Lx$ and $\Mx$ are kernels.
	This gives $\K + \Lx = \K + \Mx$ by \eqref{eq:kernel-cancellative-goal}, which by cancellativity of $\K$ gives $\Lx = \Mx$, and hence $\nu = \rho$ as required.
\end{proof}

We are now able to give a proof of Proposition \ref{prop:sigma-finiteness-cancellative} from the main text as follows.

\begin{proof}
	By Lemma \ref{lem:measure-cancellative}, $\sigma$-finiteness of $\K(x, \dif y)$ is equivalent to cancellativity of $\K(x, \dif y)$.
	The result now follows directly from Lemma \ref{lem:kernel-cancellative}.
\end{proof}

\subsubsection{Proof of Proposition \ref{prop:sfkern-finite}} \label{sec:appendix-sfkern-finite}

\begin{proof}
	Recall that when $\P$ is a kernel, the effect $\one \circ \P$ corresponds to the measurable function $f : \tarS \to [0, \infty]$ defined by
	\[
		f(x) \coloneqq \P(x, \sndS).
	\]
	(see Example~\ref{exa:substoch_sfkern}).
	If $f$ is finite, then it is certainly cancellative.
	So suppose it is not finite. 
	Then since $\{\infty\}$ is measurable in $[0, \infty]$, so is the set
	\[
		A \coloneqq f^{-1}(\{\infty\}) \neq \emptyset
	\]
	Now define the effect $\wx : \tarS \to \ind$ by
	\[
		\wx(x, \ind) \coloneqq \begin{cases}
			1 & \text{if $x \in A$} \\
			0 & \text{otherwise,}
		\end{cases}
	\]
	which is well-defined since $A$ is measurable.
	This gives $\one \circ \P + \wx = \one \circ \P$, but $\wx \neq 0$ since $A$ is nonempty.
\end{proof}

\subsection{Absolute continuity} \label{sec:appendix-lebesgue-decompositions}

\subsubsection{Proof of Proposition \ref{prop:ac-sfkern}} \label{sec:appendix-ac-sfkern}

\begin{proof}
	For the ``only if'' direction, take $\R : \ind \to \tarS$ to be the Dirac measure on $x$, and $\S : \sndS \to \ind$ to be the effect corresponding to the indicator function of a measurable $A \subseteq \sndS$ (i.e.\ $\S(y, \ind) = \indicator_A(y)$).
	Then the equations in \eqref{eq:ac-preorder-condition} become equivalently $\Q(x, A) = 0$ and $\P(x, A) = 0$ respectively, which shows absolute continuity.

	For the ``if'' direction, suppose each $\P(x, \dif y)$ is absolutely continuous with respect to $\Q(x, \dif y)$.
	Now let $\R : \W \to \tarS$ and $\S : \sndS \to \auxS$ satisfy $\S \circ \Q \circ \R = 0$, and choose $w \in \W$ and measurable $A \subseteq \auxS$ arbitrarily.
	It is a standard result that we can choose some constants $a_i > 0$ and measurable $A_i \subseteq \sndS$ such that
	\[
		\S(y, A) = \sum_{i=1}^\infty a_i \, \indicator_{A_i}(y)
	\]
	holds for all $y \in \sndS$.
	This gives, for all $x \in \tarS$,
	\[
		(\S \circ \Q)(x, A) = \int \Q(x, \dif y) \, \S(y, A) = \sum_{i=1}^\infty a_i \, \Q(x, A_i)
	\]
	by the dominated convergence theorem.
	But then since $\S \circ \Q \circ \R = 0$, we also have the following for $\R(w, \dif x)$-almost all $x \in \tarS$:
	\[
		0 = (\S \circ \Q)(x, A) = \sum_{i=1}^\infty a_i \, \Q(x, A_i).
	\]
	Since each $\P(x, \dif y)$ is absolutely continuous with respect to $\Q(x, \dif y)$, this gives
	\[
		(\S \circ \P)(x, A) = \sum_{i=1}^\infty a_i \, \P(x, A_i) = 0.
	\]
	for $\R(w, \dif x)$-almost all $x \in \tarS$.
	We therefore obtain
	\[
		(\S \circ \P \circ \R)(w, A) = \int \R(w, \dif x) \, (\S \circ \P)(x, A) = 0,
	\]
	and the result now follows since $w$ and $A$ were arbitrary.
\end{proof}

\subsubsection{Proof of Proposition \ref{prop:ac-comparison}} \label{sec:appendix-ac-comparison}

\begin{proof}
	First note that if $\ux : \U \to \ind$ and $\vx : \ind \to \V$ are morphisms in any $\CMon$-enriched CD category, we have that $ \vx \circ \ux= 0$ if and only if $\ux \otimes \vx = 0$ (as can be seen by applying the left and right unitors on both sides).
	It follows that when there are no zero divisors, the condition $\vx \circ \ux = 0$ implies either $\vx = 0$ or $\ux = 0$.

	Now suppose we have $\mu$ and $\nu$ as in the statement, where $\nu$ is absolutely continuous with respect to $\mu$ in the sense of \cite[Definition 2.2.1]{Fritz2023}. 
	Let $\R : \auxS \to \ind$ and $\S : \tarS \to \sndS$ be morphisms such that $\S \circ \mu \circ \R = 0$.
	By the previous paragraph, this means either $\S \circ \mu = 0$ or $\R = 0$ holds.
	If it is the latter, then clearly $\S \circ \mu \circ \R = 0$ also.
	So suppose $\S \circ \mu = 0$ holds instead.
	This gives 
	\begin{cdiag*}
		\begin{tikzpicture}
	\begin{pgfonlayer}{nodelayer}
		\node [style=none] (0) at (0, -0.25) {$=$};
		\node [style=state] (1) at (-2.5, -2) {$\mu$};
		\node [style=none] (5) at (-2.5, -1.75) {};
		\node [style=none] (6) at (-2.5, -0.75) {};
		\node [style=none] (7) at (-3.5, 0.25) {};
		\node [style=none] (8) at (-1.5, 0.25) {};
		\node [style=bn] (9) at (-2.5, -0.75) {};
		\node [style=bn] (10) at (-3.5, 1.75) {};
		\node [style=bn] (11) at (-1.5, 1.75) {};
		\node [style=morphism] (12) at (-3.5, 0.5) {$S$};
		\node [style=state] (13) at (2, -1.5) {$\mu$};
		\node [style=none] (17) at (2, -1.25) {};
		\node [style=none] (18) at (2, -0.25) {};
		\node [style=none] (19) at (2, 0.25) {};
		\node [style=bn] (22) at (2, 1.25) {};
		\node [style=morphism] (24) at (2, 0) {$S$};
		\node [style=none] (25) at (4, -0.25) {$=$};
		\node [style=state] (26) at (6, -0.5) {$0$};
		\node [style=none] (27) at (6, -0.25) {};
		\node [style=bn] (28) at (6, 0.75) {};
	\end{pgfonlayer}
	\begin{pgfonlayer}{edgelayer}
		\draw (6.center) to (5.center);
		\draw [bend left=45] (6.center) to (7.center);
		\draw [bend right=45] (6.center) to (8.center);
		\draw (10) to (7.center);
		\draw (11) to (8.center);
		\draw (18.center) to (17.center);
		\draw (22) to (19.center);
		\draw (28) to (27.center);
	\end{pgfonlayer}
\end{tikzpicture}
	\end{cdiag*}
	Here the right-hand side is the zero morphism $0 : \ind \to \ind$ (see Remark \ref{rem:diagrammatic-reasoning-semiadditive}).
	By the zero-monic condition, we therefore have
	\begin{cdiag*}
		\begin{tikzpicture}
	\begin{pgfonlayer}{nodelayer}
		\node [style=none] (0) at (0, 0) {$=$};
		\node [style=state] (1) at (-2.5, -1.75) {$\mu$};
		\node [style=none] (2) at (-2.5, -1.5) {};
		\node [style=none] (3) at (-2.5, -0.5) {};
		\node [style=none] (4) at (-3.5, 0.5) {};
		\node [style=none] (5) at (-1.5, 0.5) {};
		\node [style=bn] (6) at (-2.5, -0.5) {};
		\node [style=none] (7) at (-3.5, 2) {};
		\node [style=none] (8) at (-1.5, 2) {};
		\node [style=morphism] (9) at (-3.5, 0.75) {$S$};
		\node [style=none] (13) at (1.5, -0.25) {};
		\node [style=none] (14) at (1.5, 0.5) {};
		\node [style=state] (15) at (2, -0.5) {$0$};
		\node [style=none] (16) at (2.5, -0.25) {};
		\node [style=none] (17) at (2.5, 0.5) {};
		\node [style=none] (18) at (-3.5, 2.5) {$\sndS$};
		\node [style=none] (19) at (-1.5, 2.5) {$\tarS$};
		\node [style=none] (22) at (1.5, 1) {$\sndS$};
		\node [style=none] (23) at (2.5, 1) {$\tarS$};
		\node [style=state] (24) at (6.5, -1.75) {$\mu$};
		\node [style=none] (25) at (6.5, -1.5) {};
		\node [style=none] (26) at (6.5, -0.5) {};
		\node [style=none] (27) at (5.5, 0.5) {};
		\node [style=none] (28) at (7.5, 0.5) {};
		\node [style=bn] (29) at (6.5, -0.5) {};
		\node [style=none] (30) at (5.5, 2) {};
		\node [style=none] (31) at (7.5, 2) {};
		\node [style=morphism] (32) at (5.5, 0.75) {$0$};
		\node [style=none] (33) at (5.5, 2.5) {$\sndS$};
		\node [style=none] (34) at (7.5, 2.5) {$\tarS$};
		\node [style=none] (35) at (4, 0) {$=$};
	\end{pgfonlayer}
	\begin{pgfonlayer}{edgelayer}
		\draw (3.center) to (2.center);
		\draw [bend left=45] (3.center) to (4.center);
		\draw [bend right=45] (3.center) to (5.center);
		\draw (7.center) to (4.center);
		\draw (8.center) to (5.center);
		\draw (14.center) to (13.center);
		\draw (17.center) to (16.center);
		\draw (26.center) to (25.center);
		\draw [bend left=45] (26.center) to (27.center);
		\draw [bend right=45] (26.center) to (28.center);
		\draw (30.center) to (27.center);
		\draw (31.center) to (28.center);
	\end{pgfonlayer}
\end{tikzpicture}
	\end{cdiag*}
	where the second step also uses Remark \ref{rem:diagrammatic-reasoning-semiadditive}.
	Since $\nu$ is absolutely continuous with respect to $\mu$ \cite[Definition 2.2.1]{Fritz2023}, the same equation holds with $\nu$ in place of $\mu$.
	By deleting the second output on both sides, we therefore obtain $\S \circ \nu = 0$.
	In turn, this gives $\S \circ \nu \circ \R = 0$, which shows $\nu \ll \mu$ as required.
\end{proof}

\subsubsection{Proof of Proposition \ref{prop:singular-measures-meet}} \label{sec:appendix-singular-measures-meet}

\begin{proof}
	Suppose each pair of $\P(x, \dif y)$ and $\Q(x, \dif y)$ is singular.
	Then given $x \in \tarS$, there exists a measurable set $S_x \subseteq \sndS$ such that
	\[
		\P(x, S_x) = 0 \qquad\qquad \Q(x, S_x^c) = 0.
	\]
	Now suppose $\R \ll \P$ and $\R \ll \Q$.
	Choose arbitrary $x \in \tarS$ and measurable $A \subseteq \sndS$.
	Then we have
	\[
		\P(x, A \cap S_x) \leq \P(x, S_x) = 0,
	\]
	and so $\R(x, A \cap S_x) = 0$ since $\R \ll \P$.
	A similar argument gives $\Q(x, A \cap S_x^c) = 0$ and hence $\R(x, A \cap S_x^c) = 0$.
	We now have
    \begin{equation} \label{eq:lebesgue-decomposition-ac-meet}
        \R(x, A) = \R(x, A \cap S_x) + \R(x, A \cap S_x^c) = 0 + 0 = 0.
	\end{equation}
	It follows that $\R = 0$, which shows $0$ is a meet of $\P$ and $\Q$, or equivalently $\P \perp \Q$.

	Now suppose that $\P \perp \Q$, and let $x \in \tarS$ be arbitrary.
	Since $\P(x, \dif y)$ and $\Q(x, \dif y)$ are $\sigma$-finite, we may apply the classical Lebesgue decomposition theorem \cite[Section~32]{halmos2013measure} to write
	\[
		\P(x, \dif y) = \nu_\ac + \nu_\si
	\]
	where $\nu_\ac$ and $\nu_\si$ are respectively absolutely continuous and singular with respect to $\Q(x, \dif y)$.
	Since $\tarS$ has measurable singletons, we may define the following kernel:
	\[
		\R(x', \dif y) = \begin{cases}
			\nu_\ac & \text{if $x' = x$} \\
			0 & \text{otherwise.}
		\end{cases}
	\]
	It is clear that $\R \leq \P$ and hence $\R \ll \P$.
	Likewise, since $\nu_\ac$ is absolutely continuous with respect to $\Q(x, \dif y)$, Proposition \ref{prop:ac-sfkern} gives $\R \ll \Q$.
	Since $0$ is a meet of $\P$ and $\Q$, it follows that $\R \ll 0$, and hence $\R = 0$ by Proposition \ref{prop:ac-zero-monic}.
	But this means $\nu_\ac = 0$, and hence $\P(x, \dif y)$ and $\Q(x, \dif y)$ are singular measures as required.
\end{proof}

\subsubsection{Proof of Proposition \ref{prop:lebesgue-decomposition-sfkern}} \label{sec:appendix-lebesgue-decomposition-meet}

\begin{proof}
	Suppose that for each $x \in \tarS$, it holds that $\P(x, \dif y) = \P_\ac(x, \dif y) + \P_\si(x, \dif y)$ is a Lebesgue decomposition of $\Q(x, \dif y)$.
	Proposition \ref{prop:singular-measures-meet} immediately gives $\P_\si \perp \Q$. %
	It remains to show that $\P_\ac$ is a meet of $\P$ and $\Q$.
	Now, by definition of a Lebesgue decomposition and Proposition \ref{prop:ac-sfkern}, we have immediately that $\P_\ac \ll \P$ and $\P_\ac \ll \Q$.
	So given $\R$ satisfying $\R \ll \P$ and $\R \ll \Q$, we must show that $\R \ll \P_\ac$.
	
	For this, let $x \in \tarS$ be arbitrary, and suppose we have measurable $A \subseteq \sndS$ with $\P_\ac(x, A) = 0$.
	As in the proof of Proposition \ref{prop:singular-measures-meet}, there exists some measurable set $S_x \subseteq \sndS$ such that
	\[
		\P_\si(x, S_x) = 0 \qquad\qquad \Q(x, S_x^c) = 0.
	\]
	This gives
	\[
		\P(x, A \cap S_x) = \P_\ac(x, A \cap S_x) \leq \P_\ac(x, A) = 0,
	\]
	and hence $\R(x, A \cap S_x) = 0$ since $\R \ll \P$.
	Similarly, we have
	\[
		\Q(x, A \cap S_x^c) \leq \Q(x, S_x^c) = 0,
	\]
	and hence $\R(x, A \cap S_x^c) = 0$ since $\R \ll \Q$.
	As in \eqref{eq:lebesgue-decomposition-ac-meet}, this gives $\R(x, A) = 0$, which shows that $\R \ll \P_\ac$ as required for the first claim.

	For the second claim, suppose that $\tarS$ has measurable singletons, $\P$ and $\Q$ are pointwise $\sigma$-finite, and $(\P_\ac, \P_\si)$ is a $\Q$-decomposition.
	By Proposition \ref{prop:ac-sfkern}, each $\P_\ac(x, \dif y)$ is absolutely continuous with respect to $\Q(x, \dif y)$.
	Likewise, since $\P_\si \leq \P$ is pointwise $\sigma$-finite, Proposition \ref{prop:singular-measures-meet} gives that each $\P_\si(x, \dif y)$ is singular with respect to $\Q(x, \dif y)$.
	This shows that $\P_\ac(x, \dif y) + \P_\si(x, \dif y)$ is a Lebesgue decomposition of $\P(x, \dif y)$ with respect to $\Q(x, \dif y)$ as required.
\end{proof}

\section{Additional examples from computational statistics}

We include here two examples of other techniques from computational statistics that are easily formulated in terms of categorical probability.
We hope these examples will spark further interest in this topic as a source of applications for categorical probability.

\subsection{Gibbs sampling}\label{subsec:Gibbs}

We formulate here the \emph{Gibbs sampler} \cite{Geman1984} in terms of Markov categories.
This requires the notion of \textit{conditionals}: recall that a morphism $\P : \tarS \to \sndS$ is a \emph{conditional} of $\mu : \ind \to \tarS \otimes \sndS$ if
\begin{cdiag} \label{eq:conditional_def}
\begin{tikzpicture}
	\begin{pgfonlayer}{nodelayer}
		\node [style=state] (0) at (-3, -0.5) {$\mu$};
		\node [style=none] (1) at (-3.5, -0.25) {};
		\node [style=none] (2) at (-3.5, 0.75) {};
		\node [style=none] (3) at (-2.5, 0.75) {};
		\node [style=none] (4) at (-2.5, -0.25) {};
		\node [style=none] (5) at (-3.5, 1.25) {$\mathcal X$};
		\node [style=none] (6) at (-2.5, 1.25) {$\mathcal Y$};
		\node [style=none] (7) at (0, 0) {$=$};
		\node [style=state] (8) at (4, -1.75) {$\mu$};
		\node [style=none] (9) at (3.5, -1.5) {};
		\node [style=none] (10) at (3.5, -0.25) {};
		\node [style=bn] (11) at (4.5, -0.75) {};
		\node [style=none] (12) at (4.5, -1.5) {};
		\node [style=none] (13) at (2.5, 2.75) {$\mathcal X$};
		\node [style=none] (15) at (2.5, 0.75) {};
		\node [style=none] (16) at (4.5, 2.25) {};
		\node [style=none] (17) at (4.5, 2.75) {$\mathcal Y$};
		\node [style=none] (18) at (4.5, 0.75) {};
		\node [style=none] (19) at (2.5, 2.25) {};
		\node [style=morphism] (20) at (4.5, 1.25) {$\P$};
		\node [style=bn] (21) at (3.5, -0.25) {};
	\end{pgfonlayer}
	\begin{pgfonlayer}{edgelayer}
		\draw (2.center) to (1.center);
		\draw (4.center) to (3.center);
		\draw (10.center) to (9.center);
		\draw (12.center) to (11);
		\draw [bend left=45] (10.center) to (15.center);
		\draw [bend right=45] (10.center) to (18.center);
		\draw (18.center) to (16.center);
		\draw (19.center) to (15.center);
	\end{pgfonlayer}
\end{tikzpicture}
\end{cdiag}
Conditionals for morphisms with more outputs are defined analogously: for example, $\Q : \tarS \otimes \auxS \to \sndS$ is a conditional of $\nu : \ind \to \tarS \otimes \sndS \otimes \auxS$ if we have
\begin{cdiag} \label{eq:conditional-example}
\begin{tikzpicture}
	\begin{pgfonlayer}{nodelayer}
		\node [style=none] (0) at (0, 0) {$=$};
		\node [style=long state] (1) at (-3.5, -0.5) {$\nu$};
		\node [style=none] (2) at (-4.5, -0.25) {};
		\node [style=none] (3) at (-3.5, -0.25) {};
		\node [style=none] (4) at (-2.5, -0.25) {};
		\node [style=none] (5) at (-4.5, 0.5) {};
		\node [style=none] (6) at (-3.5, 0.5) {};
		\node [style=none] (7) at (-2.5, 0.5) {};
		\node [style=none] (8) at (-4.5, 1) {$\mathcal{X}$};
		\node [style=none] (9) at (-3.5, 1) {$\mathcal{Y}$};
		\node [style=none] (10) at (-2.5, 1) {$\mathcal{Z}$};
		\node [style=long state] (11) at (3.75, -1.5) {$\nu$};
		\node [style=none] (12) at (2.75, -1.25) {};
		\node [style=none] (13) at (3.75, -1.25) {};
		\node [style=none] (14) at (4.75, -1.25) {};
		\node [style=bn] (15) at (3.75, -0.5) {};
		\node [style=none] (16) at (2.75, 0) {};
		\node [style=none] (17) at (4.75, 0) {};
		\node [style=none] (18) at (3.25, 0.5) {};
		\node [style=none] (19) at (2.25, 0.5) {};
		\node [style=none] (20) at (4.25, 0.5) {};
		\node [style=none] (21) at (5.25, 0.5) {};
		\node [style=medium box] (22) at (3.75, 1) {$\Q$};
		\node [style=none] (23) at (2.25, 2) {};
		\node [style=none] (24) at (5.25, 2) {};
		\node [style=none] (25) at (3.25, 0.75) {};
		\node [style=none] (26) at (4.25, 0.75) {};
		\node [style=bn] (27) at (2.75, 0) {};
		\node [style=bn] (28) at (4.75, 0) {};
		\node [style=none] (29) at (3.75, 1.25) {};
		\node [style=none] (30) at (3.75, 2) {};
		\node [style=none] (31) at (2.25, 2.5) {$\mathcal{X}$};
		\node [style=none] (32) at (5.25, 2.5) {$\mathcal{Z}$};
		\node [style=none] (33) at (3.75, 2.5) {$\mathcal{Y}$};
	\end{pgfonlayer}
	\begin{pgfonlayer}{edgelayer}
		\draw (7.center) to (4.center);
		\draw (3.center) to (6.center);
		\draw (2.center) to (5.center);
		\draw (13.center) to (15);
		\draw (16.center) to (12.center);
		\draw (14.center) to (17.center);
		\draw [bend right=45] (16.center) to (18.center);
		\draw [bend right=45] (20.center) to (17.center);
		\draw [bend left=45] (16.center) to (19.center);
		\draw [bend right=45] (17.center) to (21.center);
		\draw (19.center) to (23.center);
		\draw (24.center) to (21.center);
		\draw (26.center) to (20.center);
		\draw (18.center) to (25.center);
		\draw (29.center) to (30.center);
	\end{pgfonlayer}
\end{tikzpicture}
\end{cdiag}
Note that there is some potential for ambiguity when output wires are repeated, e.g.\ when $\tarS = \sndS = \auxS$, but this will not be an issue in what follows.

\begin{example}\label{exa:conditionals}
		In $\stoch$, the defining property of conditionals \eqref{eq:conditional_def} says that the following holds:
	\[
	\mu(A \times B) = \int_A \mu(\dif x, \sndS) \, \P(x, B).
	\]
	In other words, $\P$ is a \emph{disintegration} of $\mu$ given $\tarS$ in the classical sense (e.g. \cite[Theorem~3.4]{Kallenberg2021}).
	In $\stoch$, conditionals do not always exist.
	However, they do exist in the Markov category $\borelstoch$, which is the subcategory of $\stoch$ consisting of \emph{standard Borel spaces} and all Markov kernels between these. The Markov category structure of $\borelstoch$ is inherited directly from $\stoch$ (see \cite[Section~3]{Cho2019} for more details).
\end{example}

The \emph{systematic-scan Gibbs sampler} is a classical MCMC procedure for sampling from targets defined on a product space \cite{Geman1984}.
In this approach, each coordinate is updated in turn by sampling from its conditional distribution given the other coordinates.
It is a classical result that this leaves the target invariant.
While this is straightforward to show in simple settings (e.g.\ finite state spaces or densities), the general measure-theoretic result is much more technical (see e.g.\ \cite[Theorem~10.6]{Robert2004}). 
In this section, we give a very simple but fully general proof in the abstract setting of Markov categories instead.

Given objects $\tarS_1, \ldots, \tarS_n$ in a Markov category $\C$, we will use ``slice'' notation to denote their finite monoidal products:
\[
\tarS_{i:j} := \tarS_i \otimes \cdots \otimes \tarS_j,
\]
with the convention that $\tarS_{i:j} \coloneqq \ind$ when the slice is empty (i.e.\ when $i > j$).
Now suppose that all conditionals exist in $\C$, and we are given a distribution $\mu : \ind \to \tarS_{1:n}$ on the full joint space.
For each $i=1,\dots, n$, define the morphism $\P_i$ as follows:
\begin{cdiag*}
\begin{tikzpicture}
	\begin{pgfonlayer}{nodelayer}
		\node [style=morphism] (0) at (-3, 0) {$\P_i$};
		\node [style=none] (1) at (-3, -0.25) {};
		\node [style=none] (2) at (-3, 0.25) {};
		\node [style=none] (3) at (-3, 1.25) {};
		\node [style=none] (4) at (-3, -1.25) {};
		\node [style=none] (5) at (-3, -2) {$\mathcal{X}_{1:n}$};
		\node [style=none] (6) at (-3, 2) {$\mathcal{X}_{1:n}$};
		\node [style=none] (7) at (0, 0) {$\coloneqq$};
		\node [style=none] (9) at (3.25, -1.5) {};
		\node [style=none] (11) at (5.25, -1.5) {};
		\node [style=none] (13) at (3.25, -0.5) {};
		\node [style=none] (14) at (5.25, -0.5) {};
		\node [style=none] (15) at (3.75, 0) {};
		\node [style=none] (16) at (2.75, 0) {};
		\node [style=none] (17) at (4.75, 0) {};
		\node [style=none] (18) at (5.75, 0) {};
		\node [style=medium box] (19) at (4.25, 0.5) {$\mu_{i|-i}$};
		\node [style=none] (20) at (2.75, 1.5) {};
		\node [style=none] (21) at (5.75, 1.5) {};
		\node [style=none] (22) at (3.75, 0.25) {};
		\node [style=none] (23) at (4.75, 0.25) {};
		\node [style=bn] (24) at (3.25, -0.5) {};
		\node [style=bn] (25) at (5.25, -0.5) {};
		\node [style=none] (26) at (4.25, 0.75) {};
		\node [style=none] (27) at (4.25, 1.5) {};
		\node [style=none] (28) at (2.5, 2) {$\mathcal{X}_{1:i-1}$};
		\node [style=none] (29) at (6.25, 2) {$\mathcal{X}_{i+1:n}$};
		\node [style=none] (30) at (4.25, 2) {$\mathcal{X}_i$};
		\node [style=none] (31) at (2.5, -2) {$\mathcal{X}_{1:i-1}$};
		\node [style=none] (32) at (4.25, -1.5) {};
		\node [style=none] (33) at (4.25, -1) {};
		\node [style=bn] (34) at (4.25, -1) {};
		\node [style=none] (35) at (4.25, -2) {$\mathcal{X}_i$};
		\node [style=none] (36) at (6.25, -2) {$\mathcal{X}_{i+1:n}$};
	\end{pgfonlayer}
	\begin{pgfonlayer}{edgelayer}
		\draw (3.center) to (2.center);
		\draw (4.center) to (1.center);
		\draw (13.center) to (9.center);
		\draw (11.center) to (14.center);
		\draw [bend right=45] (13.center) to (15.center);
		\draw [bend right=45] (17.center) to (14.center);
		\draw [bend left=45] (13.center) to (16.center);
		\draw [bend right=45] (14.center) to (18.center);
		\draw (16.center) to (20.center);
		\draw (21.center) to (18.center);
		\draw (23.center) to (17.center);
		\draw (15.center) to (22.center);
		\draw (26.center) to (27.center);
		\draw (34) to (32.center);
	\end{pgfonlayer}
\end{tikzpicture}
\end{cdiag*}
where here $\mu_{i|-i}$ denotes a conditional of $\mu$.
Notice that $P_i$ updates the $i$-th coordinate using this conditional while leaving the other coordinates unchanged.
We then define $\P_{\gibbs}: \tarS_{1:n} \to \tarS_{1:n}$ as the $n$-fold composition that simply applies each $\P_i$ in turn:
$$
\P_{\gibbs} := \P_n \circ \cdots \circ \P_1.
$$
If we take our ambient Markov category to be $\borelstoch$ (which admits all conditionals, Example~\ref{exa:conditionals}), this recovers the usual systematic-scan Gibbs sampler exactly.
The following straightforward result then shows it is invariant.

\begin{proposition}
	In any Markov category $\C$ with conditionals, $\P_{\gibbs}$ as defined above is $\mu$-invariant.
\end{proposition}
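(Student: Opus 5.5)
By the first claim of Proposition~\ref{prop:reversibility-implies-invariance}, compositions of $\mu$-invariant morphisms are $\mu$-invariant. Since $\P_{\gibbs} = \P_n \circ \cdots \circ \P_1$, it therefore suffices to show that each individual $\P_i$ is $\mu$-invariant, i.e.\ $\P_i \circ \mu = \mu$. The argument then follows by induction on $n$, or equivalently by repeated application of that proposition.

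To show $\P_i \circ \mu = \mu$, fix $i$ and regard $\mu$ as a morphism $\ind \to \tarS_{1:i-1} \otimes \tarS_i \otimes \tarS_{i+1:n}$. We then read $\P_i \circ \mu$ as a string diagram in three steps.
\begin{enumerate}
	\item Draw $\mu$ with its three output wires. The $\tarS_{1:i-1}$ and $\tarS_{i+1:n}$ wires are each copied, and the $\tarS_i$ wire is discarded.
	\item Feed one copy of each outer wire into $\mu_{i|-i}$, whose output becomes the new $\tarS_i$ wire. The other copies pass through unchanged.
	\item Recognise that the subdiagram consisting of $\mu$ with its middle output discarded, followed by copying the outer wires and applying $\mu_{i|-i}$, is exactly the right-hand side of the defining equation~\eqref{eq:conditional-example} for a conditional. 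Here the roles of $\tarS, \sndS, \auxS$ are played by $\tarS_{1:i-1}, \tarS_i, \tarS_{i+1:n}$.
\end{enumerate}
By that equation, this subdiagram equals $\mu$ itself, and so $\P_i \circ \mu = \mu$.

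The only real care needed is bookkeeping. First, the wire ordering in the diagram for $\P_i$ must match that in~\eqref{eq:conditional-example}: the conditioned wire sits in the middle, and the outputs are reassembled in the order $\tarS_{1:i-1}, \tarS_i, \tarS_{i+1:n}$. Second, the copy of each outer wire should be taken before or after marginalisation consistently; coassociativity and the commutativity of copy make the diagram match exactly. Third, the edge cases $i = 1$ and $i = n$ need checking. There the slice $\tarS_{1:0}$ or $\tarS_{n+1:n}$ is the unit $\ind$, its copy map is trivial by~\eqref{eq:del_cop_id}, and the conditional reduces to the two-factor form~\eqref{eq:conditional_def}, possibly after a swap. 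This wire-matching is the main, though minor, obstacle. No normalisation beyond the Markov category axioms is used, since the proof relies only on the conditional equation and the comonoid laws.
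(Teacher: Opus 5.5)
Your proposal is correct and follows the paper's proof. You reduce to $\mu$-invariance of each $P_i$ via the first part of Proposition~\ref{prop:reversibility-implies-invariance}, and you obtain $P_i \circ \mu = \mu$ by recognising $P_i \circ \mu$ as exactly the right-hand side of~\eqref{eq:conditional-example}, with $\mathcal{X}_{1:i-1}, \mathcal{X}_i, \mathcal{X}_{i+1:n}$ in the three slots; your extra bookkeeping on wire order and the edge cases $i=1, n$ is harmless, and the paper leaves it implicit.
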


\begin{proof}
	Since each $\mu_{i|-i}$ is a conditional of $\mu$, we have by definition that
	\begin{cdiag*}
	\begin{tikzpicture}
	\begin{pgfonlayer}{nodelayer}
		\node [style=state] (10) at (9, -0.5) {$\mu$};
		\node [style=none] (11) at (9, 0.5) {};
		\node [style=none] (12) at (0, 0) {$=$};
		\node [style=none] (13) at (-2, 1.5) {$\mathcal X_{1:n}$};
		\node [style=state] (17) at (-2, -1.5) {$\mu$};
		\node [style=morphism] (18) at (-2, 0) {$P_i$};
		\node [style=none] (19) at (9, 1) {$\mathcal X_{1:n}$};
		\node [style=none] (20) at (7, 0) {$=$};
		\node [style=none] (25) at (-2, 1) {};
		\node [style=none] (26) at (2.5, -1.75) {};
		\node [style=none] (27) at (4.5, -1.75) {};
		\node [style=none] (28) at (2.5, -0.75) {};
		\node [style=none] (29) at (4.5, -0.75) {};
		\node [style=none] (30) at (3, -0.25) {};
		\node [style=none] (31) at (2, -0.25) {};
		\node [style=none] (32) at (4, -0.25) {};
		\node [style=none] (33) at (5, -0.25) {};
		\node [style=medium box] (34) at (3.5, 0.25) {$\mu_{i|-i}$};
		\node [style=none] (35) at (2, 1.25) {};
		\node [style=none] (36) at (5, 1.25) {};
		\node [style=none] (37) at (3, 0) {};
		\node [style=none] (38) at (4, 0) {};
		\node [style=bn] (39) at (2.5, -0.75) {};
		\node [style=bn] (40) at (4.5, -0.75) {};
		\node [style=none] (41) at (3.5, 0.5) {};
		\node [style=none] (42) at (3.5, 1.25) {};
		\node [style=none] (43) at (1.75, 1.75) {$\mathcal{X}_{1:i-1}$};
		\node [style=none] (44) at (5.5, 1.75) {$\mathcal{X}_{i+1:n}$};
		\node [style=none] (45) at (3.5, 1.75) {$\mathcal{X}_i$};
		\node [style=none] (47) at (3.5, -1.75) {};
		\node [style=none] (48) at (3.5, -1.25) {};
		\node [style=bn] (49) at (3.5, -1.25) {};
		\node [style=long state] (50) at (3.5, -2) {$\mu$};
	\end{pgfonlayer}
	\begin{pgfonlayer}{edgelayer}
		\draw (10) to (11.center);
		\draw (17) to (18);
		\draw (25.center) to (18);
		\draw (28.center) to (26.center);
		\draw (27.center) to (29.center);
		\draw [bend right=45] (28.center) to (30.center);
		\draw [bend right=45] (32.center) to (29.center);
		\draw [bend left=45] (28.center) to (31.center);
		\draw [bend right=45] (29.center) to (33.center);
		\draw (31.center) to (35.center);
		\draw (36.center) to (33.center);
		\draw (38.center) to (32.center);
		\draw (30.center) to (37.center);
		\draw (41.center) to (42.center);
		\draw (49) to (47.center);
	\end{pgfonlayer}
\end{tikzpicture}
	\end{cdiag*}
	(see e.g.\ the example in \eqref{eq:conditional-example}).
	This shows each $\P_i$ is $\mu$-invariant, and so their composition $\P_{\gibbs}$ is also $\mu$-invariant by Proposition~\ref{prop:reversibility-implies-invariance}.
\end{proof}

\subsection{Importance sampling}

Radon--Nikodym derivatives give a way to talk about \emph{importance sampling} in CD categories.
Let $\f : \tarS \to [0, \infty]$ be a measurable function, and hence an effect $\tarS \to \ind$ in $\sfkern$ (Example~\ref{exa:effects-sfkern}), and suppose $\rx = \frac{\dif \pi}{\dif \mu}$ is a Radon--Nikodym derivative (Definition \ref{def:radon-nikodym}).
Postcomposing both sides of \eqref{diag:IS_basic} by $f$ yields
\begin{cdiag} \label{diag:importance-sampling}
	\begin{tikzpicture}
	\begin{pgfonlayer}{nodelayer}
		\node [style=none] (0) at (0, 0) {$=$};
		\node [style=none] (1) at (3, 0) {};
		\node [style=none] (2) at (2, 1) {};
		\node [style=none] (3) at (4, 1) {};
		\node [style=none] (4) at (3, -1) {};
		\node [style=bn] (5) at (3, 0) {};
		\node [style=state] (6) at (-2.5, -0.75) {$\pi$};
		\node [style=none] (7) at (-2.5, -0.5) {};
		\node [style=none] (8) at (-2.5, 0.5) {};
		\node [style=likelihood] (9) at (-2.5, 0.75) {$f$};
		\node [style=likelihood] (10) at (2, 1.25) {$f$};
		\node [style=likelihood] (11) at (4, 1.25) {$\rx$};
		\node [style=state] (12) at (3, -1.25) {$\mu$};
	\end{pgfonlayer}
	\begin{pgfonlayer}{edgelayer}
		\draw [bend left=45] (3.center) to (1.center);
		\draw [bend left=45] (1.center) to (2.center);
		\draw (1.center) to (4.center);
		\draw (8.center) to (7.center);
	\end{pgfonlayer}
\end{tikzpicture}
\end{cdiag}
Notice that the effect $f \ast \rx$ appears on the right-hand side (see Remark \ref{rem:effect-multiplication}), so this may also be written suggestively as
\[
	f \circ \pi = \left(f \ast \frac{\dif \pi}{\dif \mu}\right) \circ \mu.
\]
As explained in Example \ref{exa:effects-sfkern}, in the context of $\sfkern$ this translates to
\[
	\int \f(x) \, \pi(\dif x) = \int \f(x) \, \frac{\dif \pi}{\dif \mu}(x) \, \mu(\dif x),
\]
which recovers the standard formula for \emph{importance sampling} \cite[Section~3.3]{Robert2004}.
More generally, \eqref{diag:importance-sampling} may be regarded as an abstract version of importance sampling that makes sense in any CD category.

\end{document}